\renewcommand{\geq}{\geqslant}
\renewcommand{\ngeq}{\ngeqslant}
\renewcommand{\leq}{\leqslant}
\renewcommand{\nleq}{\nleqslant}
\crefname{page}{p.}{pp.}
\crefname{equation}{equation}{equations}
\crefname{section}{section}{sections}
\crefname{subsection}{section}{sections}
\crefname{subsubsection}{section}{sections}
\crefname{appsec}{appendix}{appendices}
\crefname{supplsec}{supplemental appendix}{supplemental appendices}
\crefname{footnote}{footnote}{footnotes}
\crefname{figure}{figure}{figures}
\crefname{table}{table}{tables}
\crefname{theorem}{theorem}{theorems}
\crefname{proposition}{proposition}{propositions}
\crefname{lemma}{lemma}{lemmata}
\crefname{corollary}{corollary}{corollaries}
\crefname{remark}{remark}{remarks}
\crefname{observation}{observation}{observations}
\crefname{example}{example}{examples}
\crefname{fact}{fact}{facts}
\crefname{definition}{definition}{definitions}
\crefname{assumption}{assumption}{assumptions}
\crefname{exercise}{exercise}{exercises}
\crefname{notation}{notation}{notation}
\crefname{claim}{claim}{claims}
\crefname{conjecture}{conjecture}{conjectures}
\newtheorem{theorem}{\textbf{Theorem}}
\newtheorem{claim}{\textbf{Claim}}
\newtheorem{corollary}{\textbf{Corollary}}
\newtheorem{example}{\textbf{Example}}
\newtheorem{lemma}{\textbf{Lemma}}
\newtheorem{proposition}{\textbf{Proposition}}
\theoremstyle{definition}
\newtheorem{remark}{\textbf{Remark}}
\newtheoremstyle{named}
    {\topsep}                   % ABOVESPACE
    {\topsep}                   % BELOWSPACE
    {\itshape}                  % BODYFONT
    {0pt}                       % INDENT (empty value is the same as 0pt)
    {\bfseries}                 % HEADFONT
    {}                          % HEADPUNCT
    {5pt plus 1pt minus 1pt}    % HEADSPACE
    {\thmnote{#3}}              % CUSTOM-HEAD-SPEC
\theoremstyle{named}
\newtheorem{namedthm}{}
\renewcommand\qedsymbol{\bfseries QED}
\DeclareFixedFootnote{\fnsuchan}{Such an $\bar x$ must exist, provided the argmax is nonempty (refer to \cref{footnote:LC_frictionless}).}
\DeclareFixedFootnote{\fnsuchmult}{Such $\underline x$ and $\bar x$ must exist, provided the argmaxes are nonempty (refer to \cref{footnote:LC_frictionless}).}
\newcommand{\R}{\mathbb R}
\newcommand{\N}{\mathbb N}
\newcommand{\Z}{\mathbb Z}
\newcommand{\E}{\mathbb E}
\DeclareMathOperator*{\argmax}{arg\,max}
\title{\vspace{-0.2cm}\protect\linespread{1}\protect\selectfont Comparative statics with adjustment costs\\ and the Le Chatelier principle%
\thanks{We are grateful for comments from Gregorio Curello, Christian Ewerhardt, Alkis Georgiadis-Harris, Alex Kohlhas, Bart Lipman, Hamish Low, Konstantin Milbradt, Paul Milgrom, Sara Neff, Marek Pycia, Simon Quinn, Ronny Razin, Karthik Sastry, Jakub Steiner, Quitzé Valenzuela-Stookey, Nathaniel Ver Steeg, Jan Žemlička, Mu Zhang, three anonymous referees, and audiences at Berlin, Bonn, Bristol, Caltech, CERGE-EI, Carlos III Madrid, Collegio Carlo Alberto, Columbia, Edinburgh, LSE, Manchester, National University of Singapore, Oxford, Paris--Saclay, Purdue, Queen Mary, Singapore Management University, Stony Brook, Tsinghua, UCLA, Western, Yale, Zürich, and several conferences. Dekel acknowledges financial support from the National Science Foundation and from the Foerder Institute at Tel Aviv University.}%
}
\author{Eddie Dekel\thanks{Department of Economics, Northwestern University and Department of Economics, Tel Aviv University. Email address: $<$\href{mailto:eddiedekel@gmail.com}{eddiedekel@gmail.com}$>$.}
\and John K.-H. Quah\thanks{Department of Economics, National University of Singapore. Email address: $<$\href{mailto:ecsqkhj@nus.edu.sg}{ecsqkhj@nus.edu.sg}$>$.}
\and Ludvig Sinander\thanks{Department of Economics and Nuffield College, University of Oxford. Email address: $<$\href{mailto:ludvig.sinander@economics.ox.ac.uk}{ludvig.sinander@economics.ox.ac.uk}$>$.}}
\date{20 October 2024}
\begin{document}

\maketitle

{\vspace{-0.6cm}\singlespacing
\begin{quote}
\noindent \textbf{Abstract:}\,
We develop a theory of monotone comparative statics for models with adjustment costs. We show that comparative-statics conclusions may be drawn under the usual ordinal complementarity assumptions on the objective function, assuming very little about costs: only a mild monotonicity condition is required. We use this insight to prove a general Le Chatelier principle: under the ordinal complementarity assumptions, if short-run adjustment is subject to a monotone cost, then the long-run response to a shock is greater than the short-run response. We extend these results to a fully dynamic model of adjustment over time: the Le Chatelier principle remains valid, and under slightly stronger assumptions, optimal adjustment follows a monotone path. We apply our results to models of saving, production, pricing, labor supply and investment.

\vspace{0.1cm}

\noindent \textbf{Keywords:}\, adjustment costs, comparative statics, Le Chatelier.

\noindent \textbf{\emph{JEL} codes:}\, C6, D01, D2, D4, D9, E2, G11, J2, O16.
\end{quote}
}

%%%%%%%%%%%%%%%%%%%%%%%%%%%%%%%%%%%
%%%%%%%%%%%%%%%%%%%%%%%%%%%%%%%%%%%
\section{Introduction}
\label{sec:intro}
%%%%%%%%%%%%%%%%%%%%%%%%%%%%%%%%%%%
%%%%%%%%%%%%%%%%%%%%%%%%%%%%%%%%%%%

Adjustment costs play a major role in explaining a wide range of economic phenomena. Examples include the investment behavior of firms,%
\footnote{E.g. \cite{Jorgenson1963,Hayashi1982}; Cooper \& Haltiwanger, \citeyear{CooperHaltiwanger2006}.}
price stickiness,%
\footnote{E.g. \cite{Mankiw1985}; Caplin \& Spulber, \citeyear{CaplinSpulber1987}; Golosov \& Lucas, \citeyear{GolosovLucas2007}; \cite{Midrigan2011}.}
trade in illiquid financial markets,%
\footnote{E.g. \cite{Kyle1985,Back1992}.}
aggregate consumption dynamics,%
\footnote{E.g. Kaplan \& Violante, \citeyear{KaplanViolante2014}; Berger \& Vavra, \citeyear{BergerVavra2015}; Chetty \& Szeidl, \citeyear{ChettySzeidl2016}.}
labor supply,%
\footnote{E.g. Chetty, Friedman, Olsen, \& Pistaferri, \citeyear{ChettyEtal2011}; \cite{Chetty2012}.}
labor demand,%
\footnote{E.g. \cite{Hamermesh1988}; Bentolila \& Bertola, \citeyear{BentolilaBertola1990}.}
and housing consumption and asset pricing.%
\footnote{Grossman \& Laroque, \citeyear{GrossmanLaroque1990}.}

In this paper, we develop a theory of monotone comparative statics with adjustment costs. Our fundamental insight is that very little needs to be assumed about the cost function: comparative statics requires only that \emph{not} adjusting be cheaper than adjusting, plus the usual ordinal complementarity assumptions on the objective function. We use this insight to show that Samuelson's (\citeyear{Samuelson1947}) \emph{Le Chatelier principle} is far more general than previously claimed: it holds whenever adjustment is costly, given only minimal structure on costs. We extend our comparative-statics and Le Chatelier results to a fully dynamic model of adjustment.

We apply our results to models of factor demand, capital investment, pricing, labor supply, and saving by wishful thinkers. These models are typically studied only under strong functional-form assumptions, and the cases of convex and nonconvex costs are considered separately and handled very differently. Our general results yield robust comparative statics for these standard models, dispensing with auxiliary assumptions and handling convex and nonconvex costs in a unified fashion.

The abstract setting is as follows. An agent chooses an action $x$ from a sublattice $L \subseteq \R^n$. Her objective $F(x,\theta)$ depends on a parameter $\theta$. At the initial parameter $\underline \theta$, the agent chose $\underline x \in \argmax_{x \in L} F(x,\underline \theta)$. The parameter now increases to $\bar \theta \geq \underline \theta$, and the agent may adjust her choice. Adjusting the action by $\varepsilon = x - \underline x$ costs $C(\varepsilon) \geq 0$, and the agent's new choice maximizes $G(x,\bar \theta) = F(x,\bar \theta) - C(x-\underline x)$.

Our only assumption on the cost function $C$ is \emph{monotonicity:}
\begin{equation*}
C(\varepsilon_1,\dots,\varepsilon_{i-1},\varepsilon_i',\varepsilon_{i+1},\dots,\varepsilon_n) \leq C(\varepsilon)
\quad \text{whenever $0 \leq \varepsilon'_i \leq \varepsilon_i$ or $0 \geq \varepsilon'_i \geq \varepsilon_i$.}
\end{equation*}
This means that cost falls whenever an adjustment vector $\varepsilon$ is modified by shifting one of its entries closer to zero (``no adjustment''). An additively separable cost function $C(\varepsilon) = \sum_{i=1}^n C_i(\varepsilon_i)$ is monotone if and only if each dimension's cost function $C_i$ is single-dipped and minimized at zero.

We allow some adjustments $\varepsilon$ to be infeasible, as captured by a prohibitive cost $C(\varepsilon)=\infty$.
In some of our results, the monotonicity assumption may be weakened to \emph{minimal monotonicity:} cost falls whenever an adjustment vector $\varepsilon$ is modified by replacing all of its positive entries with zero ($C( \varepsilon \wedge 0 ) \leq C(\varepsilon)$), and similarly for the negative entries ($C( \varepsilon \vee 0 ) \leq C(\varepsilon)$). In the additively separable case, this means that each dimension's cost $C_i$ is minimized at zero.

We eschew restrictive \emph{curvature} assumptions on costs, such as convexity; with multidimensional actions, even quasiconvexity is not needed. As we show, it is monotonicity-type properties, not convexity-type properties, which govern the direction of adjustment in response to a shock. (What curvature properties govern is the \emph{speed} of adjustment---an important but distinct question.)

Our basic question is under what assumptions on the objective $F$ and cost $C$ the agent's choice increases, in the sense that $\widehat x \geq \underline x$ for some $\widehat x \in \argmax_{x \in L} G(x,\bar \theta)$ (provided the argmax is not empty; such qualifiers are omitted throughout this introduction). Our fundamental result, \Cref{theorem:basic}, answers this question: nothing need be assumed about the cost $C$ except minimal monotonicity, while $F$ need only satisfy the ordinal complementarity conditions of \emph{quasi-supermodularity} and \emph{single-crossing differences} that feature in similar comparative-statics results absent adjustment costs \parencite[see][]{MilgromShannon1994}. Thus costs need not even be monotone, and the objective need not satisfy any cardinal properties, such as supermodularity or increasing differences. We also provide a generalization (\hyperref[theorem:basic-constraint]{\Cref*{theorem:basic}$^*$}) allowing for shifts of both the constraint set $L$ and of the parameter $\theta$, and we give a ``$\forall$'' variant (\Cref{proposition:basic-strict}) showing that adding either of two mild assumptions yields the stronger conclusion that $\widehat x \geq \underline x$ for \emph{every} $\widehat x \in \argmax_{x \in L} G(x,\bar \theta)$.

We use our fundamental result to re-think Samuelson's (\citeyear{Samuelson1947}) \emph{Le Chatelier principle,} which asserts that the response to a parameter shift is greater at longer horizons. Our \Cref{theorem:lechatelier} provides that the Le Chatelier principle holds whenever short-run adjustment is subject to a monotone adjustment cost $C$, long-run adjustment is frictionless, and the objective $F$ satisfies the ordinal complementarity conditions. Formally, the theorem states that under these assumptions, given any long-run choice $\bar x \in \argmax_{x \in L} F(x,\bar \theta)$ satisfying $\bar x \geq \underline x$, we have $\bar x \geq \widehat x \geq \underline x$ for some optimal short-run choice $\widehat x \in \argmax_{x \in L} G(x,\bar \theta)$.%
\footnote{Furthermore, if $\bar x$ is the largest element of $\argmax_{x \in L} F(x,\underline x)$, then $\bar x \geq \widehat x$ for \emph{any} short-run choice $\widehat x \in \argmax_{x \in L} G(x,\bar \theta)$.}
This substantially generalizes Milgrom and Roberts's (\citeyear{MilgromRoberts1996}) Le Chatelier principle, in which short-run adjustment is assumed to be impossible for some dimensions $i$ and costless for the rest: that is, $C(\varepsilon) = \sum_{i=1}^n C_i(\varepsilon_i)$, where some dimensions $i$ have $C_i(\varepsilon_i) = \infty$ for all $\varepsilon_i \neq 0$, and the rest have $C_i \equiv 0$. We show that our Le Chatelier principle remains valid if long-run adjustment is also costly (\Cref{proposition:lechatelier-medium}), and we identify two weak assumptions under either of which $\bar x \geq \widehat x \geq \underline x$ holds for \emph{every} $\widehat x \in \argmax_{x \in L} G(x,\bar \theta)$ (\Cref{proposition:lechatelier-strict}).

We then extend our comparative-statics and Le Chatelier theorems to a fully dynamic, forward-looking model of costly adjustment over time. The parameter $\theta_t$ evolves over time $t \in \{1,2,3,\dots\}$, and the adjustment cost function $C_t$ may also vary between periods. Starting at $x_0 = \underline x \in \argmax_{x \in L} F(x,\underline \theta)$, the agent chooses a path $(x_t)_{t=1}^\infty$ to maximize the discounted sum of her period payoffs $F(x_t,\theta_t) - C_t(x_t - x_{t-1})$. \Cref{theorem:lechatelier-dynamic} validates the Le Chatelier principle: under the same assumptions (ordinal complementarity of $F$ and monotonicity of each $C_t$), if $\underline \theta \leq \theta_t \leq \bar \theta$ in every period $t$, then given any $\bar x \in \argmax_{x \in L} F(x,\bar \theta)$ such that $\underline x \leq \bar x$, the agent's choices satisfy $\underline x \leq x_t \leq \bar x$ along some optimal path $(x_t)_{t=1}^\infty$. If the parameter and cost are time-invariant ($\theta_t = \bar \theta$ and $C_t = C$ for all periods $t$), then a stronger Le Chatelier principle holds (\Cref{theorem:lechatelier-dynamic-strong}): under additional assumptions, the agent adjusts more at longer horizons, in the sense that $\underline x \leq x_t \leq x_T \leq \bar x$ holds at any dates $t<T$ along some optimal path $(x_t)_{t=1}^\infty$.

The Le Chatelier principle remains valid if decisions are instead made by a sequence of short-lived agents (\Cref{theorem:lechatelier-dynamic-myopic}): under the same assumptions as in \Cref{theorem:lechatelier-dynamic}, if $\underline \theta \leq \theta_t \mathrel{(\leq \theta_{t+1}) \leq} \bar \theta$ in every period $t$, then $\underline x \leq \widetilde x_t \mathrel{(\leq \widetilde x_{t+1}) \leq} \bar x$ in every period $t$ along some equilibrium path $(\widetilde x_t)_{t=1}^\infty$. Thus short-lived agents adjust in the same direction as a long-lived agent would. They may do so more sluggishly, however: \Cref{theorem:myopic-vs-fwd} asserts that under stronger assumptions, $\underline x \leq \widetilde x_t \leq x_t \leq \bar x$ holds along some short-lived equilibrium path $(\widetilde x_t)_{t=1}^\infty$ and some long-lived optimal path $(x_t)_{t=1}^\infty$.

Several of our main results admit converses, which assert that monotonicity-type assumptions on adjustment costs are \emph{necessary} (as well as sufficient) for drawing comparative-statics conclusions. In particular, the cost assumptions in \Cref{theorem:basic,theorem:lechatelier-dynamic} are necessary as well as sufficient, while a condition slightly weaker than monotonicity is necessary and sufficient for \Cref{theorem:lechatelier}.

The rest of this paper is arranged as follows. In the next section, we describe the environment. We present our fundamental comparative-statics insight (\Cref{theorem:basic}) in \cref{sec:mcs}, and apply it to saving. In \cref{sec:lechatelier}, we develop a general Le Chatelier principle (\Cref{theorem:lechatelier}), and apply it to pricing and factor demand. In \cref{sec:dynamic}, we introduce a dynamic, forward-looking adjustment model, derive two dynamic Le Chatelier principles (\Cref{theorem:lechatelier-dynamic,theorem:lechatelier-dynamic-strong}), and apply them to pricing, labor supply, and investment. In \cref{sec:myopic}, we derive a Le Chatelier principle for short-lived agents (\Cref{theorem:lechatelier-dynamic-myopic}) and compare their behavior to that of a long-lived agent (\Cref{theorem:myopic-vs-fwd}). We conclude in \cref{sec:necessity} by establishing converses of several main results. The \hyperref[sec:appendix]{appendix} contains definitions of some standard terms, an extension to allow for uncertain adjustment costs, and all proofs omitted from the text.

%%%%%%%%%%%%%%%%%%%%%%%%%%%%%%%%%%%
%%%%%%%%%%%%%%%%%%%%%%%%%%%%%%%%%%%
\section{Setting}
\label{sec:setting}
%%%%%%%%%%%%%%%%%%%%%%%%%%%%%%%%%%%
%%%%%%%%%%%%%%%%%%%%%%%%%%%%%%%%%%%

The agent's objective is $F(x,\theta)$, where $x$ is the choice variable and $\theta \in \Theta$ is a parameter. The choice variable $x$ belongs to a subset $L$ of $\R^n$. (More generally, $x$ could be infinite-dimensional; our results apply also in that case. We focus on the finite-dimensional case merely for simplicity.)

At the initial parameter $\theta=\underline \theta$, an optimal choice $\underline x$ was made:
\begin{equation*}
\underline x\in\argmax_{x\in L} F(x,\underline \theta).
\end{equation*}
(Note that we allow for a multiplicity of optimal actions.) This is the agent's ``starting point,'' and we shall consider how she responds to a change in the parameter from $\underline \theta$ to $\bar \theta$, where $\bar \theta \geq \underline \theta$, when adjustment is costly.

Adjusting from $\underline{x}$ to $x$ costs $C(x-\underline{x})$. The cost function $C$ is a map $\Delta L\to [0,\infty]$, where $\Delta L=\{x-y\,:\,x, y\in L\}$. Note that we allow some adjustments $\varepsilon \in \Delta L$ to have infinite cost $C(\varepsilon)=\infty$, meaning that they are infeasible. We assume throughout that $C(0) < \infty$.

The agent adjusts her action $x \in L$ to maximize
\begin{equation*}
G(x,\bar \theta)
= F(x,\bar \theta) - C(x-\underline x).
\end{equation*}

\begin{remark}
Since $\underline x$ is held fixed, our assumption that cost depends only on $\varepsilon = x-\underline x$ is without loss of generality. In particular, if the ``true'' cost has the general form $\widetilde C(x,\underline x)$, then we interpret $C$ as $C(\varepsilon) = \widetilde C(\underline x + \varepsilon, \underline x )$.
\end{remark}

%%%%%%%%%%%%%%%%%%%%%%%%%%%%%%%%%%%
\subsection{Order assumptions}
\label{sec:setting:order_assns}
%%%%%%%%%%%%%%%%%%%%%%%%%%%%%%%%%%%

Throughout, $\R^n$ (and thus $L$) is endowed with the usual ``product'' order $\geq$, so ``$x \geq y$'' means ``$x_i \geq y_i$ for every dimension $i$.'' We write ``$x > y$'' whenever $x \geq y$ and $x \neq y$. We assume that the choice set $L$ is a \emph{sublattice} of $\R^n$, meaning that for any $x,y \in L$, the following two vectors also belong to $L$:
\begin{align*}
x \wedge y &= ( \min\{x_1,y_1\}, \dots, \min\{x_n,y_n\} )
\\
\text{and} \quad
x \vee y &= ( \max\{x_1,y_1\}, \dots, \max\{x_n,y_n\} ) .
\end{align*}
Examples of sublattices include $L = \R^n$, ``boxes'' $L = \{ x \in \R^n : y \leq x \leq z \}$ for $y,z \in \R^n$, ``grids'' such as $L = \Z^n$ (where $\Z$ denotes the integers), and half-planes $L = \{ (x_1,x_2) \in \R^2 : \alpha x_1 + \beta x_2 \geq k \}$ for $\alpha \leq 0 \leq \beta$ and $k \in \R$.

The parameter $\theta$ belongs to a partially ordered set $\Theta$. We use the symbol ``$\geq$'' also for the partial order on $\Theta$. In applications, the parameter $\theta$ is often a vector, in which case $\Theta$ is a subset of $\R^n$ and $\geq$ is the usual ``product'' order.

%%%%%%%%%%%%%%%%%%%%%%%%%%%%%%%%%%%
\subsection{Monotonicity assumptions on costs}
\label{sec:setting:cost_assns}
%%%%%%%%%%%%%%%%%%%%%%%%%%%%%%%%%%%

Most of our results assume that the cost $C$ is \emph{monotone,} but our first theorem requires only \emph{minimal monotonicity.} We now define these two properties.

The cost function $C$ is \emph{monotone} if and only if for any adjustment vector $\varepsilon \in \Delta L$ and any dimension $i$,
\begin{equation*}
C(\varepsilon_1,\dots,\varepsilon_{i-1},\varepsilon_i',\varepsilon_{i+1},\dots,\varepsilon_n) \leq C(\varepsilon)
\quad \text{whenever $0 \leq \varepsilon_i' \leq \varepsilon_i$ or $0 \geq \varepsilon_i' \geq \varepsilon_i$.}
\end{equation*}
In other words, modifying an adjustment vector by shifting one dimension's adjustment toward zero always reduces cost.
An equivalent definition of monotonicity is that $C(\varepsilon') \leq C(\varepsilon)$ holds whenever $\varepsilon'$ is ``between $0$ and $\varepsilon$'' in the sense that in each dimension $i$, we have either $0 \leq \varepsilon'_i \leq \varepsilon_i$ or $0 \geq \varepsilon'_i \geq \varepsilon_i$. Clearly monotonicity is an \emph{ordinal} property: if $C$ is monotone, then so is $\phi \circ C$ for any strictly increasing map $\phi : [0,\infty] \to [0,\infty]$.

If the choice variable is one-dimensional ($L \subseteq \R$), then monotonicity requires precisely that $C$ be single-dipped and minimized at zero.%
\footnote{Given $X \subseteq \R$, a function $\phi : X \to [0,\infty]$ is \emph{single-dipped} if and only if there is an $x \in X$ such that $\phi$ is decreasing on $\{ y \in X : y \leq x \}$ and increasing on $\{ y \in X : y \geq x \}$.}
More generally, if $C$ has the \emph{additively separable} form $C(\varepsilon) = \sum_{i=1}^n C_i(\varepsilon_i)$, then it is monotone if and only if each $C_i$ is single-dipped and minimized at zero.
%In general, monotonicity demands that \emph{holding fixed} the adjustments in all dimensions but $i$, the adjustment cost as a function of the dimension-$i$ adjustment $\varepsilon_i$ alone is single-dipped and minimized at zero.

\begin{example}
\label{example:costs_monotone_1d}
For a one-dimensional choice variable ($L \subseteq \R$), the following cost functions are monotone, for any values of the parameters $k,a \in (0,\infty)$:\,
(a)~Fixed cost: $C(\varepsilon) = k$ for $\varepsilon \neq 0$ and $C(0)=0$.\,
(b)~Quadratic cost: $C(\varepsilon) = a \varepsilon^2$.\,
(c)~Quadratic with free disposal: $C(\varepsilon) = a \varepsilon^2$ if $\varepsilon \geq 0$ and $C(\varepsilon) = 0$ otherwise.\,
(d)~Quadratic with a constraint: $C(\varepsilon) = a \varepsilon^2$ if $\varepsilon \in \mathcal{E}$ and $C(\varepsilon) = \infty$ otherwise, where the constraint set $\mathcal{E} \subseteq \R$ is convex and contains $0$.
\end{example}

\begin{example}
\label{example:costs_monotone_multid}
The following cost functions are monotone:\,
(a)~Additively separable: $C(\varepsilon) = \sum_{i=1}^n C_i(\varepsilon_i)$, where each $C_i$ is of one of the types in \Cref{example:costs_monotone_1d}.\,
(b)~Euclidean: $C(\varepsilon) = \sqrt{ \sum_{i=1}^n \varepsilon_i^2 }$.\,
(c)~Cobb--Douglas: $C(\varepsilon) = \prod_{i=1}^n {\lvert \varepsilon_i \rvert}^{a_i}$, where $a_1,\dots,a_n \in (0,\infty)$.
\end{example}

Monotonicity is consistent with quite general nonconvexities, and even with failures of quasiconvexity: the cost function in \Cref{example:costs_monotone_multid}(c) is monotone, but is not quasiconvex unless the choice variable is one-dimensional, i.e. $L \subseteq \R$. (Monotonicity \emph{does} imply quasiconvexity when the choice variable is one-dimensional, since then quasiconvexity is equivalent to single-dippedness.)

A cost function $C$ is called \emph{minimally monotone} if and only if
\begin{equation*}
C(\varepsilon \wedge 0) \leq C(\varepsilon) \geq C(\varepsilon\vee 0)
\quad \text{for any adjustment vector $\varepsilon \in \Delta L$.}
\end{equation*}
In other words, simultaneously \emph{cancelling} all upward adjustments, by replacing all of the positive entries of an adjustment vector $\varepsilon$ with zeroes, reduces cost; similarly, cancelling all downward adjustments reduces cost. Clearly minimal monotonicity is implied by monotonicity, and clearly it is an ordinal property.

If the choice variable is one-dimensional ($L \subseteq \R$), then minimal monotonicity demands exactly that $C$ be minimized at zero. If $C$ is additively separable, then it is minimally monotone if and only if each $C_i$ is minimized at zero.

\begin{example}
\label{example:costs_0monotone}
Consider the cost function in \Cref{example:costs_monotone_1d}(d), with a constraint set $\mathcal{E} \subseteq \R$ that that contains $0$ but is \emph{not} convex. For instance, $\mathcal{E} = (-\infty,0] \cup [I,\infty)$ for some $I>0$, as in the recent literature on the investment behavior of entrepreneurs in developing countries (see \cref{sec:dynamic:investment} below). Or $\mathcal{E} = \Z$ due to an integer constraint. Such a cost function is minimally monotone, but not monotone.
\end{example}

\begin{example}
If $C(x-\underline x) = 0$ whenever $x$ is a permutation of $\underline x$ and $C(x-\underline x) > 0$ otherwise, then $C$ is not minimally monotone (hence not monotone). This may occur if $x_1,\dots,x_n$ are prices at a firm's various establishments, swapping menus between establishments is costless, and printing new menus is costly.
\end{example}

%%%%%%%%%%%%%%%%%%%%%%%%%%%%%%%%%%%
\subsection{Complementarity assumptions on the objective}
\label{sec:setting:obj_assns}
%%%%%%%%%%%%%%%%%%%%%%%%%%%%%%%%%%%

We assume throughout that the objective function $F$ satisfies the standard \emph{ordinal complementarity conditions} of \emph{quasi-supermodularity} and \emph{single-crossing differences} \parencite[see][]{MilgromShannon1994}, defined as follows.

The objective $F(x,\theta)$ has \emph{single-crossing differences in $(x,\theta)$} if and only if $F(y,\theta')-F(x,\theta') \geq \mathrel{(>)} 0$ implies $F(y,\theta'')-F(x,\theta'') \geq \mathrel{(>)} 0$ whenever $x \leq y$ and $\theta' \leq \theta''$. Economically, this means that a higher parameter implies a greater liking for higher actions: whenever a higher action is (strictly) preferred to a lower one, this remains true if the parameter increases. A sufficient condition is \emph{increasing differences,} which requires that $F(y,\theta)-F(x,\theta)$ be increasing in $\theta$ whenever $x \leq y$. Related concepts, such as \emph{log} increasing differences, are defined in the \hyperref[sec:appendix]{appendix}.

A function $\phi : L \to \R$ is called \emph{quasi-supermodular} if $\phi(x) - \phi(x \wedge y) \geq \mathrel{(>)} 0$ implies $\phi(x \vee y) - \phi(y) \geq \mathrel{(>)} 0$. A sufficient condition is \emph{supermodularity,} which requires that $\phi(x) - \phi(x \wedge y) \leq \phi(x \vee y) - \phi(y)$ for any $x,y \in L$. If $L \subseteq \R$, then every function $\phi : L \to \R$ is automatically supermodular. See the \hyperref[sec:appendix]{appendix} for discussion and for definitions of related concepts, such as \emph{submodularity.}

We say that $F(x,\theta)$ is \emph{(quasi-)supermodular in $x$} if for each parameter $\theta \in \Theta$, the function $F(\cdot,\theta) : L \to \R$ is (quasi-)supermodular. This captures complementarity between the different dimensions of the action.

\begin{remark}\label{remark:karlin}
In applications to decision under uncertainty, the objective is typically $F(x,\theta) = \int_{\mathcal{S}} f(x,s,\theta) H(\mathrm{d}s,\theta)$, where $f(x,s,\theta)$ is the payoff of action $x \in L$ contingent on state $s \in \mathcal{S} \subseteq \R^k$, and $H(\cdot,\theta)$ is the CDF from which the state is drawn. Clearly $F(x,\theta)$ is supermodular in $x$ if $f(x,s,\theta)$ is. If $\mathcal{S} \subseteq \R$, $f(x,s,\theta)$ is independent of $\theta$ and has single-crossing differences in $(x,s)$, and $H(\cdot,\theta)$ increases with $\theta$ in the monotone likelihood ratio order, then $F(x,\theta)$ has single-crossing differences in $(x,\theta)$ \parencite[Lemma~1]{KarlinRubin1956}. For more conditions under which $F(x,\theta)$ is quasi-supermodular in $x$ or has single-crossing differences in $(x,\theta)$, see \textcite{Athey2002,QuahStrulovici2012}.
\end{remark}

%%%%%%%%%%%%%%%%%%%%%%%%%%%%%%%%%%%
%%%%%%%%%%%%%%%%%%%%%%%%%%%%%%%%%%%
\section{Comparative statics}
\label{sec:mcs}
%%%%%%%%%%%%%%%%%%%%%%%%%%%%%%%%%%%
%%%%%%%%%%%%%%%%%%%%%%%%%%%%%%%%%%%

Recall that the agent chooses $x \in L$ to maximize $G(x,\bar \theta)=F(x,\bar \theta)-C(x-\underline x)$. Our fundamental comparative-statics result is the following.

\begin{theorem} \label{theorem:basic}
Suppose that the objective $F(x,\theta)$ is quasi-supermodular in $x$ and has single-crossing differences in $(x,\theta)$, and that the adjustment cost $C$ is minimally monotone. If $\bar \theta \geq \underline \theta$, then $\widehat x\geq \underline x$ for some $\widehat x\in\argmax_{x\in L} G(x,\bar \theta)$, provided the argmax is nonempty.
\end{theorem}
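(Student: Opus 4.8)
The plan is to leverage the classical Milgrom--Shannon machinery by showing that the modified objective $G(\cdot,\bar\theta)$ inherits enough structure on the relevant sub-domain, and then to exploit the starting-point optimality of $\underline x$. The key obstacle is that $G = F - C$ need not be quasi-supermodular, since subtracting a minimally monotone (and possibly wildly nonconvex) cost destroys lattice structure; so a direct appeal to Milgrom--Shannon to the whole problem is hopeless. Instead I would restrict attention to the sublattice $L' = \{ x \in L : x \geq \underline x \}$ and argue that on $L'$ the cost term behaves well enough.

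First I would take any $\widehat x \in \argmax_{x \in L} G(x,\bar\theta)$ and consider the candidate $\widehat x \vee \underline x \in L$ (using that $L$ is a sublattice). The goal is to show $\widehat x \vee \underline x$ is also a $G(\cdot,\bar\theta)$-maximizer, since it dominates $\underline x$ by construction. Write $\widehat x \vee \underline x = \widehat x + \varepsilon^+$ where $\varepsilon^+ = (\widehat x - \underline x) \vee 0$ is the vector of positive parts; the adjustment vector of $\widehat x$ is $\widehat x - \underline x$, and the adjustment vector of $\widehat x \vee \underline x$ is exactly $(\widehat x - \underline x) \vee 0$. By \emph{minimal monotonicity} of $C$, $C\bigl((\widehat x - \underline x) \vee 0\bigr) \leq C(\widehat x - \underline x)$, so the cost term does not increase when we pass from $\widehat x$ to $\widehat x \vee \underline x$. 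Hence it suffices to show that $F(\widehat x \vee \underline x, \bar\theta) \geq F(\widehat x, \bar\theta)$.

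The heart of the argument is therefore a lattice/single-crossing comparison on $F$ alone, combining $\widehat x$, $\underline x$, and the shock $\bar\theta \geq \underline\theta$. Note $\widehat x \vee \underline x$ and $\widehat x \wedge \underline x$ are the join and meet of $\widehat x$ and $\underline x$. I would proceed by contradiction: suppose $F(\widehat x \vee \underline x, \bar\theta) < F(\widehat x, \bar\theta)$. Quasi-supermodularity of $F(\cdot,\bar\theta)$ in $x$ then gives (in contrapositive form, with $x = \widehat x$, $y = \underline x$) that $F(\widehat x \wedge \underline x, \bar\theta) > F(\underline x, \bar\theta)$. Now single-crossing differences in $(x,\theta)$ applied to the pair $\widehat x \wedge \underline x \leq \underline x$ and the parameters $\underline\theta \leq \bar\theta$ lets us pull this strict inequality back to $\underline\theta$: from $F(\underline x, \bar\theta) - F(\widehat x \wedge \underline x, \bar\theta) < 0$ we would want $F(\underline x, \underline\theta) - F(\widehat x \wedge \underline x, \underline\theta) < 0$ — here I must be careful to apply the single-crossing property in the correct (contrapositive) direction, i.e. a strict \emph{failure} at the higher parameter propagates down to a strict failure at the lower one. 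This yields $F(\widehat x \wedge \underline x, \underline\theta) > F(\underline x, \underline\theta)$, contradicting the optimality of $\underline x \in \argmax_{x \in L} F(x,\underline\theta)$ (note $\widehat x \wedge \underline x \in L$ since $L$ is a sublattice).

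Thus $F(\widehat x \vee \underline x, \bar\theta) \geq F(\widehat x, \bar\theta)$, and combined with $C((\widehat x - \underline x)\vee 0) \leq C(\widehat x - \underline x)$ we get $G(\widehat x \vee \underline x, \bar\theta) \geq G(\widehat x, \bar\theta)$, so $\widehat x \vee \underline x \in \argmax_{x \in L} G(x,\bar\theta)$; taking this as the desired $\widehat x$ (relabelled), we have $\widehat x \geq \underline x$. I expect the one delicate point to be stating the single-crossing step cleanly: single-crossing differences is usually phrased as a property propagating \emph{upward} in $\theta$, so I need its contrapositive — that a strict reversal ($F(\text{lower action}) > F(\text{higher action})$) at a high $\theta$ forces the same strict reversal at every lower $\theta$ — which follows directly from the definition quoted in \cref{sec:setting:obj_assns}. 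Everything else is bookkeeping with meets and joins, all of which stay in $L$ by the sublattice assumption.
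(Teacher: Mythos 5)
Your proposal is correct and is essentially the paper's own proof run in contrapositive form: both pass from an arbitrary maximizer of $G(\cdot,\bar \theta)$ to its join with $\underline x$, handle the cost term via minimal monotonicity ($C(\varepsilon \vee 0) \leq C(\varepsilon)$), and handle the $F$ term by chaining the optimality of $\underline x$ at $\underline \theta$ through quasi-supermodularity and single-crossing differences. (Two cosmetic slips that do not affect the argument: the identity should read $\widehat x \vee \underline x = \underline x + (\widehat x - \underline x)\vee 0$, and the quasi-supermodularity step you invoke corresponds to the assignment $x = \underline x$, $y = \widehat x$ in the paper's definition.)
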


In words, an increased parameter leads to a higher action (modulo tie-breaking). This parallels the basic comparative-statics result for costless adjustment \parencite[see][Theorem~4]{MilgromShannon1994}, one version of which states that under the same ordinal complementarity conditions on the objective $F$, we have $\bar x\geq \underline x$ for some $\bar x\in\argmax_{x\in L} F(x,\bar \theta)$, provided the argmax is nonempty. \Cref{theorem:basic} shows that this basic result is strikingly robust to adjustment costs: the objective $F$ need not satisfy any additional property, and the cost $C$ need only be minimally monotone.

\begin{example}
In many economic models, e.g. neoclassical production (see \cref{sec:lechatelier:factor_demand,sec:dynamic:investment} below), $F(x,\theta) = \phi(x) + \theta \cdot x$, where $L$ and $\Theta$ are subsets of $\R^n$ and $\phi : L \to \R$ is supermodular. Then $F(x,\theta)$ is supermodular in $x$ and has increasing differences in $(x,\theta)$. Hence, provided the adjustment cost is minimally monotone, \Cref{theorem:basic} guarantees that any coordinatewise increase of the parameter~$\theta$ leads to an coordinatewise higher optimal choice of action~$x$.
\end{example}

\Cref{theorem:basic} does \emph{not} follow from applying the basic comparative-statics result to the objective function $G(x,\bar \theta)$, because its assumptions do not guarantee that $G(\cdot,\bar \theta)$ is quasi-supermodular.%
\footnote{Its second term $x \mapsto -C(x-\underline x)$ need not be quasi-supermodular, and in any case, the sum of two quasi-supermodular functions is not quasi-supermodular in general.}
A different argument is required.

\begin{proof}[\normalfont\bfseries Proof]
Let $x'\in \argmax_{x\in L} G(x,\bar \theta)$. We claim that $\widehat x = \underline x\vee x'$ also maximizes $G(\cdot,\bar \theta)$; obviously $\widehat x \geq \underline x$. We have $F(\underline x,\underline \theta) \geq F(\underline x \wedge x', \underline \theta)$ by definition of $\underline x$. Thus $F(\underline x \vee x',\underline \theta) \geq F(x', \underline \theta)$ by quasi-supermodularity, whence $F(\underline x \vee x',\bar \theta) \geq F(x', \bar \theta)$ by single-crossing differences. Furthermore, by minimal monotonicity, $C( \underline x \vee x' - \underline x ) = C( ( x' - \underline x ) \vee 0 ) \leq C( x' - \underline x )$.
Thus
\begin{equation*}
G(\widehat x,\bar \theta)
= F(\underline x\vee x',\bar \theta)-C(\underline x\vee x'-\underline x)
\geq F(x',\bar \theta)-C(x'-\underline x)
= G(x',\bar \theta).
\end{equation*}
Since $x'$ maximizes $G(\cdot,\bar \theta)$ on $L$, it follows that $\widehat x$ does, too.
\end{proof}

This proof illustrates the role played by the minimal-monotonicity assumption in delivering comparative statics. This assumption cannot be weakened: we show in \cref{sec:necessity} below that minimal monotonicity is necessary (as well as sufficient) for comparative statics to hold whatever the objective $F$.

In applications, it is often useful that \Cref{theorem:basic} requires $F$ to satisfy only the ordinal complementarity conditions, rather than the stronger \emph{cardinal complementarity conditions} of supermodularity and increasing differences. In monopoly pricing, for example, the objective $F$ has single-crossing differences, but not increasing differences---see \cref{sec:lechatelier:pricing} below.

\Cref{theorem:basic} has a counterpart for parameter \emph{decreases:} under the same assumptions, if $\bar \theta \leq \underline \theta$, then $\widehat x\leq \underline x$ for some $\widehat x\in\argmax_{x\in L} G(x,\bar \theta)$, provided the argmax is nonempty. The proof is exactly analogous.%
\footnote{\label{footnote:one-half}The proof of \Cref{theorem:basic} uses only one half of the minimal-monotonicity assumption: that $C( \varepsilon \vee 0 ) \leq C(\varepsilon)$ for every $\varepsilon \in \Delta L$. The proof of its parameter-decrease counterpart uses (only) the other half, namely that $C( \varepsilon \wedge 0 ) \leq C(\varepsilon)$ for every $\varepsilon \in \Delta L$.}
All of the results in this paper have such counterparts for parameter decreases; we will not discuss them explicitly.

\Cref{theorem:basic} also extends straightforwardly to the case in which adjustment costs are uncertain, even if the agent is risk-averse. To be precise, let all uncertainty be summarized by a random variable $S$, called ``the state of the world.'' The agent's adjustment cost is $C_s(\cdot)$ in state $S=s$. Her ex-ante payoff is
\begin{equation*}
\widetilde G(x,\theta)
= \E\bigl[ u\bigl(
F(x,\theta) - C_S(x-\underline x)
\bigr) \bigr] ,
\end{equation*}
where $u$ is an increasing function $\R \to \R$, whose curvature captures the agent's risk attitude. \Cref{theorem:basic} remains true verbatim, except with ``$C$ is minimally monotone'' replaced by ``$C_s$ is minimally monotone for almost every realization $s$ of the state $S$,'' and $G$ replaced by $\widetilde G$. Several subsequent results also generalize along these lines. These assertions are proved in \cref{sec:appendix:uncertain}.

%%%%%%%%%%%%%%%%%%%%%%%%%%%%%%%%%%%
\subsection{Shifts of the constraint set}
\label{sec:mcs:constraint}
%%%%%%%%%%%%%%%%%%%%%%%%%%%%%%%%%%%

We now generalize \Cref{theorem:basic} to encompass shifts of the constraint set $L$ as well as of the parameter $\theta$. Write $\underline L \subseteq \R^n$ for the initial constraint set, $\underline x \in \argmax_{x \in \underline L} F(x,\underline \theta)$ for the agent's initial choice, and $\bar L \subseteq \R^n$ for the new constraint set. Recall that for two sets $X,Y \subseteq \R^n$, $X$ is \emph{higher in the strong set order} than $Y$, denoted $X \geq_{\text{ss}} Y$, if and only if for any $x \in X$ and $y \in Y$, the vector $x \vee y$ belongs to $X$ and the vector $x \wedge y$ belongs to $Y$.

\begin{namedthm}[\Cref*{theorem:basic}$\boldsymbol{^*}$.]\label{theorem:basic-constraint}
Suppose that the objective $F(x,\theta)$ is quasi-supermodular in $x$ and has single-crossing differences in $(x,\theta)$, and that the adjustment cost $C$ is minimally monotone. If $\bar \theta \geq \underline \theta$ and $\bar L \geq_{\text{ss}} \underline L$, then $\widehat x\geq \underline x$ for some $\widehat x\in\argmax_{x\in \bar L} G(x,\bar \theta)$, provided the argmax is nonempty.
\end{namedthm}

\begin{remark}\label{remark:sso}
\hyperref[theorem:basic]{Theorems~\ref*{theorem:basic}} and \hyperref[theorem:basic-constraint]{\ref*{theorem:basic}$^*$} are phrased differently than the usual statement of the basic result \parencite[see][Theorem~4]{MilgromShannon1994}, which asserts that when $F(x,\theta)$ is quasi-supermodular in $x$ and has single-crossing differences in $(x,\theta)$, if $\bar \theta \geq \underline \theta$ and $\bar L \geq_{\text{ss}} \underline L$ then $\argmax_{x \in \bar L} F(x,\bar \theta) \geq_{\text{ss}} \argmax_{x \in \underline L} F(x,\underline \theta)$. A version of \hyperref[theorem:basic-constraint]{\Cref*{theorem:basic}$^*$} with this form also holds: under the same hypotheses, if $\bar \theta \geq \underline \theta$ and $\bar L \geq_{\text{ss}} \underline L$ then 
\begin{align*}
\underline x
\in \argmax_{x \in \underline L} F(x,\underline \theta)
\qquad &\text{and} \:\;
&x'
\in \argmax_{x \in \bar L} G(x,\bar \theta) \phantom{.}
\\
\implies \quad\:\;
\underline x \wedge x'
\in \argmax_{x \in \underline L} F(x,\underline \theta)
\qquad &\text{and} \:\;
& \underline x \vee x'
\in \argmax_{x \in \bar L} G(x,\bar \theta) .
\end{align*}
The latter claim (about $\underline x \vee x'$) is exactly what the proof of \hyperref[theorem:basic-constraint]{\Cref*{theorem:basic}$^*$} shows. To see why $\underline x \wedge x' \in \argmax_{x \in \underline L} F(x,\underline \theta)$, suppose not; then $F(\underline x,\underline \theta) > F(\underline x \wedge x', \underline \theta)$, so that replicating the steps in the proof of \hyperref[theorem:basic-constraint]{\Cref*{theorem:basic}$^*$} delivers $G(\underline x \vee x',\bar \theta) > G(x',\bar \theta)$, which contradicts the fact that $x'$ maximizes $G(\cdot,\bar \theta)$ on $\bar L$.
\end{remark}

%%%%%%%%%%%%%%%%%%%%%%%%%%%%%%%%%%%
\subsection["For all" comparative statics]{``$\boldsymbol{\forall}$'' comparative statics}
\label{sec:mcs:strict}
%%%%%%%%%%%%%%%%%%%%%%%%%%%%%%%%%%%

We now provide a ``$\forall$'' counterpart to \Cref{theorem:basic}, giving two conditions under either of which $\widehat x \geq \underline x$ holds for \emph{every} optimal choice $\widehat x$. The first of these conditions is \emph{strict single-crossing differences} of the objective $F(x,\theta)$ in $(x,\theta)$, which requires that $F(y,\theta')-F(x,\theta') \geq 0$ implies $F(y,\theta'')-F(x,\theta'') > 0$ whenever $x < y$ and $\theta' < \theta''$. The second is \emph{strict minimal monotonicity} of the cost $C$, which demands that for any adjustment vector $\varepsilon \in \Delta L$,
\begin{equation*}
C(\varepsilon \wedge 0) < C(\varepsilon)
\quad \text{unless $\varepsilon \leq 0$,}
\quad \text{and} \quad
C(\varepsilon \vee 0) < C(\varepsilon)
\quad \text{unless $\varepsilon \geq 0$.\footnotemark}
\end{equation*}
\footnotetext{Equivalently: $C(\varepsilon \wedge 0) < C(\varepsilon)$ unless $\varepsilon \wedge 0 = \varepsilon$, and $C(\varepsilon \vee 0) < C(\varepsilon)$ unless $\varepsilon \vee 0 = \varepsilon$.}%
In other words, simultaneously cancelling all upward adjustments, by replacing all of the strictly positive entries of an adjustment vector $\varepsilon$ with zeroes, \emph{strictly} reduces cost; and likewise for downward adjustments.

\begin{proposition} \label{proposition:basic-strict}
Suppose that the objective $F(x,\theta)$ is quasi-supermodular in $x$, and that either
\begin{enumerate}[label=(\alph*)]
\item \label{item:strict-scd}
the objective $F(x,\theta)$ has \underline{strict} single-crossing differences in $(x,\theta)$ and the cost $C$ is minimally monotone, or
\item \label{item:strict-reduction}
the objective $F(x,\theta)$ has single-crossing differences in $(x,\theta)$ and the cost $C$ is \underline{strictly} minimally monotone.
\end{enumerate}
If $\bar \theta > \underline \theta$, then $\widehat x\geq \underline x$ for \underline{any} $\widehat x\in\argmax_{x\in L} G(x,\bar \theta)$.%
\footnote{A variant of \Cref{proposition:basic-strict} can be obtained by mixing the ``strictness'' properties~\ref{item:strict-scd} and \ref{item:strict-reduction}: if $x=(y,z)$, where $F(y,z,\theta)$ has strict single-crossing differences in $(y,\theta)$ for any fixed $z$, and $C(y-\underline{y},\cdot)$ is strictly minimally monotone for any fixed $y$, then the conclusion goes through, with essentially the same proof.}
\end{proposition}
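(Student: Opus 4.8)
The plan is to re-run the proof of \Cref{theorem:basic} essentially verbatim, but to use the additional strictness to rule out \emph{every} maximizer $\widehat x$ with $\widehat x \ngeq \underline x$, rather than merely to exhibit one maximizer that dominates $\underline x$. So I would argue by contradiction. If $\argmax_{x \in L} G(x,\bar \theta)$ is empty there is nothing to prove; otherwise fix $\widehat x$ in it and suppose $\widehat x \ngeq \underline x$. Writing $\varepsilon := \widehat x - \underline x$, this says $\varepsilon \vee 0 \neq \varepsilon$, equivalently $\widehat x < \underline x \vee \widehat x$.

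The opening steps are exactly those of the proof of \Cref{theorem:basic}. Since $L$ is a sublattice, $\underline x \wedge \widehat x \in L$, so optimality of $\underline x$ at $\underline \theta$ gives $F(\underline x,\underline \theta) \geq F(\underline x \wedge \widehat x,\underline \theta)$, and then quasi-supermodularity of $F(\cdot,\underline \theta)$ yields $F(\underline x \vee \widehat x,\underline \theta) \geq F(\widehat x,\underline \theta)$.

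Now the two hypotheses are used differently. Under~\ref{item:strict-scd}, I would invoke \emph{strict} single-crossing differences---legitimately, since $\widehat x < \underline x \vee \widehat x$ and $\underline \theta < \bar \theta$---to obtain the strict inequality $F(\underline x \vee \widehat x,\bar \theta) > F(\widehat x,\bar \theta)$, and pair it with $C(\underline x \vee \widehat x - \underline x) = C(\varepsilon \vee 0) \leq C(\varepsilon)$ from (plain) minimal monotonicity. Under~\ref{item:strict-reduction}, (plain) single-crossing differences gives only $F(\underline x \vee \widehat x,\bar \theta) \geq F(\widehat x,\bar \theta)$, but \emph{strict} minimal monotonicity applies precisely because $\varepsilon \ngeq 0$, delivering the strict inequality $C(\varepsilon \vee 0) < C(\varepsilon)$. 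In either case, adding the inequalities for $F$ and for $-C$ yields $G(\underline x \vee \widehat x, \bar \theta) > G(\widehat x, \bar \theta)$, which---as $\underline x \vee \widehat x \in L$---contradicts the optimality of $\widehat x$. Hence $\widehat x \geq \underline x$.

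I do not anticipate a genuine obstacle: the argument simply makes one inequality in the proof of \Cref{theorem:basic} strict, and the points needing care are purely bookkeeping---checking that the contradiction hypothesis $\widehat x \ngeq \underline x$ really supplies the strict relation $\widehat x < \underline x \vee \widehat x$ that strict single-crossing differences requires in case~\ref{item:strict-scd}, and that $\varepsilon$ has the form to which strict minimal monotonicity applies in case~\ref{item:strict-reduction}. (One should also note that $C(\varepsilon) < \infty$, so the additions are valid; this holds because the argmax is nonempty and $G(\underline x,\bar \theta) = F(\underline x,\bar \theta) - C(0) > -\infty$.) For the mixed variant in the footnote, I would decompose $\varepsilon$ into its $y$- and $z$-blocks, apply the case-\ref{item:strict-scd} reasoning to the $y$-block and the case-\ref{item:strict-reduction} reasoning to the $z$-block, and observe that at least one block is non-null and hence contributes a strict gain.
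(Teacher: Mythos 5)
Your proof is correct and is essentially identical to the paper's own argument: both proceed by contradiction from $\widehat x \ngeq \underline x$, rerun the steps of \Cref{theorem:basic} to compare $\widehat x$ with $\underline x \vee \widehat x$, and extract the needed strict inequality from strict single-crossing differences in case~\ref{item:strict-scd} or from strict minimal monotonicity in case~\ref{item:strict-reduction}. Your added remarks on finiteness of $C(\widehat x - \underline x)$ and on the footnote's mixed variant are sound but not needed beyond what the paper records.
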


\Cref{proposition:basic-strict} is the costly-adjustment analog of the standard ``$\forall$'' com\-pa\-ra\-tive-statics result \parencite[see][Theorem~$4^\prime$]{MilgromShannon1994}, which states that given any $\underline x \in \argmax_{x \in L} F(x,\underline \theta)$, if $F(x,\theta)$ is quasi-supermodular in $x$ and has strict single-crossing differences in $(x,\theta)$, then $\bar x \geq \underline x$ for \emph{any} $\bar x \in \argmax_{x \in L} F(x,\bar \theta)$. Part~\ref{item:strict-scd} directly extends this result to the costly-adjustment case. Part~\ref{item:strict-reduction} shows that the ``strictness'' in the hypotheses required to obtain a ``$\forall$'' comparative-statics conclusion can come from the cost $C$ rather than the objective $F$: in particular, strict minimal monotonicity ensures that even if some action $x \ngeq \underline x$ maximizes $F(\cdot,\bar \theta)$, it will not be chosen on account of its cost.

%%%%%%%%%%%%%%%%%%%%%%%%%%%%%%%%%%%
\subsection{Application to saving by wishful thinkers}
\label{sec:mcs:wishful}
%%%%%%%%%%%%%%%%%%%%%%%%%%%%%%%%%%%

An extensive literature in psychology and economics documents the prevalence of \emph{motivated reasoning:} believing (to some extent) what it is convenient to believe. Theoretical work has modeled this as an agent (perhaps subconsciously) \emph{choosing} her belief, balancing convenience against the costs of inaccuracy.%
    \footnote{This literature is surveyed by \textcite[][]{BenabouTirole2016,Benabou2015}.}

\emph{Wishful thinking} is motivated reasoning driven by a desire to be optimistic. \textcite{CaplinLeahy2019} study the economic implications of wishful thinking, showing (among other things) that wishful thinking may suppress saving in a standard consumption--saving model. In this section, we show how \Cref{theorem:basic} delivers this result without the authors' functional-forms assumptions.

An agent starts period 1 with wealth $w>0$. She consumes $c \in [0,w]$ and saves the rest. Savings accrue interest at rate $r>0$. The agent's period-2 income is uncertain, drawn from a finite set $\mathcal{Y} \subseteq \R_+$. Given her belief $G$ (a CDF on $\mathcal{Y}$) about her period-2 income, the agent's expected lifetime payoff is
\begin{equation*}
U(c,G) = u_1(c) + \int_{\mathcal{Y}} u_2\bigl((1+r)(w-c)+y\bigr) G(\mathrm{d} y) ,
\end{equation*}
where $u_1$ and $u_2$ are continuous, concave and strictly increasing. (A natural special case is when $u_2 = \delta u_1$ for some $\delta \in (0,1)$.)

In deciding what to believe about period-2 income, the agent contemplates a set $\mathcal{G}$ of beliefs (CDFs on $\mathcal{Y}$). We compare beliefs according to optimism, formalized by first-order stochastic dominance: $G \leq_1 H$ if and only if $G(y) \geq H(y)$ for every $y \in \mathcal{Y}$. We assume that $\mathcal{G}$ is a sublattice: if $G,H \in \mathcal{G}$, then $G \wedge_1 H$ and $G \vee_1 H$ also belong to $\mathcal{G}$, where $(G \wedge_1 H)(y) = \max\{G(y),H(y)\}$ and $(G \vee_1 H)(y) = \min\{G(y),H(y)\}$ for each $y \in \mathcal{Y}$. We further assume that $\mathcal{G}$ has a most optimistic element $\bar G$ (namely, $\bar G(y) = \inf_{G \in \mathcal{G}} G(y)$ for each $y \in \mathcal{Y}$).

A \emph{realist} holds belief $G_0 \in \mathcal{G}$, so consumes $c_0 \in \argmax_{c \in [0,w]} U(c,G_0)$. A \emph{wishful thinker} chooses both what to believe and how much to consume:
\begin{equation*}
\left( \widehat c, \widehat G \right)
\in \argmax_{(c,G) \in [0,w] \times \mathcal{G}}
\left[ U(c,G) - C(G-G_0) \right] ,
\end{equation*}
where $C : \Delta \mathcal{G} \to [0,\infty]$ is a minimally monotone cost function. An example is when $C(G-G_0)$ is the Kullback--Leibler divergence of $G$ from $G_0$;%
\footnote{\label{footnote:kl_min_mon}That is, $C(\varepsilon) = D(G_0+\varepsilon,G_0)$ for every $\varepsilon \in \Delta \mathcal{G}$, where $D$ is the Kullback--Leibler divergence. $C$ is minimally monotone since for any $G$, $D(G,G_0) = D(G,G_0) + D(G_0,G_0) \geq D(G_0 \wedge_1 G,G_0) + D(G_0 \vee_1 G,G_0)$ since $D(\cdot,G_0)$ is submodular \parencite[see][Example 16]{DziewulskiQuah2024}, which since $D \geq 0$ implies $D(G,G_0) \geq \max\{ D(G_0 \wedge_1 G,G_0), D(G_0 \vee_1 G,G_0) \}$.}
this is the functional form assumed by \textcite{CaplinLeahy2019}.

We claim that wishful thinkers save less than realists: $\bar c \geq \widehat c \geq c_0$ and $\bar G \geq_1 \widehat G \geq_1 G_0$, where $\bar c \in \argmax_{c \in [0,w]} U\left( c, \bar G \right)$ is how much the agent would consume if she were to hold the most optimistic belief $\bar G$. Formally:

\begin{proposition}
\label{proposition:wishful}
In the wishful-thinking application, under the stated assumptions, $\bar c \geq \widehat c \geq c_0$ and $\bar G \geq_1 \widehat G \geq_1 G_0$ for some
\begin{equation*}
\left( \widehat c, \widehat G \right) \in \argmax_{(c,G) \in [0,w] \times \mathcal{G}}
\left[ U(c,G) - C(G-G_0) \right]
\quad \text{and} \quad
\bar c \in \argmax_{c \in [0,w]} U\left( c, \bar G \right) ,
\end{equation*}
provided the argmaxes are nonempty.
\end{proposition}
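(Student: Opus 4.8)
The plan is to recast the wishful thinker's problem as a costly-adjustment problem with joint action $x=(c,G)$ ranging over the sublattice $L=[0,w]\times\mathcal G$ — ordered by the product of the usual order on $[0,w]$ and first-order stochastic dominance $\leq_1$ on $\mathcal G$ — and with the cost $C(G-G_0)$ read as an adjustment cost $\widetilde C(\varepsilon_c,\varepsilon_G):=C(\varepsilon_G)$ on $\Delta L$, which is minimally monotone because $C$ is (cancelling the positive or the negative entries of $(\varepsilon_c,\varepsilon_G)$ only affects the $\varepsilon_G$-block). First I would record three elementary properties of $U$: \emph{(i)} for each $c$, $U(c,\cdot)$ is $\leq_1$-increasing, since $y\mapsto u_2\big((1+r)(w-c)+y\big)$ is increasing; \emph{(ii)} for each $c$, $U(c,\cdot)$ is supermodular on the FOSD lattice — in fact $U(c,G\vee_1 H)+U(c,G\wedge_1 H)=U(c,G)+U(c,H)$, because the point masses satisfy $(G\vee_1 H)(\{y\})+(G\wedge_1 H)(\{y\})=G(\{y\})+H(\{y\})$ for every $y\in\mathcal Y$; and \emph{(iii)} $U$ has increasing differences in $(c,G)$, since for $c\leq c'$ the integrand $y\mapsto u_2\big((1+r)(w-c')+y\big)-u_2\big((1+r)(w-c)+y\big)$ is increasing in $y$ by concavity of $u_2$, so its $G$-integral is $\leq_1$-increasing. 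As $[0,w]\subseteq\R$ makes $U$ automatically supermodular in $c$, properties \emph{(ii)}--\emph{(iii)} and the standard characterization of supermodularity on a product of lattices yield that $U$ is supermodular — hence quasi-supermodular — in $x=(c,G)$ on $L$; and \emph{(iii)} says that $U$, viewed as a function of $c$ with parameter $G$, has single-crossing differences in $(c,G)$.

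For the lower bounds $\widehat c\geq c_0$ and $\widehat G\geq_1 G_0$, I would apply \hyperref[theorem:basic-constraint]{\Cref*{theorem:basic}$^*$} to the constraint-set shift from $\underline L:=\{(c,G)\in L:G\leq_1 G_0\}$ to $\bar L:=L$, with objective $U$ (taking $\Theta$ a single point, so single-crossing differences in $(x,\theta)$ holds vacuously) and adjustment cost $\widetilde C$. The set $\underline L$ is a sublattice, being the intersection of the sublattice $\mathcal G$ with the $\leq_1$-down-set of $G_0$; by \emph{(i)}, $U(c,G)\leq U(c,G_0)\leq U(c_0,G_0)$ for every $(c,G)\in\underline L$, so $\underline x:=(c_0,G_0)\in\argmax_{(c,G)\in\underline L}U(c,G)$; and $\bar L\geq_{\text{ss}}\underline L$, since for $(c,G)\in\bar L$ and $(c',G')\in\underline L$ the join $(c\vee c',G\vee_1 G')$ lies in $\bar L$ while the meet $(c\wedge c',G\wedge_1 G')$ lies in $\underline L$ (as $G\wedge_1 G'\leq_1 G'\leq_1 G_0$). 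Then \hyperref[theorem:basic-constraint]{\Cref*{theorem:basic}$^*$} delivers a joint maximizer $(\widehat c,\widehat G)$ of $U(c,G)-C(G-G_0)$ over $L$ with $(\widehat c,\widehat G)\geq(c_0,G_0)$.

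For the upper bounds, $\widehat G\leq_1\bar G$ is immediate since $\bar G$ is the greatest element of $\mathcal G$. For $\widehat c\leq\bar c$, note that $\widetilde C$ is independent of its first block, so $\widehat c\in\argmax_{c\in[0,w]}U(c,\widehat G)$; I would then apply \Cref{theorem:basic} with action $c\in[0,w]$, parameter $G\in(\mathcal G,\leq_1)$, objective $U$ (quasi-supermodular in $c$ trivially, single-crossing differences in $(c,G)$ by \emph{(iii)}), zero adjustment cost, initial choice $\widehat c$ at parameter $\widehat G$, and new parameter $\bar G\geq_1\widehat G$, obtaining some $\bar c\in\argmax_{c\in[0,w]}U(c,\bar G)$ with $\bar c\geq\widehat c$. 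Chaining gives $\bar c\geq\widehat c\geq c_0$ and $\bar G\geq_1\widehat G\geq_1 G_0$ for this single triple, as required; nonemptiness is handled by the proviso, and $\argmax_{c}U(c,\bar G)\neq\varnothing$ in any case by continuity of $u_1,u_2$ and compactness of $[0,w]$.

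None of the steps involves a hard estimate; the points needing care are the bookkeeping of the FOSD lattice — $\leq_1$ reverses the pointwise order on CDFs, so ``more optimistic'' is ``higher'' and $\vee_1,\wedge_1$ are the pointwise $\min,\max$ of CDFs — and the choice of the restricted initial set $\underline L$. That set must be a $\leq_1$-down-set, so that $(c_0,G_0)$ maximizes $U$ over it using only monotonicity of $U$ in $G$, yet large enough that $\bar L\geq_{\text{ss}}\underline L$; the naive choice $\underline L=[0,w]\times\{G_0\}$ fails the latter unless $G_0$ is the least element of $\mathcal G$, which is why the down-set is the correct object.
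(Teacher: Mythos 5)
Your proof is correct, and it reaches the conclusion by a route that is recognizably close to, but not identical with, the paper's. Both arguments rest on the same two pillars: joint supermodularity of $U$ on $[0,w]\times\mathcal{G}$, and an application of \hyperref[theorem:basic-constraint]{\Cref*{theorem:basic}$^*$} in which the initial constraint set is the FOSD down-set of $G_0$ (your closing remark about why the down-set, rather than $[0,w]\times\{G_0\}$, is the right choice is exactly the point that makes the strong-set-order comparison work). The differences are in the decomposition. First, you obtain supermodularity of $U$ from modularity of $U(c,\cdot)$ on the FOSD lattice plus increasing differences in $(c,G)$ plus Topkis's pairwise characterization on a product of lattices, whereas the paper proves the same fact (\Cref{claim:wishful_spm}) by a direct summation-by-parts computation; your route is arguably cleaner and makes the role of concavity of $u_2$ more transparent (it enters only through increasing differences). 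Second, you apply \hyperref[theorem:basic-constraint]{\Cref*{theorem:basic}$^*$} once, to the \emph{joint} problem on the product lattice with the degenerate cost $\widetilde C(\varepsilon_c,\varepsilon_G)=C(\varepsilon_G)$, which delivers $(\widehat c,\widehat G)\geq(c_0,G_0)$ in one stroke; the paper instead marginalizes out $c$ to form the value function $F(G)=\max_c U(c,G)$, invokes preservation of supermodularity under maximization (Topkis, Theorem~2.7.6) to apply \hyperref[theorem:basic-constraint]{\Cref*{theorem:basic}$^*$} on the $\mathcal{G}$-lattice alone, and then recovers $\widehat c$ and $\bar c$ by two applications of the basic comparative-statics result. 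Your version dispenses with the value-function preservation lemma at the cost of carrying the full product lattice through the argument; the paper's version keeps the belief problem one object at a time, which mirrors the economic narrative. The remaining step ($\bar c\geq\widehat c$ via monotone comparative statics in $c$ with parameter $G$ and zero cost) is common to both.
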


The proof hinges on \hyperref[theorem:basic-constraint]{\Cref*{theorem:basic}$^*$}. It is important here that \hyperref[theorem:basic-constraint]{\Cref*{theorem:basic}$^*$} demands only \emph{minimal} monotonicity, because Caplin and Leahy's (\citeyear{CaplinLeahy2019}) Kull\-back--Leibler functional form does \emph{not} satisfy full-blown monotonicity.%
\footnote{Let $\mathcal{Y} = \{1,2,3\}$, and consider beliefs $G_0,G,H$ given by $(G_0(1),G_0(2),G_0(3)) = (\frac{1}{3},\frac{2}{3},1)$, $(G(1),G(2),G(3)) = (\frac{1}{4},\frac{1}{4},1)$ and $(H(1),H(2),H(3)) = (\frac{1}{8},\frac{1}{4},1)$. Then $G_0 \leq_1 G \leq_1 H$, but $C(G-G_0) - C(H-G_0) = \frac{1}{4} \ln(\frac{3}{4}) - 2 \times \frac{1}{8} \ln(\frac{3}{8}) = \frac{1}{4} \ln 2 >0$.}

\begin{proof}[\normalfont\bfseries Proof] Observe that $[0,w] \times \mathcal{G}$ is a sublattice.

\begin{claim}
\label{claim:wishful_spm}
$U$ is supermodular: $U(c,G) - U(c \wedge c', G \wedge_1 H) \leq U(c \vee c', G \vee_1 H) - U(c',H)$ for any $c,c' \in [0,w]$ and any beliefs $G,H \in \mathcal{G}$.
\end{claim}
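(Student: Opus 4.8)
The plan is to decompose $U(c,G)=u_1(c)+W(c,G)$, where $W(c,G)=\int_{\mathcal Y}u_2\bigl((1+r)(w-c)+y\bigr)\,G(\mathrm dy)$, and prove that $W$ is supermodular on the sublattice $[0,w]\times\mathcal G$. Since $u_1(c)$ depends only on the coordinate $c$ and $[0,w]$ is a chain, the $u_1$ term is modular on $[0,w]\times\mathcal G$, so adding it back preserves supermodularity. Because $[0,w]$ is a chain, it suffices to verify two properties of $W$: \emph{(i)} for each fixed $c$, the identity $W(c,G\vee_1 H)+W(c,G\wedge_1 H)=W(c,G)+W(c,H)$ holds for all $G,H\in\mathcal G$; and \emph{(ii)} $W$ has increasing differences in $(c,G)$, i.e.\ $W(c',\,\cdot\,)-W(c,\,\cdot\,)$ is $\leq_1$-increasing whenever $c\leq c'$. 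Granting these, fix $c\leq c'$ (without loss of generality, as the supermodularity inequality is symmetric in its two arguments and $[0,w]$ is totally ordered) and beliefs $G,H\in\mathcal G$; substituting \emph{(i)} into $W(c',G\vee_1 H)+W(c,G\wedge_1 H)$ reduces the desired inequality $W(c',G\vee_1 H)+W(c,G\wedge_1 H)\geq W(c,G)+W(c',H)$ to $W(c',G)-W(c,G)\geq W(c',G\wedge_1 H)-W(c,G\wedge_1 H)$, which is exactly \emph{(ii)} evaluated at $G\wedge_1 H\leq_1 G$ (and the $u_1(c)+u_1(c')$ contributions to both sides cancel).

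For \emph{(i)}: the lattice operations on $\mathcal G$ act pointwise on CDFs, with $(G\vee_1 H)(y)=\min\{G(y),H(y)\}$ and $(G\wedge_1 H)(y)=\max\{G(y),H(y)\}$, so $(G\vee_1 H)(y)+(G\wedge_1 H)(y)=G(y)+H(y)$ at every $y\in\mathcal Y$; hence $G\vee_1 H$ and $G\wedge_1 H$, viewed as measures on the finite set $\mathcal Y$, sum to $G+H$. As $G\mapsto W(c,G)$ is linear in the measure, \emph{(i)} follows immediately.

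For \emph{(ii)}: for $c\leq c'$ we have $W(c',G)-W(c,G)=\int_{\mathcal Y}\phi(y)\,G(\mathrm dy)$ with $\phi(y)=u_2\bigl((1+r)(w-c')+y\bigr)-u_2\bigl((1+r)(w-c)+y\bigr)$. Writing $\delta=(1+r)(c'-c)\geq 0$ and $t=(1+r)(w-c')+y$, we get $\phi(y)=-\bigl[u_2(t+\delta)-u_2(t)\bigr]$, and concavity of $u_2$ makes $t\mapsto u_2(t+\delta)-u_2(t)$ nonincreasing (the ``decreasing increments'' property of concave functions, which needs no differentiability); hence $\phi$ is nondecreasing in $y$. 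Since $G\leq_1 H$ is precisely first-order stochastic dominance of $H$ over $G$, integrating the increasing function $\phi$ yields $\int_{\mathcal Y}\phi(y)\,G(\mathrm dy)\leq\int_{\mathcal Y}\phi(y)\,H(\mathrm dy)$, which is the asserted monotonicity.

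I expect the only subtleties to be bookkeeping: keeping straight that the optimism order $\leq_1$ on $\mathcal G$ runs \emph{opposite} to the pointwise order on CDF values (so $\wedge_1$ is a pointwise maximum and $\vee_1$ a pointwise minimum), and extracting monotonicity of $\phi$ from concavity of $u_2$ alone. Neither is genuinely hard, so the claim reduces to the two short arguments above, combined via the reduction step indicated at the outset.
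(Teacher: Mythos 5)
Your proof is correct, and it takes a genuinely different route from the paper's. The paper argues by one direct computation: it rewrites $U(c,G)$ via summation by parts as $\phi(c,y_N)+\sum_{n=1}^{N-1}\bigl[\phi(c,y_{n+1})-\phi(c,y_n)\bigr]\bigl[-G(y_n)\bigr]$ with $\phi(c,y)=u_1(c)+u_2\bigl((1+r)(w-c)+y\bigr)$, and then verifies the four-point inequality term by term using the supermodularity of $\phi$ (from concavity of $u_2$) together with $H\leq_1 G\vee_1 H$. You instead factor the problem through the product-lattice structure: since $[0,w]$ is a chain, supermodularity of $W$ reduces to supermodularity of $W(c,\cdot)$ on $\mathcal G$ plus increasing differences in $(c,G)$. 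You get the first for free in the stronger form of modularity, because $(G\vee_1 H)+(G\wedge_1 H)=G+H$ as measures on the finite set $\mathcal Y$ and $W(c,\cdot)$ is linear in the measure; the second is exactly the monotonicity in $y$ of the increment $u_2\bigl((1+r)(w-c')+y\bigr)-u_2\bigl((1+r)(w-c)+y\bigr)$, which is the same concavity fact the paper uses. Your explicit reduction (WLOG $c\leq c'$ by symmetry, substitute the modularity identity, land on increasing differences evaluated at $G\wedge_1 H\leq_1 G$) is valid and self-contained, so no citation to a product-lattice theorem is needed. The trade-off is that your version isolates the two ingredients (linearity in the measure; complementarity of $c$ and $y$) more cleanly and avoids the Abel-summation bookkeeping, while the paper's single computation needs no structural lemma about products of a chain with a lattice. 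You also correctly handle the one real trap here, namely that $\vee_1$ and $\wedge_1$ act as pointwise minimum and maximum, respectively, on CDF values.
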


The proof of \Cref{claim:wishful_spm} (\cref{sec:appendix:pf_wishful_spm}) turns on the concavity of $u_2$.

Assume that the argmaxes are nonempty. By \Cref{claim:wishful_spm}, $U(c,G)$ has increasing differences in $(c,G)$, i.e. $c \mapsto U(c,H) - U(c,G)$ is increasing if $G \leq_1 H$. Hence by the basic comparative-statics result \parencite[see][Theorem~4]{MilgromShannon1994}, we may choose a $\bar c \in \argmax_{c \in [0,w]} U\left( c, \bar G \right)$ such that $\bar c \geq c_0$. By the basic result again, if $\bar G \geq_1 \widehat G \geq_1 G_0$ then $\bar c \geq \widehat c \geq c_0$ for some
\begin{equation*}
\widehat c 
\in \argmax_{c \in [0,w]} U\left(c,\widehat G\right)
= \argmax_{c \in [0,w]}
\left[ U\left(c,\widehat G\right) - C\left(\widehat G-G_0\right) \right] .
\end{equation*}
It remains only to show that $\bar G \geq_1 \widehat G \geq_1 G_0$ for some
\begin{equation*}
\widehat G \in \argmax_{G \in \mathcal{G}} \left[ F(G) - C(G-G_0) \right] ,
\end{equation*}
where $F : \mathcal{G} \to \R$ is given by $F(G) = \max_{c \in [0,w]} U(c,G)$ for each $G \in \mathcal{G}$.

Note that $F$ is increasing: $F(G) \leq F(H)$ whenever $G \leq_1 H$. Hence $G_0 \in \argmax_{G \in \mathcal{G}_0} F(G)$, where $\mathcal{G}_0 = \left\{ G \in \mathcal{G} : G \leq_1 G_0 \right\}$. \Cref{claim:wishful_spm} implies that $F$ is supermodular \parencite[see][Theorem~2.7.6]{Topkis1998}. $C$ is minimally monotone, and clearly $\mathcal{G} \geq_{\text{ss}} \mathcal{G}_0$. Hence by \hyperref[theorem:basic-constraint]{\Cref*{theorem:basic}$^*$}, there is a 
\begin{equation*}
\widehat G \in \argmax_{G \in \mathcal{G}} \left[ F(G) - C(G-G_0) \right] 
\end{equation*}
such that $\widehat G \geq_1 G_0$. We have $\bar G \geq_1 \widehat G$ by definition of $\bar G$.
\end{proof}

%%%%%%%%%%%%%%%%%%%%%%%%%%%%%%%%%%%
%%%%%%%%%%%%%%%%%%%%%%%%%%%%%%%%%%%
\section{The Le Chatelier principle}
\label{sec:lechatelier}
%%%%%%%%%%%%%%%%%%%%%%%%%%%%%%%%%%%
%%%%%%%%%%%%%%%%%%%%%%%%%%%%%%%%%%%

The Le Chatelier principle asserts that an agent will adjust less (in every dimension) if subjected to an adjustment friction. A common interpretation equates frictional adjustment with the ``short run'' and frictionless adjustment with the ``long run,'' making the principle a claim about how the action response to parameter changes varies with the horizon.

In this section, we show that the Le Chatelier principle is far more general than previously claimed: it arises whenever the friction takes the form of a monotone adjustment cost. The classic formalization, which models friction as a constraint whereby some dimensions cannot be adjusted at all, is the special case in which each dimension has an adjustment cost that is either prohibitively high or equal to zero.

We shall compare the agent's response to a shift of the parameter from $\underline \theta$ to $\bar \theta$ in two cases: the case in which adjustment is costly, so that the agent's choice $\widehat x$ maximizes $G(\cdot,\bar \theta)$, and the case in which adjustment is costless, so the agent chooses a frictionless optimum $\bar x \in \argmax_{x \in L} F(x,\bar \theta)$. We call these cases ``short run'' and ``long run,'' respectively.%
    \footnote{Even in settings with richer dynamics, comparing adjustment between these two cases can be useful: we give an infinite-horizon example in \Cref{remark:dynamic_initial_static} below (\cref{sec:dynamic:setting}).}

Recall from \cref{sec:setting:cost_assns} the definition of a \emph{monotone} cost function $C$.

\begin{theorem}[Le Chatelier principle] \label{theorem:lechatelier}
Suppose that the objective $F(x,\theta)$ is quasi-supermodular in $x$ and has single-crossing differences in $(x,\theta)$, and that the adjustment cost $C$ is monotone. Fix $\bar \theta \geq \underline \theta$, and let $\bar x\in \argmax_{x\in L} F(x,\bar \theta)$ satisfy $\bar x\geq \underline x$.%
\footnote{\label{footnote:LC_frictionless}Such an $\bar x$ must exist, provided the argmax is nonempty. This follows from the basic comparative-statics result \parencite[see][Theorem~4]{MilgromShannon1994}.}
Then
\begin{itemize}
\item $\bar x \geq \widehat x \geq \underline x$ for some $\widehat x\in \argmax_{x\in L} G(x,\bar \theta)$, provided the argmax is nonempty, and
\item if $\bar x$ is the largest element of $\argmax_{x\in L} F(x,\bar \theta)$, then $\bar x \geq \widehat x$ for any $\widehat x\in \argmax_{x\in L} G(x,\bar \theta)$.
\end{itemize}
\end{theorem}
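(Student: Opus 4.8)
The plan is to build on \Cref{theorem:basic}. Since the argmax of $G(\cdot,\bar\theta)$ is assumed nonempty, that theorem already furnishes some $x' \in \argmax_{x\in L} G(x,\bar\theta)$ with $x' \geq \underline x$. For the first bullet I would show that $\widehat x := x' \wedge \bar x$ also lies in $\argmax_{x\in L}G(x,\bar\theta)$; since $x' \geq \underline x$ and $\bar x \geq \underline x$ give $\underline x \leq \widehat x \leq \bar x$, this is exactly the desired conclusion.

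Two things need checking. First, the cost: coordinatewise, $\widehat x - \underline x$ agrees with $x' - \underline x$ in those dimensions $i$ with $x'_i \leq \bar x_i$, and equals $\bar x_i - \underline x_i \in [0,\, x'_i - \underline x_i]$ in the remaining dimensions (using $\underline x \leq \bar x < x'$ there). Hence $\widehat x - \underline x$ lies ``between $0$ and $x'-\underline x$'' in the sense of the monotonicity definition, so \emph{monotonicity} of $C$ — this is where the theorem's strengthened cost hypothesis enters — gives $C(\widehat x - \underline x) \leq C(x' - \underline x)$. Second, the objective: since $\bar x$ maximizes $F(\cdot,\bar\theta)$ we have $F(x'\vee\bar x,\bar\theta) \leq F(\bar x,\bar\theta)$, and feeding this into quasi-supermodularity of $F(\cdot,\bar\theta)$ in contrapositive form (with the roles $x'$ and $\bar x$) yields $F(x',\bar\theta) \leq F(x'\wedge\bar x,\bar\theta) = F(\widehat x,\bar\theta)$. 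Combining the two inequalities gives $G(\widehat x,\bar\theta) \geq G(x',\bar\theta)$, so $\widehat x$ is optimal.

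For the second bullet, suppose $\bar x$ is the largest maximizer of $F(\cdot,\bar\theta)$ and let $\widehat x$ be \emph{any} element of $\argmax_{x\in L}G(x,\bar\theta)$; I would argue by contradiction that $\widehat x \leq \bar x$. If $\widehat x \not\leq \bar x$, then $\widehat x\vee\bar x > \bar x$, so $\widehat x\vee\bar x$ is not a maximizer of $F(\cdot,\bar\theta)$ and hence $F(\widehat x\vee\bar x,\bar\theta) < F(\bar x,\bar\theta)$ \emph{strictly}. Quasi-supermodularity (now using the implication $F(\widehat x,\bar\theta) - F(\widehat x\wedge\bar x,\bar\theta) \geq 0 \Rightarrow F(\widehat x\vee\bar x,\bar\theta) - F(\bar x,\bar\theta) \geq 0$, contrapositive) then gives $F(\widehat x,\bar\theta) < F(\widehat x\wedge\bar x,\bar\theta)$. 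The same coordinatewise argument as before — in each coordinate $\widehat x\wedge\bar x - \underline x$ is either unchanged from $\widehat x - \underline x$ or equals $\bar x_i - \underline x_i \in [0,\, \widehat x_i - \underline x_i]$ — shows via monotonicity of $C$ that $C(\widehat x\wedge\bar x - \underline x) \leq C(\widehat x - \underline x)$. Hence $G(\widehat x\wedge\bar x,\bar\theta) > G(\widehat x,\bar\theta)$, contradicting optimality of $\widehat x$.

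The main obstacle is getting the direction of quasi-supermodularity right: the useful fact is not that $F(\cdot,\bar\theta)$ prefers the join, but its contrapositive — if the join is (weakly, resp. strictly) worse than $\bar x$, then $x'$ (resp. $\widehat x$) is (weakly, resp. strictly) worse than the meet with $\bar x$ — which is precisely what converts ``$\bar x$ is a (the largest) maximizer'' into the needed comparison of $F$ at the meet. The only other point requiring care is that the cost comparison genuinely needs full monotonicity, not merely the minimal monotonicity of \Cref{theorem:basic}, since $\widehat x - \underline x$ is obtained from $x' - \underline x$ by shrinking individual coordinates toward $0$ rather than by zeroing out all positive (or all negative) coordinates simultaneously.
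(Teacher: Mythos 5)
Your proposal is correct and follows essentially the same route as the paper's proof: the first bullet modifies the \Cref{theorem:basic} maximizer $x'$ to $x'\wedge\bar x$ and uses full monotonicity of $C$ for the coordinatewise cost comparison together with the contrapositive of quasi-supermodularity, exactly as in the text. Your second bullet is just the contradiction (contrapositive) phrasing of the paper's direct argument, which instead deduces $F(\bar x\vee\widehat x,\bar\theta)\geq F(\bar x,\bar\theta)$ and invokes that $\bar x$ is the largest maximizer; the ingredients are identical.
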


\Cref{theorem:lechatelier} nests the Le Chatelier principle of \textcite{MilgromRoberts1996}, in which it is assumed that only some dimensions $x_i$ of the choice variable can be adjusted in the short run, and that such adjustments are costless. This is the special case of our model in which $C(\varepsilon) = \sum_{i=1}^n C_i(\varepsilon_i)$, where some dimensions $i$ have $C_i \equiv 0$, and the other dimensions $i$ have $C_i(\varepsilon_i) = \infty$ for every $\varepsilon_i \neq 0$.

Like \Cref{theorem:basic}, \Cref{theorem:lechatelier} requires $F$ only to satisfy ordinal complementarity properties, not cardinal ones. This greatly extends its applicability, allowing it to be used to study pricing, for example (see \cref{sec:lechatelier:pricing} below).

\begin{proof}[\normalfont\bfseries Proof]
For the first part, assume that $\argmax_{x\in L} G(x,\bar \theta)$ is nonempty. By \Cref{theorem:basic}, we may choose an $x' \in \argmax_{x\in L} G(x,\bar \theta)$ such that $x'\geq \underline x$. We claim that $\widehat x = \bar x\wedge x'$ also maximizes $G(\cdot,\bar \theta)$; this suffices since $\bar x \geq \widehat x \geq \underline x$. We have $F(\bar x\vee x',\bar \theta) \leq F(\bar x,\bar \theta)$ by definition of $\bar x$, which by quasi-supermodularity implies that $F(x',\bar \theta) \leq F(\bar x\wedge x',\bar \theta)$. Since $C$ is monotone and $x' \geq \bar x \wedge x' \geq \underline x$, we have $C(x'-\underline x)\geq C(\bar x\wedge x'-\underline x)$. Thus
\begin{equation*}
G(x',\bar \theta)
= F(x',\bar \theta)-C(x'-\underline x)
\leq F(\bar x\wedge x',\bar \theta)-C(\bar x\wedge x'-\underline x)
= G(\widehat x,\bar \theta) ,
\end{equation*}
which since $x'$ maximizes $G(\cdot,\bar \theta)$ on $L$ implies that $\widehat x$ does, too.

For the second part, let $\bar x$ be the largest element of $\argmax_{x\in L} F(x,\bar \theta)$, and let $\widehat x\in \argmax_{x\in L} G(x,\bar \theta)$; we will show that $\bar x \geq \widehat x$. The optimality of $\widehat x$ implies that $G(\widehat x,\bar \theta) \geq G( \bar x\wedge \widehat x,\bar \theta)$. It furthermore holds that $C(\widehat x-\underline x)\geq C(\bar x\wedge \widehat x-\underline x)$, by the monotonicity of $C$ and the fact that in each dimension $i$, either $\bar x_i \leq \widehat x_i$ so $0 \leq ( \bar x\wedge \widehat x - \underline x )_i \leq (\widehat x - \underline x )_i$, or $\bar x_i > \widehat x_i$ in which case $( \bar x\wedge \widehat x-\underline x )_i = (\widehat x-\underline x )_i$. Hence $F(\widehat x,\bar \theta) \geq F(\bar x\wedge \widehat x,\bar \theta)$, which implies $F(\bar x\vee \widehat x,\bar \theta) \geq F(\bar x,\bar \theta)$ by quasi-supermodularity. Since $\bar x$ is the largest maximizer of $F(\cdot,\bar \theta)$, it follows that $\bar x \geq \bar x\vee \widehat x$, which is to say that $\bar x \geq \widehat x$.
\end{proof}

The monotonicity assumption in \Cref{theorem:lechatelier} cannot be dropped: if the cost $C$ were merely minimally monotone, then $\bar x \geq \widehat x$ would not necessarily hold.%
    \footnote{For example, if $L = \R$, $F(x,\underline \theta) = -x^2$, $F(x,\bar \theta) = -(x-2)^2$, and $C(\varepsilon) = \infty$ if $0 < \varepsilon < 3$ and $C(\varepsilon) = 0$ otherwise, then $G(\cdot,\bar \theta)$ is uniquely maximized by $\widehat x=3$, and $\bar x = 2 < \widehat x$.}
Monotonicity is not quite necessary, however: a somewhat weaker property is necessary and sufficient for the Le Chatelier principle to hold whatever the objective $F$, as we show in \cref{sec:necessity} below.

%%%%%%%%%%%%%%%%%%%%%%%%%%%%%%%%%%%
\subsection{Extensions}
\label{sec:lechatelier:extensions}
%%%%%%%%%%%%%%%%%%%%%%%%%%%%%%%%%%%

\Cref{theorem:lechatelier} remains true if adjustment is costly also in the long run: that is, if in addition to the short-run cost $C_1(x_1-\underline x)$ of moving from the initial choice $\underline x$ to her short-run choice $x_1$, the agent incurs a further cost $C_2(x_2-x_1)$ of moving from her short-run choice $x_1$ to her long-run choice $x_2$.

\begin{proposition}\label{proposition:lechatelier-medium}
Suppose that the objective $F(x,\theta)$ is quasi-supermodular in $x$ and has single-crossing differences in $(x,\theta)$, and that the adjustment costs $C_1$ and $C_2$ are monotone. If $\bar \theta \geq \underline \theta$, then $x_2 \geq x_1 \geq \underline x$ for some $x_1 \in \argmax_{x \in L} [ F(x,\bar \theta) - C_1(x-\underline x) ]$ and $x_2 \in \argmax_{x \in L} [ F(x,\bar \theta) - C_2(x-x_1) ]$, provided the argmaxes are nonempty.%
\footnote{In fact, $x_1$ and $x_2$ may be chosen so that $\bar x \geq x_2 \geq x_1 \geq \underline x$ holds for any $\bar x \in \argmax_{x \in L} F(x,\bar \theta)$ that satisfies $\bar x \geq \underline x$.}
\end{proposition}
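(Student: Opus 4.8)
The plan is to lean on \Cref{theorem:basic} for the first stage and then run a lattice argument for the second stage closely paralleling the proof of \Cref{theorem:lechatelier}. Write $G_1(x) = F(x,\bar\theta) - C_1(x-\underline x)$ and, for a given $x_1$, $G_2(x) = F(x,\bar\theta) - C_2(x-x_1)$. Since a monotone $C_1$ is in particular minimally monotone, \Cref{theorem:basic} furnishes some $x_1 \in \argmax_{x\in L} G_1$ with $x_1 \geq \underline x$. All that remains is to show that, for this $x_1$, the set $\argmax_{x\in L} G_2$ contains some $x_2 \geq x_1$ (it is then automatic that $x_2 \geq x_1 \geq \underline x$).

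The obstacle is that $x_1$ is not a frictionless maximizer of $F(\cdot,\theta)$ at any parameter, so \Cref{theorem:basic} does not apply to the second stage; instead I would argue by hand in two moves. \emph{Move one: lift an arbitrary second-stage maximizer above $\underline x$.} Take any $x_2' \in \argmax_{x\in L} G_2$ and show $x_2'\vee\underline x$ is also in $\argmax_{x\in L} G_2$. For the cost, since $x_1 \geq \underline x$ the vector $x_2'\vee\underline x - x_1$ is coordinatewise ``between $0$ and $x_2'-x_1$'', so monotonicity of $C_2$ gives $C_2(x_2'\vee\underline x - x_1) \leq C_2(x_2'-x_1)$. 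For $F$, optimality of $\underline x$ at $\underline\theta$ gives $F(\underline x,\underline\theta) \geq F(\underline x\wedge x_2',\underline\theta)$, hence $F(\underline x,\bar\theta) \geq F(\underline x\wedge x_2',\bar\theta)$ by single-crossing differences, hence $F(\underline x\vee x_2',\bar\theta) \geq F(x_2',\bar\theta)$ by quasi-supermodularity. So we may assume $x_2' \geq \underline x$. \emph{Move two: join with $x_1$.} Show $x_2 := x_1\vee x_2' \in \argmax_{x\in L} G_2$. The cost inequality $C_2\bigl((x_2'-x_1)\vee 0\bigr) \leq C_2(x_2'-x_1)$ is again immediate from monotonicity of $C_2$. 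For $F$, the key inequality is $F(x_1,\bar\theta) \geq F(x_1\wedge x_2',\bar\theta)$: optimality of $x_1$ for $G_1$ gives $F(x_1,\bar\theta) - C_1(x_1-\underline x) \geq F(x_1\wedge x_2',\bar\theta) - C_1(x_1\wedge x_2' - \underline x)$, and because $0 \leq x_1\wedge x_2' - \underline x \leq x_1-\underline x$ (using $x_2'\geq\underline x$), monotonicity of $C_1$ lets us drop the cost terms; quasi-supermodularity then yields $F(x_1\vee x_2',\bar\theta) \geq F(x_2',\bar\theta)$, finishing move two and hence the displayed statement.

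For the footnote refinement, fix $\bar x \in \argmax_{x\in L} F(\cdot,\bar\theta)$ with $\bar x \geq \underline x$. I would replace the opening appeal to \Cref{theorem:basic} by the first bullet of \Cref{theorem:lechatelier} (valid since $C_1$ is monotone), obtaining $x_1$ with $\bar x \geq x_1 \geq \underline x$; moves one and two then deliver $x_2 \geq x_1$ verbatim. One further le-Chatelier-style step completes the chain: $\bar x\wedge x_2 \in \argmax_{x\in L} G_2$, since $x_1 \leq \bar x\wedge x_2 \leq x_2$ makes $C_2(\bar x\wedge x_2 - x_1) \leq C_2(x_2-x_1)$ by monotonicity of $C_2$, while $F(\bar x,\bar\theta) \geq F(\bar x\vee x_2,\bar\theta)$ (optimality of $\bar x$) gives $F(\bar x\wedge x_2,\bar\theta) \geq F(x_2,\bar\theta)$ by quasi-supermodularity. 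Replacing $x_2$ by $\bar x\wedge x_2$ (still $\geq x_1$) yields $\bar x \geq x_2 \geq x_1 \geq \underline x$.

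The main obstacle, as flagged, is precisely the second-stage comparative statics: the anchor $x_1$ lacks the ``optimality at a lower parameter'' property that \Cref{theorem:basic} presumes, so one must first lift an arbitrary second-stage maximizer above $\underline x$ and then exploit the first-stage optimality of $x_1$ together with monotonicity of $C_1$ to neutralize the cost terms in the join step. Beyond that, the argument is routine lattice manipulation of the kind in the proofs of \Cref{theorem:basic,theorem:lechatelier}.
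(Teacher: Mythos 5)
Your proof is correct. It differs from the paper's in presentation: the paper derives \Cref{proposition:lechatelier-medium} in one line as a special case of \Cref{theorem:lechatelier-dynamic-myopic} (set $\theta_t=\bar\theta$ for all $t$ and make $C_t$ prohibitive for $t\geq 3$), whereas you give a self-contained two-period argument. The mathematical content is nonetheless the two-period instance of that theorem's induction: your ``move one'' is the paper's \Cref{claim:lc-myopic_1} with $x_{T-1}=x_1\geq\underline x$ (the cost step being \Cref{corollary:mon_impl}), and your ``move two'' --- using the first-stage optimality of $x_1$ together with $C_1(x_1\wedge x_2'-\underline x)\leq C_1(x_1-\underline x)$ to extract $F(x_1,\bar\theta)\geq F(x_1\wedge x_2',\bar\theta)$ and then join --- is exactly the induction step in the second part of the proof of \Cref{theorem:lechatelier-dynamic-myopic}. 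You correctly identified the one genuinely delicate point, namely that the join step needs $x_2'\geq\underline x$ so that $x_1\wedge x_2'-\underline x$ lies between $0$ and $x_1-\underline x$, which is why the lift above $\underline x$ must come first. Your treatment of the footnote (anchor $x_1$ below $\bar x$ via \Cref{theorem:lechatelier}, then meet $x_2$ with $\bar x$) likewise mirrors \Cref{claim:lc-myopic_2}. What your route buys is a direct proof showing the two-stage result needs none of the infinite-horizon machinery; what the paper's route buys is brevity once \Cref{theorem:lechatelier-dynamic-myopic} is in hand, plus the generalization to arbitrarily many periods and time-varying parameters for free.
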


To interpret this result, note that when long-run adjustment is costly, it matters whether or not the agent is forward-looking when making her short-term choice $x_1$, because $x_1$ now enters her long-run payoff $F(x_2,\bar \theta) - C_2(x_2-x_1)$. \Cref{proposition:lechatelier-medium} describes an agent who is myopic, taking no account of the long-run implications of her short-run choice $x_1$. Forward-looking behavior is studied in \cref{sec:dynamic} below. In \cref{sec:myopic}, we revisit myopic behavior, proving a general result (\Cref{theorem:lechatelier-dynamic-myopic}) of which \Cref{proposition:lechatelier-medium} is a special case.

\Cref{theorem:lechatelier} also has a ``$\forall$'' counterpart. Say that the cost function $C$ is \emph{strictly monotone} if and only if $C(\varepsilon') < C(\varepsilon)$ holds whenever $\varepsilon' \neq \varepsilon$ and $\varepsilon'$ is ``between $0$ and $\varepsilon$'' in the sense that in each dimension $i$, either $0 \leq \varepsilon'_i \leq \varepsilon_i$ or $0 \geq \varepsilon'_i \geq \varepsilon_i$. Strict monotonicity implies monotonicity and strict minimal monotonicity.

\begin{proposition} \label{proposition:lechatelier-strict}
Suppose that the objective $F(x,\theta)$ is quasi-supermodular in $x$ and has single-crossing differences in $(x,\theta)$, and that the cost $C$ is \underline{strictly} monotone.
Fix $\bar \theta \geq \underline \theta$, and let $\bar x\in \argmax_{x\in L} F(x,\bar \theta)$ satisfy $\bar x\geq \underline x$.%
\fnsuchan{ }%
Then $\bar x \geq \widehat x \geq \underline x$ for \underline{any} $\widehat x\in \argmax_{x\in L} G(x,\bar \theta)$.
\end{proposition}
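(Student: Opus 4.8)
The plan is to establish the two bounds $\underline x \leq \widehat x$ and $\widehat x \leq \bar x$ separately, each by contradiction, by threading strict inequalities through the proof of \Cref{theorem:lechatelier}. Since strict monotonicity of $C$ implies both monotonicity and strict minimal monotonicity, the lower bound follows by the argument of \Cref{proposition:basic-strict}\ref{item:strict-reduction}: fix any $\widehat x \in \argmax_{x \in L} G(x,\bar\theta)$ and suppose toward a contradiction that $\widehat x \ngeq \underline x$. Optimality of $\underline x$ gives $F(\underline x,\underline\theta) \geq F(\underline x \wedge \widehat x,\underline\theta)$; quasi-supermodularity then yields $F(\underline x \vee \widehat x,\underline\theta) \geq F(\widehat x,\underline\theta)$, and single-crossing differences (applied with $\underline\theta \leq \bar\theta$) upgrades this to $F(\underline x \vee \widehat x,\bar\theta) \geq F(\widehat x,\bar\theta)$. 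Writing $\varepsilon = \widehat x - \underline x$, we have $\varepsilon \ngeq 0$, so strict minimal monotonicity gives $C(\varepsilon \vee 0) < C(\varepsilon)$, that is, $C(\underline x \vee \widehat x - \underline x) < C(\widehat x - \underline x)$; the right-hand side is finite, since otherwise $\widehat x$ could not be optimal (recall $C(0) < \infty$). Adding the two inequalities gives $G(\underline x \vee \widehat x,\bar\theta) > G(\widehat x,\bar\theta)$, contradicting the optimality of $\widehat x$. Nothing in this step uses $\bar\theta \neq \underline\theta$, so it applies for every $\bar\theta \geq \underline\theta$.

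For the upper bound $\widehat x \leq \bar x$, I would adapt the second half of the proof of \Cref{theorem:lechatelier}, replacing its appeal to ``$\bar x$ is the largest maximizer of $F(\cdot,\bar\theta)$'' by an appeal to strict monotonicity of $C$. Fix $\widehat x \in \argmax_{x \in L} G(x,\bar\theta)$ and suppose $\widehat x \nleq \bar x$. Optimality of $\widehat x$ gives $G(\widehat x,\bar\theta) \geq G(\bar x \wedge \widehat x,\bar\theta)$. The crucial observation is that the adjustment vector $\bar x \wedge \widehat x - \underline x$ lies ``between $0$ and $\widehat x - \underline x$'' and is distinct from it: in any coordinate $i$ with $\bar x_i \geq \widehat x_i$ the two vectors agree, whereas in any coordinate $i$ with $\bar x_i < \widehat x_i$ the hypothesis $\underline x \leq \bar x$ gives $\underline x_i \leq \bar x_i < \widehat x_i$, so $0 \leq (\bar x \wedge \widehat x - \underline x)_i < (\widehat x - \underline x)_i$; and $\widehat x \nleq \bar x$ forces at least one coordinate of the second kind. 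Strict monotonicity then yields $C(\bar x \wedge \widehat x - \underline x) < C(\widehat x - \underline x) < \infty$. Cancelling this finite cost term in $G(\widehat x,\bar\theta) \geq G(\bar x \wedge \widehat x,\bar\theta)$ leaves $F(\widehat x,\bar\theta) > F(\bar x \wedge \widehat x,\bar\theta)$, whence quasi-supermodularity of $F(\cdot,\bar\theta)$ gives $F(\bar x \vee \widehat x,\bar\theta) > F(\bar x,\bar\theta)$ --- contradicting $\bar x \in \argmax_{x \in L} F(x,\bar\theta)$. Hence $\widehat x \leq \bar x$, and combined with the lower bound this proves the proposition.

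I expect the proof to be short given \Cref{theorem:lechatelier}, with the only delicate point being the coordinatewise bookkeeping in the upper-bound step: one must verify both that passing from $\widehat x$ to $\bar x \wedge \widehat x$ shifts every coordinate of the adjustment vector (weakly) toward zero --- so that monotonicity is applicable --- and that it does so strictly in at least one coordinate, so that \emph{strict} monotonicity bites. This is exactly where the hypothesis $\underline x \leq \bar x$ is used (without it, $(\bar x \wedge \widehat x - \underline x)_i$ need not be nonnegative in coordinates with $\bar x_i < \widehat x_i$), and where strictness cannot be dispensed with: under mere monotonicity one recovers only $F(\widehat x,\bar\theta) \geq F(\bar x \wedge \widehat x,\bar\theta)$ and hence $F(\bar x \vee \widehat x,\bar\theta) \geq F(\bar x,\bar\theta)$, which contradicts nothing unless $\bar x$ is the largest maximizer of $F(\cdot,\bar\theta)$ --- the setting of the weaker conclusion in \Cref{theorem:lechatelier}.
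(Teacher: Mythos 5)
Your proof is correct and follows essentially the same route as the paper's: the lower bound via the strict-minimal-monotonicity branch of \Cref{proposition:basic-strict}, and the upper bound by combining quasi-supermodularity with strict monotonicity applied to the pair of adjustment vectors $\widehat x - \underline x$ and $\bar x \wedge \widehat x - \underline x$ (with the same coordinatewise bookkeeping). The only immaterial difference is that you derive $F(\widehat x,\bar\theta) > F(\bar x \wedge \widehat x,\bar\theta)$ from the optimality of $\widehat x$ and then contradict the optimality of $\bar x$, whereas the paper runs the same two inequalities in the opposite order to contradict the optimality of $\widehat x$.
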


%%%%%%%%%%%%%%%%%%%%%%%%%%%%%%%%%%%
\subsection{Application to factor demand}
\label{sec:lechatelier:factor_demand}
%%%%%%%%%%%%%%%%%%%%%%%%%%%%%%%%%%%

Consider a stylized model of production, following \textcite{MilgromRoberts1996}. A firm uses capital $k$ and labor $\ell$ to produce output $f(k,\ell)$. Profit at real factor prices $(r,w)$ is $F(k,\ell,-w) = f(k,\ell) - r k - w \ell$. The adjustment cost $C$ is monotone, but otherwise unrestricted.

If the production function $f$ is supermodular, meaning that capital and labor are complements, then profit $F(k,\ell,-w)$ is supermodular in $x=(k,\ell)$. By inspection, the profit function $F(k,\ell,-w)$ has increasing differences in $(x,\theta) = ((k,\ell),-w)$. So by \Cref{theorem:lechatelier}, any drop in the wage $w$ precipitates a short-run increase of both $k$ and $\ell$, and a further increase in the long run.

If $f$ is instead submodular, meaning that capital and labor are substitutes in production, then we may apply \Cref{theorem:lechatelier} to the choice variable $(x_1,x_2) = (-k,\ell)$, since profit $F^\dag(x_1,x_2,-w) = f(-x_1,x_2) + r x_1 + (- w) x_2$ is then supermodular in $x=(x_1,x_2)$ and has increasing differences in $(x,-w)$.%
\footnote{This trick is due to \textcite{MilgromRoberts1996}.}
The conclusion is that $\ell$ still increases in the short run and further increases in the long run, whereas $k$ now \emph{decreases.}

\textcite{MilgromRoberts1996} were the first to use the theory of monotone comparative statics to obtain such a result. They assumed that labor adjustments are costless and that capital cannot be adjusted at all in the short run: in other words, $C(\varepsilon_k,\varepsilon_\ell) = C_k(\varepsilon_k) + C_\ell(\varepsilon_\ell)$, where $C_\ell \equiv 0$ and $C_k(\varepsilon_k)=\infty$ for every $\varepsilon_k \neq 0$. Our analysis reveals that much weaker assumptions suffice. It turns out not to matter whether labor is cheap to adjust relative to capital. What matters is, rather, that short-run adjustments are costly.

%%%%%%%%%%%%%%%%%%%%%%%%%%%%%%%%%%%
\subsection{Application to pricing}
\label{sec:lechatelier:pricing}
%%%%%%%%%%%%%%%%%%%%%%%%%%%%%%%%%%%

The central plank of new Keynesian macroeconomic models is price stickiness, and the oldest and most important microfoundation for this property is (nonconvex) adjustment costs \parencite[e.g.][]{Mankiw1985,CaplinSpulber1987,GolosovLucas2007,Midrigan2011}. These may be real costs of updating what prices are displayed: empirically, such ``menu costs'' can be nonnegligible \parencite[see e.g.][]{LevyBergenDuttaVenable1997}. Or they may arise from consumers reacting adversely to price hikes by temporarily reducing demand \parencite[as in][]{AnticSalant}.

To study pricing, we consider the simplest model, following \textcite{MilgromRoberts1990}: a monopolist with constant marginal cost $c \geq 0$ faces a decreasing demand curve $D(\cdot,\eta)$ parametrized by $\eta$, thus earning a profit of $F(p,(c,-\eta)) = (p-c) D(p,\eta)$ if she prices at $p \in \R_+$. We assume that demand $D(p,\eta)$ is always strictly positive, and that $\eta$ is an elasticity shifter: when it increases, so does the absolute elasticity of demand at every price $p$. Then profit $F(p,\theta) = F(p,(c,-\eta))$ has increasing differences in $(p,c)$ and has log increasing differences in $(p,-\eta)$, so it has single-crossing differences in $(p,\theta) = (p,(c,-\eta))$. Furthermore, profit $F(p,\theta)$ is automatically quasi-supermodular in $p$ since this choice variable is one-dimensional ($L \subseteq \R$).

Adjusting the price by $\varepsilon$ incurs a cost of $C(\varepsilon) \geq 0$. We assume nothing about $C$ except that it is minimized at zero. In many macroeconomic models, it is a pure fixed cost: $C(\varepsilon) = k>0$ for every $\varepsilon \neq 0$. When adjustment costs arise from price-hike-averse consumers, we have $C(\varepsilon)=0$ for $\varepsilon \leq 0$ and $C(\varepsilon)>0$ for $\varepsilon > 0$. If consumers are inattentive to small price changes, then $C(\varepsilon) = 0$ if $\varepsilon \in \left[ \underline \varepsilon, \bar \varepsilon \right]$ and $C(\varepsilon) > 0$ otherwise, where $\underline \varepsilon < 0 < \bar \varepsilon$.

By \Cref{theorem:basic}, the familiar comparative-statics properties of the monopoly problem are robust to the introduction of adjustment costs: it remains true that the monopolist raises her price whenever her marginal cost $c$ rises and whenever demand becomes less elastic (i.e., $\eta$ falls). No assumptions on the adjustment cost $C$ are required except that it be minimized at zero.

Under the mild additional assumption that $C$ is single-dipped, \Cref{theorem:lechatelier} yields a dynamic prediction: in response to a shock that increases her marginal cost or decreases the elasticity of demand, the monopolist initially raises her price, and then increases it further over the longer run. Thus one-off permanent cost and demand-elasticity shocks lead, quite generally, to price increases in both the short and long run.

A key reason why we can draw such general conclusions about pricing is that \Cref{theorem:basic,theorem:lechatelier} require $F$ to satisfy only ordinal (not cardinal) complementarity conditions. Specifically, we used the fact that the monopolist's profit undergoes a ``single-crossing differences'' shift when demand becomes less elastic (i.e., when $\eta$ falls). A result which assumed the cardinal property of \emph{increasing} differences would have been inapplicable, since elasticity shifts do not generally cause profit to shift in an ``increasing differences'' fashion.%
\footnote{This applied advantage of requiring only ordinal complementarity was pointed out by \textcite{MilgromRoberts1990,MilgromShannon1994} in the context of models with costless adjustment.}

%%%%%%%%%%%%%%%%%%%%%%%%%%%%%%%%%%%
%%%%%%%%%%%%%%%%%%%%%%%%%%%%%%%%%%%
\section{Dynamic adjustment}
\label{sec:dynamic}
%%%%%%%%%%%%%%%%%%%%%%%%%%%%%%%%%%%
%%%%%%%%%%%%%%%%%%%%%%%%%%%%%%%%%%%

The Le Chatelier principle takes a classical, ``reduced-form'' approach to dynamics, following Samuelson and Milgrom--Roberts. In this section and the next, we consider a fully-fledged dynamic model of adjustment. We show that the Le Chatelier principle remains valid: in the short run, the agent's choices exceed the initial choice $\underline x$ and do not overshoot the new frictionless optimum $\bar x$. We furthermore show that under additional assumptions, the path of adjustment is monotone, so that the agent adjusts more over longer horizons.

In this section, we assume that the agent is long-lived and forward-looking. The alternative case in which each period $t$'s choice $x_t$ is made by a short-lived agent (or equivalently, by a myopic long-lived agent) is studied in \cref{sec:myopic}.

%%%%%%%%%%%%%%%%%%%%%%%%%%%%%%%%%%%
\subsection{Setting}
\label{sec:dynamic:setting}
%%%%%%%%%%%%%%%%%%%%%%%%%%%%%%%%%%%

The agent faces an infinite-horizon decision problem in discrete time. In each period $t \in \N = \{1,2,3,\dots\}$, she takes an action $x_t \in L$, and earns a payoff of $F(x_t,\theta_t)$. Adjusting from $x_{t-1}$ to $x_t$ in period $t$ costs $C_t(x_t - x_{t-1})$.

The agent's initial choice $x_0 = \underline x \in \argmax_{x \in L} F(x,\underline \theta)$ is given, as are the parameter sequence $(\theta_t)_{t=1}^\infty$ and the sequence $(C_t)_{t=1}^\infty$ of adjustment cost functions. The simplest example is a one-off parameter shift ($\theta_t=\bar \theta$ for all $t \in \N$) with a time-invariant cost ($C_t = C$ for all $t \in \N$).

The agent is forward-looking, and discounts future payoffs by a factor of $\delta \in (0,1)$. Given her period-$0$ choice $x_0 \in L$, the agent's payoff from a sequence $(x_t)_{t=1}^\infty$ in $L$ is
\begin{gather*}
\mathcal{G}((x_t)_{t=1}^\infty,x_0)
= \mathcal{F}((x_t)_{t=1}^\infty)
-\mathcal{C}(x_0,(x_t)_{t=1}^\infty) ,
\qquad \text{where}
\\
\mathcal{F}((x_t)_{t=1}^{\infty})
= \sum_{t=1}^{\infty} \delta^{t-1}F(x_t,\theta_t)
\quad\text{and}\quad
\mathcal{C}(x_0,(x_t)_{t=1}^{\infty})
= \sum_{t=1}^{\infty} \delta^{t-1}C_t(x_t-x_{t-1}) .
\end{gather*}

\begin{remark} \label{remark:dynamic_extensions_1}
As we describe below (\hyperref[remark:dynamic_extensions_2]{\Cref*{remark:dynamic_extensions_1}, continued}), our results apply also in the finite-horizon case.
\end{remark}

\begin{remark} \label{remark:dynamic_initial_static}
In some applications, the adjustment cost in each period $t$ is calculated relative to the \emph{initial} choice $\underline x$, not (as assumed in this section) relative to the previous period's choice $x_{t-1}$. For example, the initial choice $\underline x$ could be a norm, default or reference point that was formed through custom, bargaining, or other processes. In such cases, the agent incurs a cost in each period of deviating from $\underline x$. The long-lived agent's problem is then the same in every period: choose $x \in L$ to maximize $(1-\delta) \left[ F(x,\bar\theta)-C(x-\underline x) \right]$. This is formally equivalent to the ``static'' model studied in \cref{sec:mcs,sec:lechatelier} above.

A more general model would allow the prevailing norm $y_t$ to evolve sluggishly, for example $y_t = \lambda x_{t-1} + (1-\lambda) \underline x$, where $\lambda \in [0,1]$. (The above discussion concerns $\lambda=1$ and $\lambda=0$.) We do not study this more general model in the present paper, but view it as a potentially interesting avenue for future work.
\end{remark}

%%%%%%%%%%%%%%%%%%%%%%%%%%%%%%%%%%%
\subsection{Dynamic Le Chatelier principles}
\label{sec:dynamic:lechatelier}
%%%%%%%%%%%%%%%%%%%%%%%%%%%%%%%%%%%

The following result shows that our Le Chatelier principle (\Cref{theorem:lechatelier}) remains valid when the agent can adjust over time and is forward-looking: for any new frictionless optimum $\bar x \in\argmax_{x\in L}F(x,\bar\theta)$, the agent's ``short-run'' actions $x_t$ satisfy $\underline x \leq x_t \leq \bar x$ along some optimal path $(x_t)_{t=1}^\infty$.

\begin{theorem}[dynamic Le Chatelier] \label{theorem:lechatelier-dynamic}
Suppose that the objective $F(x,\theta)$ is quasi-supermodular in $x$ and has single-crossing differences in $(x,\theta)$, and that each adjustment cost $C_t$ is monotone. Fix $\bar \theta \geq \underline \theta$, and let $\bar x\in \argmax_{x\in L} F(x,\bar \theta)$ satisfy $\bar x\geq \underline x$.%
\fnsuchan{ }%
If $\underline\theta\leq \theta_t\leq \bar\theta$ for every $t \in \N$, then provided the long-lived agent's problem admits a solution, there is a solution $(x_t)_{t=1}^\infty$ that satisfies $\underline x\leq x_t\leq \bar x$ for every period $t \in \N$.
\end{theorem}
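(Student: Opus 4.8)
The plan is to take an arbitrary optimal path and clip it, coordinate by coordinate, into the box $[\underline x, \bar x]$---first raising it above $\underline x$, then lowering it below $\bar x$---checking at each stage that the clipping does not lower the objective $\mathcal G(\cdot,\underline x)$. The floor step mirrors the proof of \Cref{theorem:basic} and the ceiling step mirrors the first bullet of \Cref{theorem:lechatelier}; the only new feature is that the costs now couple consecutive periods, so one must track how clipping affects the increments $x_t - x_{t-1}$, not merely the levels $x_t$.

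\emph{Floor.} Let $(x_t')_{t=1}^\infty$ be an optimal path and set $y_0 = \underline x$ and $y_t = \underline x \vee x_t'$ for $t \geq 1$. I would show that the period-$t$ payoff weakly rises for every $t$. For the flow term: $F(\underline x,\underline\theta) \geq F(\underline x \wedge x_t',\underline\theta)$ by optimality of $\underline x$ at $\underline\theta$ (using that $L$ is a sublattice), so quasi-supermodularity gives $F(\underline x \vee x_t',\underline\theta) \geq F(x_t',\underline\theta)$, and single-crossing differences---applied ``upward'', since $\theta_t \geq \underline\theta$---gives $F(y_t,\theta_t) \geq F(x_t',\theta_t)$, exactly as in \Cref{theorem:basic}. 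For the cost term: the elementary fact that $\max\{a,b\} - \max\{a,c\}$ always lies between $0$ and $b-c$, applied coordinatewise with $a = \underline x_i$, shows that $(y_t - y_{t-1})_i$ lies between $0$ and $(x_t' - x_{t-1}')_i$ in every dimension, so monotonicity of $C_t$ yields $C_t(y_t - y_{t-1}) \leq C_t(x_t' - x_{t-1}')$. Summing over $t$, the path $(y_t)_t$ is again optimal and satisfies $y_t \geq \underline x$ throughout.

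\emph{Ceiling.} Now start from this optimal $(y_t)_t$ and set $z_0 = \underline x$ and $z_t = y_t \wedge \bar x$ for $t \geq 1$; since $\underline x \leq \bar x$ and $y_t \geq \underline x$, we have $\underline x \leq z_t \leq \bar x$. Again the period-$t$ payoff should weakly rise. For the flow term: $F(y_t \vee \bar x,\bar\theta) \leq F(\bar x,\bar\theta)$ by optimality of $\bar x$ at $\bar\theta$, so quasi-supermodularity yields $F(y_t,\bar\theta) \leq F(y_t \wedge \bar x,\bar\theta)$; then, since $y_t \wedge \bar x \leq y_t$ and $\theta_t \leq \bar\theta$, the strict form of single-crossing differences, taken in contrapositive, forces $F(y_t,\theta_t) \leq F(z_t,\theta_t)$. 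For the cost term: the dual fact that $\min\{a,b\} - \min\{a,c\}$ always lies between $0$ and $b-c$, applied coordinatewise with $a = \bar x_i$, shows $(z_t - z_{t-1})_i$ lies between $0$ and $(y_t - y_{t-1})_i$, so monotonicity of $C_t$ gives $C_t(z_t - z_{t-1}) \leq C_t(y_t - y_{t-1})$. Hence $(z_t)_t$ is optimal and satisfies $\underline x \leq z_t \leq \bar x$ for every $t$, which is the claim.

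The genuinely delicate point is the flow step of the ceiling stage: single-crossing differences is stated for \emph{increases} of the parameter, whereas here $\theta_t$ may lie strictly below $\bar\theta$, so the comparison must be pushed down from $\bar\theta$ to $\theta_t$ through the contrapositive---which is precisely why only the weak conclusion $F(y_t,\theta_t) \leq F(z_t,\theta_t)$ can be, and need be, obtained. The remaining points are routine: the infinite sums are compared term by term, so the weak improvement in each period's payoff carries over to $\mathcal G(\cdot,\underline x)$ and preserves optimality; and the two one-line $\max$/$\min$ inequalities, which absorb the intertemporal coupling of the costs, are immediate to verify.
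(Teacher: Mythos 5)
Your proposal is correct and follows essentially the same route as the paper's proof: clip the optimal path first by $\underline x \vee {}\cdot{}$ and then by $\bar x \wedge {}\cdot{}$, handle the flow terms via optimality of $\underline x$ (resp. $\bar x$), quasi-supermodularity and single-crossing differences, and handle the cost terms via the coordinatewise fact that $\max\{a,b\}-\max\{a,c\}$ (resp. $\min$) lies between $0$ and $b-c$, which is exactly the content of the paper's \Cref{lemma:mon_impl2}. The point you flag as delicate---pushing the comparison down from $\bar\theta$ to $\theta_t$ via the contrapositive of the strict half of single-crossing differences---is likewise how the paper's argument works, so no gap remains.
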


The proof (\cref{sec:appendix:pf_lechatelier-dynamic}) is a direct extension of the arguments used to prove \Cref{theorem:basic,theorem:lechatelier}. The straightforwardness of this extension is perhaps surprising, since the dynamic adjustment problem is superficially quite different from the one-shot problem: the agent chooses a \emph{sequence} of actions, and her objective $\mathcal{F}(\cdot)$ need not be quasi-supermodular (since the sum of quasi-supermodular functions is not quasi-supermodular in general).

Monotonicity cannot be weakened in \Cref{theorem:lechatelier-dynamic}: it is necessary as well as sufficient for the dynamic Le Chatelier principle to hold whatever the objective $F$ and parameter sequence $(\theta_t)_{t=1}^\infty$, as we show in \cref{sec:necessity} below.

The next result shows that under stronger assumptions, a stronger dynamic Le Chatelier principle holds: $\underline x\leq x_t\leq x_T\leq \bar x$ for any periods $t < T$, which is to say that the agent adjusts more at longer horizons. Let us use ``BCS'' as shorthand for ``bounded on compact sets.''

\begin{theorem}[strong dynamic Le Chatelier] \label{theorem:lechatelier-dynamic-strong}
Suppose that the objective $F(x,\theta)$ is supermodular and BCS in $x$ and has single-crossing differences in $(x,\theta)$, and that $C_t = C$ for every period $t$, where the adjustment cost $C$ is monotone and additively separable. Fix $\bar \theta \geq \underline \theta$, and let $\bar x\in \argmax_{x\in L} F(x,\bar \theta)$ satisfy $\bar x\geq \underline x$.%
\fnsuchan{ }%
Let the parameter shift once and for all: $\theta_t = \bar\theta$ for every $t \in \N$. Then provided the long-lived agent's problem admits a solution, there is a solution $(x_t)_{t=1}^\infty$ that satisfies $\underline x \leq x_t\leq x_{t+1}\leq \bar x$ for every period $t \in \N$.
\end{theorem}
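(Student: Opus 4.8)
The plan is to leverage \Cref{theorem:lechatelier-dynamic}, which already supplies an optimal path trapped in the order interval $[\underline x,\bar x]$, and then to \emph{monotonise} it. Precisely: by \Cref{theorem:lechatelier-dynamic} (applicable since $F$ is a fortiori quasi-supermodular in $x$, each $C_t = C$ is monotone, and $\underline\theta \le \theta_t = \bar\theta \le \bar\theta$), fix a solution $(x_t)_{t=1}^\infty$ of the long-lived agent's problem with $\underline x \le x_t \le \bar x$ for every $t$. I will show that its running maximum $y_t := x_1 \vee x_2 \vee \cdots \vee x_t$ (with $y_0 := \underline x$) is \emph{also} a solution. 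Since $(y_t)_{t=1}^\infty$ is by construction nondecreasing, satisfies $\underline x \le y_t \le \bar x$, and has each $y_t \in L$ (a finite join of elements of the sublattice $L$), this is exactly the asserted conclusion.

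The workhorse is a supermodularity-type inequality for the dynamic objective $\mathcal G$. Given any path $\mathbf a = (a_t)_{t=1}^\infty$ in $L\cap[\underline x,\bar x]$ with $a_0 = \underline x$, write $\mathbf a^\delta$ for its \emph{one-period delay} $(\underline x, a_1, a_2, \dots)$, again a path from $\underline x$, and let $\vee,\wedge$ act period-by-period. I claim
\begin{equation*}
\mathcal G(\mathbf a \vee \mathbf a^\delta, \underline x) + \mathcal G(\mathbf a \wedge \mathbf a^\delta, \underline x) \;\ge\; \mathcal G(\mathbf a, \underline x) + \mathcal G(\mathbf a^\delta, \underline x).
\end{equation*}
To prove it I compare the four paths period by period. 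In period $t \ge 2$ the actions of $\mathbf a$, $\mathbf a^\delta$, $\mathbf a\vee\mathbf a^\delta$, $\mathbf a\wedge\mathbf a^\delta$ are $a_t$, $a_{t-1}$, $a_t\vee a_{t-1}$, $a_t\wedge a_{t-1}$, so the $\mathcal F$-contribution is controlled by supermodularity of $F(\cdot,\bar\theta)$; the period-$1$ contribution is an equality. For the $\mathcal C$-contribution, additive separability of $C$ reduces matters to the coordinatewise inequality $C_i(\max\{\gamma,\beta\}-\max\{\beta,\alpha\}) + C_i(\min\{\gamma,\beta\}-\min\{\beta,\alpha\}) \le C_i(\gamma-\beta) + C_i(\beta-\alpha)$ for all reals $\alpha,\beta,\gamma$ (with $(\alpha,\beta,\gamma)$ the relevant three consecutive coordinate values of $\mathbf a$), which I verify by distinguishing whether $\beta$ lies between $\alpha$ and $\gamma$ or is a local extremum, using only that each $C_i$ decreases on $(-\infty,0]$ and increases on $[0,\infty)$; summing over $i$ and $t$ with discount factors gives the claim, and incidentally shows $\mathbf a\vee\mathbf a^\delta$ and $\mathbf a\wedge\mathbf a^\delta$ have finite cost whenever $\mathbf a$ does.

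Next I would deduce that the operator $U\mathbf a := \mathbf a\vee\mathbf a^\delta$, which satisfies $(U\mathbf a)_t = a_t \vee a_{t-1} \ge a_t$, sends solutions to solutions. Writing $V(\underline x)$ for the optimal value: $\mathcal G(\mathbf a^\delta,\underline x) = F(\underline x,\bar\theta) - C(0) + \delta\,\mathcal G(\mathbf a,\underline x)$, while $\mathbf a\wedge\mathbf a^\delta$ begins with the action $\underline x$ and its continuation from period $2$ is a feasible path from $\underline x$ (finite cost, by the previous paragraph), so $\mathcal G(\mathbf a\wedge\mathbf a^\delta,\underline x) \le F(\underline x,\bar\theta) - C(0) + \delta V(\underline x) = \mathcal G(\mathbf a^\delta, \underline x)$ when $\mathbf a$ is optimal; plugging this together with $\mathcal G(\mathbf a\vee\mathbf a^\delta,\underline x) \le V(\underline x) = \mathcal G(\mathbf a,\underline x)$ into the inequality forces $\mathcal G(\mathbf a\vee\mathbf a^\delta,\underline x) = V(\underline x)$. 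Iterating from the path $(x_t)$ above, each $\mathbf x^{(k)} := U^k (x_t)_{t\ge1}$ is a solution lying in $L\cap[\underline x,\bar x]$, and one computes $(\mathbf x^{(k)})_t = x_1\vee\cdots\vee x_t = y_t$ for every $t \le k+1$; so the iterates agree with $(y_t)$ on ever-longer initial segments and converge to it coordinatewise. Passing to the limit uses BCS: $|F(\cdot,\bar\theta)|$ is bounded on the bounded set $L\cap[\underline x,\bar x]$, so the tails of $\mathcal F(\mathbf x^{(k)})$ and $\mathcal F((y_t))$ beyond period $k+1$ are $O(\delta^k)$ and $\mathcal F(\mathbf x^{(k)}) \to \mathcal F((y_t))$; since $\mathcal G(\mathbf x^{(k)}, \underline x) = V(\underline x)$, the costs obey $\mathcal C(\underline x, \mathbf x^{(k)}) = \mathcal F(\mathbf x^{(k)}) - V(\underline x) \to \mathcal F((y_t)) - V(\underline x)$, whereas nonnegativity of costs and agreement through period $k+1$ give $\mathcal C(\underline x, \mathbf x^{(k)}) \ge \sum_{t=1}^{k+1}\delta^{t-1}C(y_t - y_{t-1})$, hence $\liminf_k \mathcal C(\underline x,\mathbf x^{(k)}) \ge \mathcal C(\underline x,(y_t))$; combining, $\mathcal C(\underline x,(y_t)) \le \mathcal F((y_t)) - V(\underline x)$, i.e.\ $\mathcal G((y_t),\underline x) \ge V(\underline x)$, so $(y_t)$ is optimal (and a fortiori feasible).

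The main obstacle is the coordinatewise cost inequality inside the workhorse lemma. One cannot simply invoke monotone comparative statics for the (finite- or infinite-horizon) sequence problem, because $-C$ is \emph{not} supermodular in a pair of consecutive actions — that would require each $C_i$ to be convex, which is not assumed — so $\mathcal F - \mathcal C$ need not be supermodular on the product lattice of paths. Comparing a path with its \emph{one-period delay}, rather than with an arbitrary path, is precisely the device that makes the relevant cost differences depend on only \emph{three} consecutive coordinate values, a configuration for which single-dippedness (monotonicity) alone suffices. The remaining steps — the induction and the BCS limiting argument — are routine.
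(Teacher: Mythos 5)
Your proof is correct, and its engine is the same as the paper's: the coordinatewise three-point inequality $C_i(\beta\vee\gamma-\alpha\vee\beta)+C_i(\beta\wedge\gamma-\alpha\wedge\beta)\leq C_i(\gamma-\beta)+C_i(\beta-\alpha)$, proved from single-dippedness-and-minimization-at-zero by casing on whether the middle value is a local extremum, is exactly the paper's inequality \eqref{delicate}, and the supermodularity of $F(\cdot,\bar\theta)$ and the concluding BCS tail/$\liminf$ estimate play identical roles. Where you genuinely differ is in how the induction is organized. The paper applies operators $M_T$ that monotonize only from period $T$ onward, and closes each induction step by a revealed-preference comparison of the current optimal iterate against a ``meet-shifted'' perturbation, followed by a reindexing of the telescoping sum. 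You instead iterate a single global operator $U\mathbf{a}=\mathbf{a}\vee\mathbf{a}^\delta$ and close each step with a four-path supermodularity inequality for $\mathcal{G}$ along the pair $(\mathbf{a},\mathbf{a}^\delta)$, combined with the delay identity $\mathcal{G}(\mathbf{a}^\delta,\underline x)=F(\underline x,\bar\theta)-C(0)+\delta\,\mathcal{G}(\mathbf{a},\underline x)$ and the bound $\mathcal{G}(\mathbf{a}\wedge\mathbf{a}^\delta,\underline x)\leq F(\underline x,\bar\theta)-C(0)+\delta V(\underline x)$. This packaging is arguably cleaner, and your computation $(\mathbf{x}^{(k)})_t=x_{\max\{1,t-k\}}\vee\cdots\vee x_t=y_t$ for $t\leq k+1$ is right; the price is that it leans explicitly on stationarity (time-invariant $\theta_t$ and $C_t$) twice---for the delay identity and for bounding the continuation of $\mathbf{a}\wedge\mathbf{a}^\delta$ by $V(\underline x)$---which is assumed in the theorem, so this is legitimate, whereas the paper's local $M_T$ formulation is what carries over directly to the finite-horizon variant in \hyperref[remark:dynamic_extensions_2]{\Cref*{remark:dynamic_extensions_1} (continued)}, where a global delay operator would push the terminal action off the end of the horizon. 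You also correctly flag the one point that must not be glossed over: that the finiteness of $\mathcal{C}$ along $\mathbf{a}\vee\mathbf{a}^\delta$ and $\mathbf{a}\wedge\mathbf{a}^\delta$ falls out of the same coordinatewise inequality, so no $\infty-\infty$ issue arises when rearranging the four-path inequality.
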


The assumptions of \Cref{theorem:lechatelier-dynamic-strong} strengthen those of \Cref{theorem:lechatelier-dynamic} in two directions: (1) the parameter $\theta_t$ and cost function $C_t$ must not vary over time, and (2) the payoff $F$ and cost $C_t=C$ must satisfy additional properties, namely, supermodularity and BCS of $F(\cdot,\theta)$ and additive separability of $C$. The BCS requirement is mild (continuity is a sufficient condition). The other two requirements, supermodularity and additive separability, are substantial assumptions. However, they are both automatically satisfied when the choice variable is one-dimensional ($L \subseteq \R$), as in our applications to pricing, labor supply and capital investment (\cref{sec:dynamic:pricing,sec:dynamic:labor_supply,sec:dynamic:investment} below).

% If adjustment costs are not assumed to be additive across dimensions,
% the result remains valid (with the same proof) provided the cost function $C$
% satisfies $C(0)=0$ and the following property:
% for any vectors $a,b,c \in L$,
% $$C(b\vee c-a\vee b)+C(b\wedge c-a\wedge b)\leq C(b-a)+C(c-b).$$

The proof (\cref{sec:appendix:pf_lechatelier-dynamic-strong}) runs as follows. Any sequence $(x_t)_{t=1}^\infty$ can be made increasing by replacing its $t^\text{th}$ entry $x_t$ with the cumulative maximum $x_1 \vee x_2 \vee \cdots \vee x_{t-1} \vee x_t$, for each $t \in \N$. It suffices to show that such ``monotonization'' preserves optimality, since then the optimal sequence delivered by \Cref{theorem:lechatelier-dynamic} may be monotonized to yield an optimal sequence with all of the desired properties. To prove that monotonization preserves optimality, it suffices (by BCS and a limit argument) to show that an optimal sequence $(x_t)_{t=1}^\infty$ which satisfies $x_1 \leq x_2 \leq \cdots \leq x_{k-1} \leq x_k$ remains optimal if its $t^\text{th}$ entry $x_t$ is replaced by $x_{t-1} \vee x_t$ for each $t \geq k+1$. We prove this using the supermodularity of $F(\cdot,\bar \theta)$ and the monotonicity and additive separability of $C$.

\begin{namedthm}[\Cref*{remark:dynamic_extensions_1}, continued.] \label{remark:dynamic_extensions_2} \upshape
If there is a finite horizon $K$, so the agent chooses a length-$K$ sequence $(x_t)_{t=1}^K$ to maximize $\sum_{t=1}^K \delta^{t-1} \left[ F(x_t,\bar \theta) - C(x_t-x_{t-1}) \right]$, then \Cref{theorem:lechatelier-dynamic} remains directly applicable (set $C_t(\varepsilon) = \infty$ for all $t > K$ and $\varepsilon \neq 0$). \Cref{theorem:lechatelier-dynamic-strong} also remains true as stated (and BCS can be dropped); this is shown in the appendix, immediately after the proof of \Cref{theorem:lechatelier-dynamic-strong}.

% If costs have the general form $\widetilde C_t(x_t,x_{t-1})$, then \Cref{theorem:lechatelier-dynamic} remains true (with the same proof) provided $\varepsilon \mapsto \widetilde C_t(y+\varepsilon,y)$ is monotone for every $t \in \N$ and $y \in L$, and \Cref{theorem:lechatelier-dynamic-strong} remains true (with the same proof) if $\varepsilon \mapsto \widetilde C(y+\varepsilon,y)$ is monotone and additively separable for each $y \in L$ and the cost function $\widetilde C$ is either symmetric ($\widetilde C(x,y) = \widetilde C(y,x)$ for all $x,y \in L$) or translation-invariant ($\widetilde C(x,y) = \widetilde C(x+\varepsilon,y+\varepsilon)$ for all $x,y \in L$ and $\varepsilon \in \Delta L$).
\end{namedthm}

\begin{remark}
There is a familiar way of obtaining comparative statics in dynamic problems, via the Bellman equation \parencite[see][]{HopenhaynPrescott1992}. This approach can be used to obtain an increasing optimal path $(x_t)_{t=1}^\infty$, as in \Cref{theorem:lechatelier-dynamic-strong}, but only under stronger assumptions. The Bellman equation is
\begin{equation*}
V(x)
= \max_{x' \in L} \left[
F( x', \bar \theta ) - C(x'-x) + \delta V(x')
\right]
\quad \text{for every $x \in L$.}
\end{equation*}
Suppose we find conditions which guarantee that
\begin{equation*}
\Phi(x)
= \argmax_{x' \in L}
\left[ F( x', \bar \theta ) 
- C(x'-x) + \delta V(x') \right]
\end{equation*}
is a nonempty compact sublattice for each $x \in L$, and that the correspondence $\Phi$ is increasing in the strong set order. Then $\Phi(x)$ has a greatest element $\phi(x)$ for each $x \in L$, and $\phi$ is an increasing function. Define an optimal sequence $(x_t)_{t=1}^\infty$ by $x_t = \phi(x_{t-1})$ for each $t \in \N$. We have $x_1=\phi (x_0)\geq x_0$ by \Cref{theorem:lechatelier-dynamic}, and by repeatedly applying $\phi$ on both sides of this inequality, we obtain $x_2=\phi(x_1)\geq x_1$, then $x_3=\phi(x_2)\geq x_2$, and so on; thus $(x_t)_{t=1}^\infty$ is increasing.

The usual sufficient conditions for the above argument are that $G(x',x) = F( x', \bar \theta ) - C(x'-x)$ is supermodular in $x'$ and has increasing differences in $(x',x)$. These cardinal complementarity conditions, together with ancillary assumptions, ensure that the value function $V$ is supermodular \parencite[see][]{HopenhaynPrescott1992}, allowing us to apply the basic comparative-statics result \parencite[see][Theorem~4]{MilgromShannon1994} to the objective $G(x',x) + \delta V(x')$ to conclude that $\Phi$ is increasing in the strong set order. The cardinal complementarity conditions effectively require that $-C(x'-x)$ have increasing differences $(x',x)$. This is a very restrictive assumption; in the additively separable case $C(\varepsilon) = \sum_{i=1}^n C_i(\varepsilon_i)$, it demands that each $C_i$ be convex. \Cref{theorem:lechatelier-dynamic-strong} avoids this assumption, requiring merely that the cost $C$ be monotone.
\end{remark}

%%%%%%%%%%%%%%%%%%%%%%%%%%%%%%%%%%%
\subsection{Application to pricing, continued}
\label{sec:dynamic:pricing}
%%%%%%%%%%%%%%%%%%%%%%%%%%%%%%%%%%%

An active literature in macroeconomics \parencite[e.g.][]{GolosovLucas2007,Midrigan2011} examines the price stickiness central to the new Keynesian paradigm by studying forward-looking dynamic models of pricing subject to adjustment costs (usually called ``menu costs'' in this context---see \cref{sec:lechatelier:pricing} above). The basic mechanism is that nonconvexities in adjustment costs give rise to price stickiness.

\Cref{theorem:lechatelier-dynamic-strong} delivers comparative statics for such pricing models, without any of the parametric assumptions that are typically placed on adjustment costs.%
\footnote{Common functional forms include quadratic \parencite{Rotemberg1982} and pure fixed cost \parencite[many papers, e.g.][]{CaplinSpulber1987,GolosovLucas2007,Midrigan2011}.}
Consider again the monopoly pricing problem described in \cref{sec:lechatelier:pricing}. Assume that the adjustment cost $C$ is monotone. It is automatically true that the cost $C$ is additively separable and that profit $F(\cdot,\eta)$ is supermodular, because the choice variable $p \in \R_+$ is one-dimensional. Thus by \Cref{theorem:lechatelier-dynamic-strong}, supply shocks cause inflation at every horizon: a one-off permanent increase of marginal cost $c$ leads prices to increase monotonically over time. The same is true of demand shocks that make the demand curve less elastic (lower $\eta$).

\Cref{theorem:lechatelier-dynamic} furthermore provides that the path of prices remains always above the original frictionless monopoly price, and never overshoots the new frictionless monopoly price. This conclusion is more general, holding even when marginal cost $c_t$, the demand elasticity parameter $\eta_t$, and the adjustment cost function $C_t(\cdot)$ vary over time.

Although we phrased these findings in terms of a monopolist's pricing problem, they apply equally to the typical new Keynesian setting of monopolistic competition between many firms selling differentiated goods \parencite[see e.g.][]{Gali2015}. In that case, the demand curve in our analysis above is to be understood as \emph{residual} demand, taking into account the other firms' pricing.

%%%%%%%%%%%%%%%%%%%%%%%%%%%%%%%%%%%
\subsection{Application to labor supply}
\label{sec:dynamic:labor_supply}
%%%%%%%%%%%%%%%%%%%%%%%%%%%%%%%%%%%

A recent literature attempts to reconcile ``micro'' and ``macro'' estimates of labor supply elasticities by appeal to adjustment costs \parencite[see][]{ChettyEtal2011,Chetty2012}. The idea is that since job design in firms is often inflexible, with no scope for big changes in hours, a worker wishing to adjust her labor supply may have to find a new job, which entails costly search.

The workhorse model of this literature features a worker choosing consumption $c \in \R_+$ and labor supply $x \in L \subseteq \R_+$. Per-period utility has the quasilinear form $c - \kappa(x)$. Given quasilinearity, it is without loss of generality to assume that the worker cannot save or borrow.%
\footnote{Given the standard assumption that the worker's discount rate is equal to the interest rate, the marginal utility of a dollar is equal across all periods.}
She then consumes whatever she earns, so her per-period utility equals $F(x,T) = wx - T(wx) - \kappa(x)$, where $w > 0$ is the wage and $T(\cdot)$ is the tax schedule, and we have normalized the price of consumption to $1$. Labor supply is subject to adjustment costs (arising from search), which are assumed to be single-dipped. We depart from the literature by eschewing functional-form restrictions on the effort disutility $\kappa$, tax schedule $T$ and adjustment cost function.

We consider tax reforms which reduce marginal rates. Formally, we write $\bar T \geq_{\text{flat}} T$ (``$\bar T$ is flatter than $T$'') if and only if $\bar T(y') - \bar T(y) \leq T(y') - T(y)$ for all $y' \geq y \geq 0$. The per-period utility $F(x,T)$ has single-crossing differences in $(x,T)$,%
\footnote{For $x' \geq x$ and $\bar T \geq_{\text{flat}} T$, $F(x',T) - F(x,T) \geq \mathrel{(>)} 0$ is equivalent to $T(wx') - T(wx) \leq \mathrel{(<)} wx'-wx - k$ where $k = \kappa(x')-\kappa(x)$, which implies $\bar T(wx') - \bar T(wx) \leq \mathrel{(<)} wx'-wx - k$ by definition of $\geq_{\text{flat}}$, which is equivalent to $F(x',\bar T) - F(x,\bar T) \geq \mathrel{(>)} 0$.}
and is automatically supermodular in $x$.

Our Le Chatelier principles imply that when marginal tax rates are cut, labor supply increases at every horizon. More specifically, in the ``classical'' environment of \cref{sec:setting,sec:mcs,sec:lechatelier}, labor supply increases in the short run and further increases in the long run (\Cref{theorem:lechatelier}), while in the infinite-horizon model of \cref{sec:dynamic:setting,sec:dynamic:lechatelier}, labor supply rises monotonically over time (\Cref{theorem:lechatelier-dynamic-strong}). The adjustment may be gradual or abrupt, depending on functional forms.

%%%%%%%%%%%%%%%%%%%%%%%%%%%%%%%%%%%
\subsection{Application to capital investment}
\label{sec:dynamic:investment}
%%%%%%%%%%%%%%%%%%%%%%%%%%%%%%%%%%%

In the neoclassical theory of investment \parencite[originating with][]{Jorgenson1963}, a firm adjusts its capital stock over time subject to adjustment costs. In the simplest such model, the profit of a firm with capital stock $k_t \in \R_+$ is $F(k_t,(p,\eta,-r)) = p f(k_t,\eta) - r k_t$, where $(p,r)$ are the prices of output and capital and $f(\cdot,\eta)$ is an increasing production function. Capital is subject to an adjustment cost: investing $i_t = k_t - k_{t-1}$ costs $C(i_t) \geq 0$, where $C(0)=0$.

We assume that $f$ has increasing differences, so that the parameter $\eta$ shifts the marginal product of capital. Then $F(k,\theta)$ has increasing differences (and hence single-crossing differences) in $(k,\theta)$, where $\theta = (p,\eta,-r)$. Profit $F(k,\theta)$ is automatically supermodular in $k$, since $k \in \R_+$ is one-dimensional. Our discussion below may be extended to richer variants of this model featuring, for example, depreciation and time-varying prices.

The early literature assumed a convex adjustment cost $C(\cdot)$, which yields gradual capital accumulation and an equivalence of the neoclassical theory with Tobin's (\citeyear{Tobin1969}) ``$q$'' theory of investment \parencite[see][]{Hayashi1982}. Later work focused on the ``lumpy'' investment behavior that arises when adjustment costs are nonconvex. ``Lumpiness'' is empirically well-documented \parencite[see][]{CooperHaltiwanger2006}, and has implications for, among other things, business cycles \parencite[e.g.][]{Thomas2002,BachmannCaballeroEngel2013,Winberry2021} and the effects of microfinance programs on entrepreneurship in developing countries \parencite[e.g.][]{FieldPandePappRigol2013,BariMalikMekiQuinn2024}.

Our comparative-statics theory handles both the convex case and rich forms of nonconvexity. Our Le Chatelier principles (the ``classical,'' reduced-form \Cref{theorem:lechatelier} and the dynamic, forward-looking \Cref{theorem:lechatelier-dynamic,theorem:lechatelier-dynamic-strong}) are applicable provided merely that adjustment costs are single-dipped. Investment then increases at every horizon, and by more at longer horizons, whenever the marginal profitability of capital increases, whether due to a drop in its price $r$, a rise in the price $p$ of output, or an increase of the marginal product of capital (an increase of $\eta$).

The aforementioned papers on microfinance consider models in which adjustment costs fail even to be single-dipped: there is a minimum investment size $I>0$, meaning that investing $i \in (0,I)$ costs $C(i)=\infty$ (whereas investing $i \geq I$ has finite cost). Our fundamental result, \Cref{theorem:basic}, can accommodate such failures of single-dippedness: it remains true that a rise in the marginal profitability of capital increases investment, just as would be the case if adjustment were costless. Our remaining results (\Cref{theorem:lechatelier,theorem:lechatelier-dynamic,theorem:lechatelier-dynamic-strong}) cannot be applied in this case, however.

All of these results generalize to multiple factors of production, on the pattern of \cref{sec:lechatelier:factor_demand}. It suffices to assume that the factors $x=(x_1,\dots,x_n)$ are complements in production, meaning that the production function $f(x,\eta)$ is supermodular in $x$. Then, denoting factor prices by $r = (r_1,\dots,r_n)$, the profit function $F(x,(p,\eta,-r)) = p f(x,\eta) - r \cdot x$ is supermodular in $x$, and has increasing differences in $(x,(p,\eta,-r))$ as before, so that all of our general results remain applicable. In case there are just $n=2$ factors of production, the complementarity hypothesis may be replaced with substitutability (submodularity of $f(x,\eta)$ in $x$), using the trick described in \cref{sec:lechatelier:factor_demand}.

%%%%%%%%%%%%%%%%%%%%%%%%%%%%%%%%%%%
%%%%%%%%%%%%%%%%%%%%%%%%%%%%%%%%%%%
\section{Dynamic adjustment by short-lived agents}
\label{sec:myopic}
%%%%%%%%%%%%%%%%%%%%%%%%%%%%%%%%%%%
%%%%%%%%%%%%%%%%%%%%%%%%%%%%%%%%%%%

In this section, we continue our study of the dynamic adjustment model introduced in the previous section, under a different behavioral assumption: that each period's decision is made by a short-lived agent. We recover the Le Chatelier principle, and provide conditions under which short-lived agents adjust more sluggishly than a long-lived agent would.

Recall the model from \cref{sec:dynamic:setting}. We now assume that there is one agent per period. The period-$t$ agent takes her predecessor's choice $x_{t-1} \in L$ as given, and chooses $x_t \in L$ to maximize the period-$t$ payoff $G_t(\cdot,x_{t-1})$, where $G_t(x,y) = F(x,\theta_t) - C_t(x-y)$. A \emph{(short-lived) equilibrium sequence} is a sequence $(x_t)_{t=1}^\infty$ in $L$ such that $x_t \in \argmax_{x \in L} G_t(x,x_{t-1})$ for every $t \in \N$, where (as before) $x_0 = \underline x \in \argmax_{x \in L} F(x,\underline \theta)$ is given.

\begin{theorem}[short-lived dynamic Le Chatelier]\label{theorem:lechatelier-dynamic-myopic}
Suppose that the objective $F(x,\theta)$ is quasi-supermodular in $x$ and has single-crossing differences in $(x,\theta)$, and that each adjustment cost $C_t$ is monotone. Assume that $\argmax_{x \in L}G_t(x,y)$ is nonempty for all $t \in \N$ and $y\in L$. Fix $\bar \theta \geq \underline \theta$, and let $\bar x\in \argmax_{x\in L} F(x,\bar \theta)$ satisfy $\bar x\geq \underline x$.%
\fnsuchan{ }%
\begin{itemize}
\item If $\underline\theta\leq \theta_t\leq \bar\theta$ for every $t \in \N$, then there is an equilibrium sequence $(x_t)_{t=1}^\infty$ that satisfies $\underline x\leq x_t\leq \bar x$ for every $t \in \N$.
\item If $\underline \theta \leq \theta_t \leq \theta_{t+1} \leq \bar \theta$ for every $t \in \N$, then there is an equilibrium sequence $(x_t)_{t=1}^\infty$ that satisfies $\underline x \leq x_t\leq x_{t+1} \leq \bar x$ for every $t \in \N$.
\end{itemize}
\end{theorem}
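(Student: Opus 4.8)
The plan is to build the desired equilibrium sequence recursively, maintaining the invariant that each $x_t$ lies in $[\underline x, \bar x]$ (and, in the second part, additionally that $x_{t-1} \leq x_t$). The key observation is that for a \emph{fixed} predecessor choice $y$ and fixed parameter $\theta_t$, the period-$t$ agent's problem $\max_{x \in L} G_t(x,y) = \max_{x \in L}[F(x,\theta_t) - C_t(x-y)]$ is exactly a one-shot costly-adjustment problem of the kind covered by \Cref{theorem:basic} and \Cref{theorem:lechatelier}, with ``initial choice'' $y$ in place of $\underline x$. So the strategy is to feed the previous period's choice into these single-period results and chain the conclusions.

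For the first part, I would argue by induction on $t$. The base case is $x_0 = \underline x$, which satisfies $\underline x \leq x_0 \leq \bar x$ by hypothesis (using $\bar x \geq \underline x$). For the inductive step, suppose $\underline x \leq x_{t-1} \leq \bar x$. Applying \Cref{theorem:basic} to the period-$t$ problem with initial choice $x_{t-1}$ and parameters $\underline \theta \leq \theta_t$: since $F$ is quasi-supermodular in $x$ and has single-crossing differences, and $C_t$ is monotone (hence minimally monotone), and since $x_{t-1} \in \argmax_{x\in L} G_{t-1}(x, x_{t-2})$ need \emph{not} equal $\argmax_{x\in L}F(x,\underline\theta)$ — here is the subtlety — I cannot directly invoke \Cref{theorem:basic} as stated, because it requires the starting point to be an \emph{unconstrained} maximizer of $F(\cdot,\underline\theta)$. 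Instead I would use the monotonicity argument directly: take any $x' \in \argmax_{x\in L}G_t(x,x_{t-1})$ and show that $x' \vee \underline x$ is also optimal (giving a choice $\geq \underline x$), then that $(x'\vee\underline x)\wedge \bar x$ is also optimal (giving a choice $\leq \bar x$), replicating the two halves of the proofs of \Cref{theorem:basic} and \Cref{theorem:lechatelier} respectively. For the lower bound one uses $F(\underline x, \underline\theta)\geq F(\underline x \wedge x', \underline\theta)$, quasi-supermodularity, single-crossing differences (from $\underline\theta\leq\theta_t$), and minimal monotonicity of $C_t$; for the upper bound one uses $F(\bar x\vee x',\bar\theta)\leq F(\bar x,\bar\theta)$, quasi-supermodularity, single-crossing differences (from $\theta_t\leq\bar\theta$), and monotonicity of $C_t$ — observing that when $\underline x \leq x_{t-1}$, the vector $(x'\vee\underline x)\wedge\bar x$ is between $0$ and $x'\vee\underline x$ relative to the base point $x_{t-1}$ in the sense monotonicity requires. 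Thus one can pick $x_t \in [\underline x,\bar x]\cap\argmax_{x\in L}G_t(x,x_{t-1})$, completing the induction. Collecting the $x_t$ yields an equilibrium sequence (nonemptiness of each argmax is assumed) with $\underline x \leq x_t \leq \bar x$ for all $t$.

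For the second part, under the stronger hypothesis $\underline\theta\leq\theta_t\leq\theta_{t+1}\leq\bar\theta$, I would strengthen the induction to also deliver $x_{t-1}\leq x_t$. Given $x_{t-1}$ already chosen with $\underline x \leq x_{t-1}\leq\bar x$, I want $x_t \in \argmax_{x\in L}G_t(x,x_{t-1})$ with $x_{t-1}\leq x_t\leq\bar x$. The lower bound $x_t \geq x_{t-1}$ is precisely the content of \Cref{theorem:basic} applied with \emph{initial choice $x_{t-1}$} and parameters $\theta_{t-1}\leq\theta_t$ (or $\underline\theta\leq\theta_1$ when $t=1$) — and here the hypothesis of \Cref{theorem:basic} \emph{is} met, because $x_{t-1}\in\argmax_{x\in L}G_{t-1}(x,x_{t-2})$ and at the base point $x_{t-2}$ the relevant ``$F$'' for period $t-1$ is $G_{t-1}(\cdot,x_{t-2})$... no — cleaner is to note $x_{t-1}$ maximizes $x\mapsto F(x,\theta_{t-1})-C_{t-1}(x-x_{t-2})$, so repeating the lower-bound half of the \Cref{theorem:basic} proof (with $\underline x$ there replaced by $x_{t-1}$, $\underline\theta$ by $\theta_{t-1}$, $\bar\theta$ by $\theta_t$, $C$ by $C_t$) gives $x_t\vee x_{t-1}$ optimal for period $t$, hence an optimal $x_t\geq x_{t-1}$. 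The upper bound $x_t\leq\bar x$ then follows by the same monotonization-against-$\bar x$ argument as in part one (using $\theta_t\leq\bar\theta$), and one checks the two operations are compatible, i.e. $(x_t\vee x_{t-1})\wedge\bar x$ is still optimal and lies in $[x_{t-1},\bar x]$. Inducting gives the monotone equilibrium path.

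The main obstacle I anticipate is precisely the mismatch flagged above: \Cref{theorem:basic} and \Cref{theorem:lechatelier} are stated for a starting point that maximizes $F(\cdot,\underline\theta)$, but in the dynamic chain the starting point $x_{t-1}$ is instead a maximizer of the \emph{previous period's penalized} objective $G_{t-1}(\cdot,x_{t-2})$. The resolution is not to invoke the theorems as black boxes but to re-run their short proofs with $x_{t-1}$ in the role of $\underline x$ — which works because those proofs only ever use that $\underline x$ maximizes \emph{some} function satisfying the complementarity conditions over $L$, together with the relevant monotonicity of $C$; the penalized objective $G_{t-1}(\cdot,x_{t-2})$ does not itself satisfy quasi-supermodularity, but one only needs the inequality $F(x_{t-1},\theta_{t-1})\geq F(x_{t-1}\wedge x',\theta_{t-1})$, which fails in general — so in fact the correct base-point inequality to extract from optimality of $x_{t-1}$ is $G_{t-1}(x_{t-1},x_{t-2})\geq G_{t-1}(x_{t-1}\wedge x_t, x_{t-2})$, and one must carry the adjustment-cost term $C_{t-1}$ through the argument as well. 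Managing this two-period bookkeeping cleanly — showing that the ``wedge/vee with the lagged action'' operations telescope correctly across periods — is the delicate part; everything else is a routine iteration of the one-shot arguments already in the paper.
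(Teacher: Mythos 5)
Your argument for the first bullet is essentially the paper's: extend the sequence one period at a time, maintaining $\underline x \leq x_{t-1} \leq \bar x$, and in each period lift an arbitrary maximizer of $G_t(\cdot,x_{t-1})$ by taking $\vee\,\underline x$ and then $\wedge\,\bar x$, comparing $F$-values against the \emph{global} maximizers $\underline x$ and $\bar x$ (not against $x_{t-1}$) while using the inductive hypothesis to control the cost terms. One small correction: for the lower bound you need $C_t(\underline x \vee x' - x_{t-1}) \leq C_t(x'-x_{t-1})$, and since the base point is $x_{t-1}$ rather than $\underline x$, this is \emph{not} an instance of minimal monotonicity ($C(\varepsilon\vee 0)\leq C(\varepsilon)$); it requires full monotonicity together with $\underline x \leq x_{t-1}$ (the paper's \Cref{corollary:mon_impl}). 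Symmetrically, the upper bound uses $\bar x \geq x_{t-1}$, not $\underline x \leq x_{t-1}$ as you wrote. Both halves of the inductive hypothesis are available, so this is an attribution slip rather than a hole.

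The second bullet, however, has a genuine gap that your closing paragraph flags but does not resolve. To get an optimizer of $G_t(\cdot,x_{t-1})$ above $x_{t-1}$, you correctly identify that the needed inequality $F(x_{t-1},\theta_{t-1}) \geq F(x_{t-1}\wedge x'',\theta_{t-1})$ must be extracted from $G_{t-1}(x_{t-1},x_{t-2}) \geq G_{t-1}(x_{t-1}\wedge x'',x_{t-2})$, which works only if one also has $C_{t-1}(x_{t-1}-x_{t-2}) \geq C_{t-1}(x_{t-1}\wedge x'' - x_{t-2})$. By monotonicity (\Cref{corollary:mon_impl}), that cost inequality holds when $x'' \geq x_{t-2}$ --- but $x''$ is a \emph{current-period} maximizer, and nothing in your sketch establishes that it dominates the twice-lagged action. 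This is not routine bookkeeping: it forces the paper's inner induction, in which the period-$T$ maximizer is lifted successively above $x_0 = \underline x$ (where the part-one argument applies, since $\underline x$ is a genuine $F(\cdot,\underline\theta)$-maximizer), then above $x_1$, then $x_2$, and so on up to $x_{T-1}$, with each stage's conclusion ($x'' \geq x_{t-2}$) supplying exactly the hypothesis needed to validate the cost comparison at the next stage. Without this telescoping lift, the single-step ``compare directly against $x_{t-1}$'' argument you propose cannot get off the ground, so the monotone-path conclusion is not yet proved.
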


By inspection, \Cref{proposition:lechatelier-medium} (\cref{sec:lechatelier}) is the special case of \Cref{theorem:lechatelier-dynamic-myopic} in which there are only two periods of adjustment ($C_t(\varepsilon) = \infty$ for all $t \geq 3$ and $\varepsilon \neq 0$) and the parameter shifts once and for all ($\theta_1=\theta_2=\bar\theta$).

Taken together, our three dynamic Le Chatelier principles (\Cref{theorem:lechatelier-dynamic,theorem:lechatelier-dynamic-strong,theorem:lechatelier-dynamic-myopic}) tell us that long- and short-lived agents adjust in the same direction. The \emph{speed} of adjustment generally differs, however. The following result gives additional assumptions under which short-lived agents adjust more sluggishly.

\begin{theorem}[short- vs. long-lived] \label{theorem:myopic-vs-fwd}
Suppose that the objective $F(x,\theta)$ is supermodular in $x$ and has single-crossing differences in $(x,\theta)$, and that each adjustment cost $C_t$ is monotone, additively separable and convex. Let $\{ F(\cdot,\theta) \}_{\theta \in \Theta}$ and $\{ C_t \}_{t \in \N}$ be equi-BCS.%
\footnote{Given $X \subseteq \R^n$, a collection $\{ \phi_k \}_{k \in \mathcal{K}}$ of functions $\phi_k : X \to (-\infty,\infty]$ is \emph{equi-BCS} if and only if the map $x \mapsto \sup_{k \in \mathcal{K}} \lvert \phi_k(x) \rvert$ is BCS.}
Fix $\bar \theta \geq \underline \theta$, let $\bar x\in \argmax_{x\in L} F(x,\bar \theta)$ satisfy $\bar x\geq \underline x$,%
\fnsuchan{ }%
and assume that $\underline \theta \leq \theta_t \leq \theta_{t+1} \leq \bar \theta$ for every $t \in \N$. Fix a short-lived equilibrium sequence $(\widetilde x_t)_{t=1}^\infty$ that satisfies $\underline x \leq \widetilde x_t \leq \widetilde x_{t+1} \leq \bar x$ for every $t \in \N$.%
\footnote{Such a sequence exists by \Cref{theorem:lechatelier-dynamic-myopic}, provided there is an equilibrium sequence.}
Then provided the long-lived agent's problem admits a solution, there is a solution $(x_t)_{t=1}^\infty$ that satisfies $\widetilde x_t\leq x_t\leq \bar x$ for every period $t \in \N$.
\end{theorem}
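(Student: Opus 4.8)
The plan is to exploit the recursive structure of the long-lived problem, along the lines anticipated in the Remark following \Cref{theorem:lechatelier-dynamic-strong}. Under the present hypotheses --- supermodular $F$, and additively separable and convex $C_t$ --- the adjustment-cost term $-C_t(\cdot-x)$ is modular in the action and has increasing differences in $(\text{action},x)$, so the one-step objective $x' \mapsto F(x',\theta_t) - C_t(x'-x) + \delta V(x')$ inherits the ``cardinal'' complementarity that the value-function approach requires, and the value functions of the long-lived program can be shown to be supermodular. (This is exactly the route that \Cref{theorem:lechatelier-dynamic-strong} manages to avoid by assuming only monotone costs; here the stronger cost hypotheses buy us back its applicability.)

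First I would reduce to a box: by \Cref{theorem:lechatelier-dynamic,theorem:lechatelier-dynamic-myopic} it is without loss to replace $L$ by the sublattice $\{x \in L : \underline x \leq x \leq \bar x\}$, since the given increasing short-lived equilibrium $(\widetilde x_t)_{t=1}^\infty$ already lies there and the long-lived problem has a solution there. Thereafter every action lies in $[\underline x, \bar x]$, so $x_t \leq \bar x$ is automatic and only $x_t \geq \widetilde x_t$ remains to be proved.

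Next I would couple the two paths. Let $V_t$ be the period-$t$ value function, with $V_t(x) = \max_{x' \in L}\bigl[F(x',\theta_t) - C_t(x'-x) + \delta V_{t+1}(x')\bigr]$ and maximizer correspondence $\Phi_t$, and let $\Phi^\circ_t(x) = \argmax_{x' \in L}\bigl[F(x',\theta_t) - C_t(x'-x)\bigr]$ be the myopic one-step problem, so that $\widetilde x_t \in \Phi^\circ_t(\widetilde x_{t-1})$. The crux is the claim that if $y \geq \widetilde x_{t-1}$ then $\Phi_t(y)$ contains an element $\geq \widetilde x_t$. Granting it, the forward recursion $x_0 = \underline x$, $x_t \in \Phi_t(x_{t-1})$ with $x_t \geq \widetilde x_t$ produces the desired solution, the invariant $x_{t-1} \geq \widetilde x_{t-1}$ propagating from $x_0 = \widetilde x_0 = \underline x$ and any path through the one-step maximizers being optimal. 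To prove the claim I would first \emph{raise the base point}: since $F(\cdot,\theta_t) - C_t(\cdot - y)$ is supermodular in the action (supermodular $F$ plus modular $-C_t$) and has increasing differences in $(\text{action},y)$ by convexity of $C_t$, the basic comparative-statics result gives $\Phi^\circ_t(y) \geq_{\text{ss}} \Phi^\circ_t(\widetilde x_{t-1})$, so some $\widetilde x_t' \in \Phi^\circ_t(y)$ satisfies $\widetilde x_t' \geq \widetilde x_t$. Then I would \emph{pass from the myopic to the forward-looking one-step problem at the common base} $y$, adapting the join argument from the proof of \Cref{theorem:basic}: for $x' \in \Phi_t(y)$ one shows $x' \vee \widetilde x_t' \in \Phi_t(y)$ by bounding the $F$- and $V_{t+1}$-increments from below using supermodularity of $F(\cdot,\theta_t)$ and of $V_{t+1}$, invoking optimality of $\widetilde x_t'$ for the myopic objective, and observing that the residual collection of adjustment-cost terms vanishes identically because $C_t$ is modular. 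This yields $x' \vee \widetilde x_t' \in \Phi_t(y)$ with $x' \vee \widetilde x_t' \geq \widetilde x_t' \geq \widetilde x_t$.

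The main obstacle is the one thing the cost bookkeeping does not supply: the join step above goes through only if $V_{t+1}(\widetilde x_t') \geq V_{t+1}(x' \wedge \widetilde x_t')$, i.e. only if $V_{t+1}$ is suitably \emph{increasing} on $[\underline x,\bar x]$ --- and value functions of adjustment-cost problems need not be monotone in general. Establishing this calls for a le~Chatelier-type argument run inside the dynamic program, showing that with parameters nondecreasing and bounded below by $\underline\theta$ (at which $\underline x$ is frictionless-optimal) the continuation never benefits from a lower starting state; this must be carried out jointly with the induction that delivers supermodularity of each $V_t$, and the infinite horizon must be obtained as a limit of finite-horizon value functions, which is precisely where the equi-BCS hypothesis enters. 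The remaining ingredients --- the reduction to the box, the base-point comparative statics for $\Phi^\circ_t$, and the forward recursion --- are routine once these properties of the $V_t$ are in hand.
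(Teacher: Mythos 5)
Your plan takes a genuinely different route from the paper's: you run everything through the Bellman equation and value functions, whereas the paper never introduces a value function at all --- it takes a long-lived optimal path $(x'_t)_{t=1}^\infty$ with $\underline x \leq x'_t \leq \bar x$ (from \Cref{theorem:lechatelier-dynamic}), splices it with the short-lived equilibrium by replacing $x_t$ with $\widetilde x_T \vee x_t$ for $t \geq T$, proves by induction on $T$ that each splice preserves optimality, and uses equi-BCS only for the final limit $T \to \infty$. The problem is that your proposal has a genuine gap at exactly the point you yourself flag as ``the main obstacle,'' and the gap cannot be filled in the form you describe. The inequality $V_{t+1}(\widetilde x_t') \geq V_{t+1}(x' \wedge \widetilde x_t')$ does not follow from $V_{t+1}$ being increasing on $[\underline x,\bar x]$, because $V_{t+1}$ need not be increasing there under the theorem's hypotheses: comparing continuations from $x \leq y$ by lifting the optimal path from $x$ to $(y \vee x_s)_s$ does control the cost terms (via \Cref{lemma:mon_impl2}), but it requires $F(y \vee x_s,\theta_s) \geq F(x_s,\theta_s)$, which by supermodularity reduces to $F(y,\theta_s) \geq F(y \wedge x_s,\theta_s)$ --- false for an arbitrary state $y$ in the box (take a one-dimensional $F(\cdot,\bar\theta)$ that dips between $\underline x$ and $\bar x$ with a steep convex cost: starting at the dip is strictly worse than starting at $\underline x$). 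What \emph{is} provable is the comparison at the particular points generated by the short-lived equilibrium, and its proof must route through the \emph{myopic optimality} of $\widetilde x_T$ from base $\widetilde x_{T-1}$, combined with monotonicity of $C_T$, supermodularity and single-crossing differences; this is precisely the step $F(\widetilde x_T \vee x_t,\theta_t) \geq F(x_t,\theta_t)$ that the paper establishes directly on paths and then discounts and sums. The ``le Chatelier-type argument run inside the dynamic program'' that you defer is not a technical loose end --- it is the theorem, and stating it as monotonicity of $V_{t+1}$ aims at something stronger than what is true.

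A second, smaller gap: the value-function machinery requires the argmax in the Bellman equation (and in your $\Phi^\circ_t(y)$ at off-path states $y$) to be attained everywhere your recursion goes, together with a verification theorem guaranteeing that any selection from the one-step maximizers is globally optimal. The theorem assumes only that the long-lived problem from $x_0 = \underline x$ admits a solution, not that every continuation problem from every $y \in [\underline x,\bar x]$ does. The paper's path-surgery argument needs no existence at off-path states, which is one reason it avoids the Bellman route here (as it also does in \Cref{theorem:lechatelier-dynamic-strong}). If you drop the value function and run your join/raise-the-base-point argument directly on the tail of a given optimal path, using the myopic optimality of $\widetilde x_T$ in place of monotonicity of $V$, you will essentially recover the paper's proof.
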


The intuition is straightforward: short-lived agents adjust more sluggishly because they do not take into account that the less they adjust today, the more adjustment will be required tomorrow. Note that this result relies on stronger assumptions than our earlier results: in particular, convexity of costs.

The proof of \Cref{theorem:myopic-vs-fwd} (\cref{sec:appendix:pf_lechatelier-myopic-vs-fwd}) establishes that if $(x_t)_{t=1}^\infty$ solves the long-lived agent's problem and satisfies $\underline x \leq x_t \leq \bar x$ for every $t \in \N$,%
    \footnote{Such a solution exists by \Cref{theorem:lechatelier-dynamic}, provided there is a solution.}
then $( \widetilde x_t \vee x_t )_{t=1}^\infty$ also solves the long-lived agent's problem. By equi-BCS and a limit argument, it suffices to show that for every $T \in \{0,1,2,\dots\}$, the sequence $( \widetilde x_{\min\{t,T\}} \vee x_t )_{t=1}^\infty$ solves the long-lived agent's problem, where $\widetilde x_0 = \underline x$. We show this by induction on $T \in \{0,1,2,\dots\}$; the base case $T=0$ is immediate, and the induction step uses the supermodularity of the objective $F(\cdot,\theta)$ and the monotonicity, additive separability and convexity of each cost $C_t$.

%%%%%%%%%%%%%%%%%%%%%%%%%%%%%%%%%%%
%%%%%%%%%%%%%%%%%%%%%%%%%%%%%%%%%%%
\section{Necessity of monotonicity assumptions}
\label{sec:necessity}
%%%%%%%%%%%%%%%%%%%%%%%%%%%%%%%%%%%
%%%%%%%%%%%%%%%%%%%%%%%%%%%%%%%%%%%

The basic comparative-statics result for costless adjustment \parencite[see][Theorem~4]{MilgromShannon1994} includes a converse, which asserts that the ordinal complementarity assumptions on the objective are not only sufficient for comparative statics, but also necessary in a natural sense. When adjustment is costly, ordinal complementarity is still necessary.%
\footnote{Since ordinal complementarity is necessary for comparative statics in the costless case $C \equiv 0$, it is \emph{a fortiori} necessary for comparative statics to hold whatever the cost $C$.}
In this section, we show that the same is true of monotonicity-type assumptions on the cost function: they are necessary (in the same sense) for comparative statics.%
\footnote{Thanks to a referee and the editor for encouraging us to develop this material.}

Firstly, \Cref{theorem:basic} has a direct converse:

\begin{namedthm}[\Cref*{theorem:basic}$\boldsymbol{^\dag}$.]\label{proposition:basic_charac}
Assume that there exist distinct parameters $\underline \theta, \bar \theta \in \Theta$ such that $\underline \theta \leq \bar \theta$. For a cost function $C$, the following are equivalent:
\begin{enumerate}[label=(\alph*)]
\item \label{item:basic_charac:minmon} $C$ is minimally monotone.
\item \label{item:basic_charac:mcs} For any sublattice $X \subseteq L$ and any objective $F : X \times \Theta \to \R$ such that $F(x,\theta)$ is quasi-supermodular in $x$ and has single-crossing differences in $(x,\theta)$,
\begin{itemize}
\item $\bar \theta \geq \underline \theta$ implies $\widehat x\geq \underline x$ for some $\widehat x\in\argmax_{x\in X} G(x,\bar \theta)$, provided the argmax is nonempty, and
\item $\bar \theta \leq \underline \theta$ implies $\widehat x\leq \underline x$ for some $\widehat x\in\argmax_{x\in X} G(x,\bar \theta)$, provided the argmax is nonempty.
\end{itemize}
\end{enumerate}
\end{namedthm}

Minimal monotonicity is \emph{not} equivalent to the ``$\bar \theta \geq \underline \theta$'' half of property~\ref{item:basic_charac:mcs} alone. Rather, as noted in \cref{footnote:one-half}, the ``$\bar \theta \geq \underline \theta$'' half of \ref{item:basic_charac:mcs} is implied by one half of minimal monotonicity ($C( \varepsilon \vee 0 ) \leq C(\varepsilon)$ for every $\varepsilon \in \Delta L$), while the ``$\bar \theta \leq \underline \theta$'' half follows from the other half of minimal monotonicity.

\begin{proof}[\normalfont\bfseries Proof]
\ref{item:basic_charac:minmon} implies \ref{item:basic_charac:mcs} by \Cref{theorem:basic}. To show that \ref{item:basic_charac:mcs} implies \ref{item:basic_charac:minmon}, we prove the contrapositive: suppose that $C$ fails to be minimally monotone, meaning that there is an adjustment vector $\varepsilon \in \Delta L$ such that $C( \varepsilon \wedge 0 ) > C( \varepsilon )$ or $C( \varepsilon \vee 0 ) > C( \varepsilon )$; we will show that \ref{item:basic_charac:mcs} fails. Assume that $C( \varepsilon \vee 0 ) > C( \varepsilon )$; the other case is analogous. Let $\alpha = C( \varepsilon \vee 0 ) - C( \varepsilon ) > 0$ and $\beta = 2 \alpha + \max\{ 0, C( \varepsilon ) - C( \varepsilon \wedge 0 ), C( \varepsilon ) - C( 0 ) \}$.

Choose $\underline x, \widehat x \in L$ such that $\widehat x - \underline x = \varepsilon$, and note that $\widehat x \ngeq \underline x$ (else $\varepsilon \vee 0 = \varepsilon$, which would imply $\alpha=0$). Let $X = \{ \underline x \wedge \widehat x, \underline x, \widehat x, \underline x \vee \widehat x \}$; $X$ is a sublattice. Fix distinct $\underline \theta \leq \bar \theta$ in $\Theta$ (possible by hypothesis). Let $F( \underline x \wedge \widehat x,\underline \theta) = F( \underline x \wedge \widehat x,\bar \theta) = \alpha/2$, $F( \underline x,\underline \theta) = F( \underline x,\bar \theta) = \alpha$, $F( \widehat x,\underline \theta) = 0$ and $F( \widehat x,\bar \theta) = \beta$, $F( \underline x \vee \widehat x,\underline \theta) = \alpha/2$ and $F( \underline x \vee \widehat x,\bar \theta) = \alpha/2+\beta$.%
\footnote{For $\theta \in \Theta \setminus \{ \underline \theta, \bar \theta \}$, let $F(\cdot,\theta) = F(\cdot,\underline \theta)$ if $\theta \leq \underline \theta$ and $F(\cdot,\theta) = F(\cdot,\bar \theta)$ otherwise.}
Then $F(x,\theta)$ is supermodular in $x$ and has increasing differences in $(x,\theta)$, and we have
\begin{equation*}
\argmax_{x \in X} F(x,\underline \theta) = \{ \underline x \}
\quad \text{and} \quad
\argmax_{x \in X} G(x,\bar \theta) = \{ \widehat x \} .
\end{equation*}
Hence \ref{item:basic_charac:mcs} fails, as $\bar \theta \geq \underline \theta$ and $\widehat x \ngeq \underline x$.
\end{proof}

The converse of \Cref{theorem:lechatelier} is not true; instead, a property slightly weaker than monotonicity is necessary and sufficient for the Le Chatelier principle. Say that $C : \Delta L \to [0,\infty]$ is \emph{weakly monotone} if and only if $C(\varepsilon') \leq C(\varepsilon)$ whenever $0 \leq \varepsilon' \leq \varepsilon \vee 0$ or $0 \geq \varepsilon' \geq \varepsilon \wedge 0$. Equivalently, $C$ is weakly monotone if and only if it is minimally monotone and $C(\varepsilon') \leq C(\varepsilon)$ whenever $0 \leq \varepsilon' \leq \varepsilon$ or $0 \geq \varepsilon' \geq \varepsilon$. The latter requirement is monotonicity, except applied only to ``unidirectional'' adjustments (up in every dimension, $\varepsilon \geq 0$, or down in every dimension, $\varepsilon \leq 0$).

\begin{namedthm}[\Cref*{theorem:lechatelier}$\boldsymbol{^\dag}$.]\label{proposition:lechatelier_charac}
Assume that there exist distinct parameters $\underline \theta, \bar \theta \in \Theta$ such that $\underline \theta \leq \bar \theta$. For a cost function $C$, the following are equivalent:
\begin{enumerate}[label=(\alph*)]
\item \label{item:lechatelier_charac:weakmon} The adjustment cost $C$ is weakly monotone.
\item \label{item:lechatelier_charac:mcs} For any sublattice $X \subseteq L$ and any objective $F : X \times \Theta \to \R$ such that $F(x,\theta)$ is quasi-supermodular in $x$ and has single-crossing differences in $(x,\theta)$,
\begin{itemize}
\item $\bar \theta \geq \underline \theta$ implies that for any $\bar x \in \argmax_{x \in X} F(x,\bar \theta)$ such that $\bar x \geq \underline x$,%
\fnsuchan{ }%
we have $\bar x \geq \widehat x \geq \underline x$ for some $\widehat x \in \argmax_{x \in X} G(x,\bar \theta)$, provided the argmax is nonempty, and
\item $\bar \theta \leq \underline \theta$ implies that for any $\bar x \in \argmax_{x \in X} F(x,\bar \theta)$ such that $\bar x \leq \underline x$,%
\fnsuchan{ }%
we have $\bar x \leq \widehat x \leq \underline x$ for some $\widehat x \in \argmax_{x \in X} G(x,\bar \theta)$, provided the argmax is nonempty.
\end{itemize}
\end{enumerate}
\end{namedthm}

\Cref{theorem:lechatelier-dynamic} also has a converse: monotonicity is necessary as well as sufficient for the dynamic Le Chatelier principle. Define $\underline C, \bar C : \Delta L \to [0,\infty]$ by $\underline C \equiv 0$, $\bar C(0)=0$, and $\bar C(\varepsilon) = \infty$ for all $\varepsilon \neq 0$, and call a set $\mathcal{C}$ of cost functions \emph{admissible} if and only if it contains both $\underline C$ and $\bar C$.

\begin{namedthm}[\Cref*{theorem:lechatelier-dynamic}$\boldsymbol{^\dag}$.]\label{proposition:lechatelier-dynamic_charac}
Assume that there exist distinct parameters $\underline \theta, \theta^1, \theta^2, \bar \theta \in \Theta$ such that $\underline \theta \leq \theta^1 \leq \bar \theta \geq \theta^2 \geq \underline \theta$ and $\theta^1 \nleq \theta^2 \nleq \theta^1$. For an admissible set $\mathcal{C}$ of cost functions, the following are equivalent:
\begin{enumerate}[label=(\alph*)]
\item \label{item:lechatelier-dynamic_charac:mon} Every member of $\mathcal{C}$ is monotone.
\item \label{item:lechatelier-dynamic_charac:mcs} For any sublattice $X \subseteq L$, any objective $F : X \times \Theta \to \R$ such that $F(x,\theta)$ is quasi-supermodular in $x$ and has single-crossing differences in $(x,\theta)$, any parameter sequence $(\theta_t)_{t=1}^\infty \subseteq \Theta$, and any cost-function sequence $(C_t)_{t=1}^\infty \subseteq \mathcal{C}$, it holds that
\begin{itemize}
\item for any $\underline \theta \leq \bar \theta$ in $\Theta$ and any $\underline x\in \argmax_{x\in X} F(x,\underline \theta)$ and $\bar x\in \argmax_{x\in X} F(x,\bar \theta)$ such that $\underline x \leq \bar x$,%
\fnsuchmult{ }%
if $\underline\theta\leq \theta_t\leq \bar\theta$ for every $t \in \N$, then provided the long-lived agent's problem admits a solution, there is a solution $(x_t)_{t=1}^\infty$ that satisfies $\underline x\leq x_t\leq \bar x$ for every period $t \in \N$, and
\item for any $\underline \theta \geq \bar \theta$ in $\Theta$ and any $\underline x\in \argmax_{x\in X} F(x,\underline \theta)$ and $\bar x\in \argmax_{x\in X} F(x,\bar \theta)$ such that $\underline x \geq \bar x$,%
\fnsuchmult{ }%
if $\underline\theta\geq \theta_t\geq \bar\theta$ for every $t \in \N$, then provided the long-lived agent's problem admits a solution, there is a solution $(x_t)_{t=1}^\infty$ that satisfies $\underline x\geq x_t\geq \bar x$ for every period $t \in \N$.
\end{itemize}
\end{enumerate}
\end{namedthm}

The role of the assumption that there exist incomparable parameters ($\theta^1,\theta^2 \in \Theta$ such that $\theta^1 \nleq \theta^2 \nleq \theta^2$) is to avoid conflating property~\ref{item:lechatelier-dynamic_charac:mcs} with the weaker property that quantifies only over ``one-dimensional'' parameter sequences $(\theta_t)_{t=1}^\infty$, meaning those such that either $\theta_t \leq \theta_T$ or $\theta_T \leq \theta_t$ holds for all $t,T \in \N$. We do not know whether \ref{item:lechatelier-dynamic_charac:mon} is necessary also for the weaker ``one-dimensional'' property. However, \emph{weak} monotonicity is necessary, by \hyperref[proposition:lechatelier_charac]{\Cref*{theorem:lechatelier}$^\dag$} (see \cref{footnote:thm23_necessity} below).

The proof of \hyperref[proposition:lechatelier-dynamic_charac]{\Cref*{theorem:lechatelier-dynamic}$^\dag$} (\cref{sec:appendix:necessity:pf_lechatelier-dynamic_charac}) has three parts. First, \ref{item:lechatelier-dynamic_charac:mon} implies \ref{item:lechatelier-dynamic_charac:mcs} by \Cref{theorem:lechatelier-dynamic}. Second, if there is a $C \in \mathcal{C}$ that fails to be \emph{weakly} monotone, then \ref{item:lechatelier-dynamic_charac:mcs} fails by \hyperref[proposition:lechatelier_charac]{\Cref*{theorem:lechatelier}$^\dag$}.%
    \footnote{\label{footnote:thm23_necessity}By \hyperref[proposition:lechatelier_charac]{\Cref*{theorem:lechatelier}$^\dag$}, there exist suitable $X \subseteq L$ and $\widetilde F : X \times \Theta \to \R$ such that every $\widehat x \in M = \argmax_{x \in X} [ \widetilde F(x,\bar \theta) - C(x-\underline x) ]$ fails to lie between $\underline x$ and $\bar x$. Hence if $F = (1-\delta) \widetilde F$, $\theta_t = \bar \theta$ for every $t \in \N$, $C_1 = C$ and $C_t = \bar C$ for every $t \geq 2$, then every solution $(x_t)_{t=1}^\infty$ of the long-lived agent's problem has $x_1 \in M$, so $x_1$ fails to lie between $\underline x$ and $\bar x$.}
Third, suppose there is a $C \in \mathcal{C}$ that is weakly monotone but not monotone. In this case, we construct suitable $X \subseteq L$ and $F : X \times \Theta \to \R$ such that if $\theta_1=\theta^1$, $\theta_t=\theta^2$ for all $t \geq 2$, $C_1=\underline C$, $C_2 = C$ and $C_t = \bar C$ for every $t \geq 3$, then in every solution $(x_t)_{t=1}^\infty$ of the long-lived agent's problem, $x_2$ fails to lie between $\underline x$ and $\bar x$.

We do not have a converse for \Cref{theorem:lechatelier-dynamic-strong}. We conjecture that the strong dynamic Le Chatelier principle remains true under somewhat weaker assumptions, but showing this would seem to require a different proof strategy, since our strategy relies heavily on the stated assumptions. Identifying the necessary and sufficient conditions seems challenging; we leave this problem open.

\crefalias{section}{appsec}
\crefalias{subsection}{appsec}
\crefalias{subsubsection}{appsec}

\renewcommand*{\thesection}{Appendix}
%%%%%%%%%%%%%%%%%%%%%%%%%%%%%%%%%%%
%%%%%%%%%%%%%%%%%%%%%%%%%%%%%%%%%%%
\section[Appendix]{}
\label{sec:appendix}
%%%%%%%%%%%%%%%%%%%%%%%%%%%%%%%%%%%
%%%%%%%%%%%%%%%%%%%%%%%%%%%%%%%%%%%

\renewcommand*{\thesubsection}{\Alph{subsection}}
\setcounter{subsection}{0}

%%%%%%%%%%%%%%%%%%%%%%%%%%%%%%%%%%%
\subsection{Standard definitions}
\label{sec:appendix:definitions}
%%%%%%%%%%%%%%%%%%%%%%%%%%%%%%%%%%%

Given $X \subseteq \R$, a function $\psi : X \to (-\infty,\infty]$ is \emph{single-dipped} if and only if there is an $x \in X$ such that $\psi$ is decreasing on $\{ y \in X : y \leq x \}$ and increasing on $\{ y \in X : y \geq x \}$.
Given $X \subseteq \R^n$, a function $\psi : X \to (-\infty,\infty]$ is \emph{bounded on compact sets (BCS)} if and only if for each compact $Y \subseteq X$, there is a constant $K_Y > 0$ such that $\lvert \psi(y) \rvert \leq K_Y$ for every $y \in Y$.
% and a collection $\{ \phi_k \}_{k \in \mathcal{K}}$ of functions $\phi_k : X \to (-\infty,\infty]$ is \emph{equi-BCS} if and only if the map $x \mapsto \sup_{k \in \mathcal{K}} \lvert \phi_k \rvert$ is BCS.

% For $x,y \in \R^n$,
% we write
% $x \wedge y
% = ( \min\{x_1,y_1\}, \dots, \min\{x_n,y_n\} )$
% and
% $x \vee y
% = ( \max\{x_1,y_1\}, \dots, \max\{x_n,y_n\} )$.
% A set $L \subseteq \R^n$ is a \emph{sublattice} of $\R^n$
% if and only if for any $x,y \in L$,
% the vectors $x \wedge y$ and $x \vee y$ also belong to $L$.

% Similarly, for any nonempty set $X \subseteq \R^n$,
% we write
% \begin{equation*}
% \bigwedge X
% = \left( \inf_{x \in X} x_1, \dots, \inf_{x \in X} x_n \right)
% \quad \text{and} \quad
% \bigvee X
% = \left( \sup_{x \in X} x_1, \dots, \sup_{x \in X} x_n \right) .
% \end{equation*}
% A set $L \subseteq \R^n$ is a \emph{subcomplete sublattice} of $\R^n$ if for every nonempty $X \subseteq L$, the vectors $\bigwedge X$ and $\bigvee X$ belong to $L$.
% If a nonempty set $X \subseteq \R^n$ contains $\bigvee X$,
% we call $\bigvee X$ the \emph{greatest element} of $X$, and denote it by $\max X$.
% Similarly for the \emph{least element,} denoted $\min X$.

Fix a sublattice $L$ of $\R^n$. A function $\phi : L \to \R$ is called \emph{supermodular} if $\phi(x) - \phi(x \wedge y) \leq \phi(x \vee y) - \phi(y)$ for any $x,y \in L$, \emph{quasi-supermodular} if $\phi(x) - \phi(x \wedge y) \geq \mathrel{(>)} 0$ implies $\phi(x \vee y) - \phi(y) \geq \mathrel{(>)} 0$, and \emph{(quasi-)submodular} if $-\phi$ is (quasi-)supermodular. Clearly supermodularity implies quasi-supermodularity. If $n=1$, then every function $\phi : L \to \R$ is automatically supermodular.

Fix a partially ordered set $\Theta$. A function $F : L \times \Theta \to \R$ has \emph{(strict) increasing differences} if $F(y,\theta)-F(x,\theta)$ is (strictly) increasing in $\theta$ whenever $x \leq y$, has \emph{single-crossing differences} if $F(y,\theta')-F(x,\theta') \geq \mathrel{(>)} 0$ implies $F(y,\theta'')-F(x,\theta'') \geq \mathrel{(>)} 0$ whenever $x \leq y$ and $\theta' \leq \theta''$, has \emph{strict single-crossing differences} if
$F(y,\theta')-F(x,\theta') \geq 0$ implies $F(y,\theta'')-F(x,\theta'') > 0$ whenever $x < y$ and $\theta' < \theta''$, and has \emph{(strict) decreasing differences} if $-F$ has (strict) increasing differences. A function $F : L \times \Theta \to \R_{++}$ has \emph{(strict) log increasing differences} if and only if $\ln F$ has (strict) increasing differences. (Strict) increasing differences and (strict) log increasing differences each imply (strict) single-crossing differences.

Quasi-supermodularity and single-crossing differences are \emph{ordinal} properties: they are preserved by strictly increasing transformations.%
    \footnote{That is, for any strictly increasing $f : \R \to \R$, if $\phi$ is quasi-supermodular, then so is $f \circ \phi$, and if $F(x,\theta)$ has increasing differences in $(x,\theta)$, then so does $\widetilde F(x,\theta) = f( F(x,\theta) )$.}
By contrast, supermodularity and increasing differences are in general preserved only by strictly increasing \emph{affine} transformations: in other words, they are \emph{cardinal.}

The sum of quasi-supermodular functions need not be quasi-supermodular. Likewise, single-crossing differences is not preserved by summation. By contrast, the sum of supermodular functions is supermodular, and the sum of functions with increasing differences also has increasing differences.

%%%%%%%%%%%%%%%%%%%%%%%%%%%%%%%%%%%
\subsection{Extension: uncertain adjustment cost}
\label{sec:appendix:uncertain}
%%%%%%%%%%%%%%%%%%%%%%%%%%%%%%%%%%%

Several of our results are robust to uncertainty about adjustment costs. For \Cref{theorem:basic,theorem:lechatelier}, augment the setting from \cref{sec:setting} as follows. Let all uncertainty be summarized by a random variable $S$, called ``the state of the world.'' The agent's adjustment cost is $C_s(\cdot)$ in state $S=s$. Her ex-ante payoff is
\begin{equation*}
\widetilde G(x,\theta)
= \E\bigl[ u\bigl(
F(x,\theta) - C_S(x-\underline x)
\bigr) \bigr] ,
\end{equation*}
where $u$ is an increasing function $\R \to \R$.

\begin{namedthm}[\Cref*{theorem:basic}$\boldsymbol{'}$.]\label{theorem:basic-uncertain}
Suppose that the objective $F(x,\theta)$ is quasi-supermodular in $x$ and has single-crossing differences in $(x,\theta)$, and that at almost every realization $s$ of the state $S$, the adjustment cost $C_s$ is minimally monotone. If $\bar \theta \geq \underline \theta$, then $\widehat x\geq \underline x$ for some $\widehat x\in\argmax_{x\in L} \widetilde G(x,\bar \theta)$, provided the argmax is nonempty.
\end{namedthm}

\begin{proof}[\normalfont\bfseries Proof]
Let $x'\in \argmax_{x\in L} \widetilde G(x,\bar \theta)$, and let $\widehat x = \underline x\vee x'$. For almost every realization $s$, applying the proof of \Cref{theorem:basic} yields $F(\widehat x,\bar \theta)-C_s(\widehat x-\underline x)
\geq F(x',\bar \theta)-C_s(x'-\underline x)$. Hence $\widetilde G(\widehat x,\bar \theta) \geq \widetilde G(x',\bar \theta)$ as $u$ is increasing, which since $x'$ maximizes $\widetilde G(\cdot,\bar \theta)$ on $L$ implies that $\widehat x$ does, too. Clearly $\widehat x \geq \underline x$.
\end{proof}

\begin{namedthm}[\Cref*{theorem:lechatelier}$\boldsymbol{'}$.]\label{theorem:lechatelier-uncertain}
Suppose that the objective $F(x,\theta)$ is quasi-supermodular in $x$ and has single-crossing differences in $(x,\theta)$, and that at almost every realization $s$ of the state $S$, the adjustment cost $C_s$ is monotone. Fix $\bar \theta \geq \underline \theta$, and let $\bar x\in \argmax_{x\in L} F(x,\bar \theta)$ satisfy $\bar x\geq \underline x$.%
\fnsuchan{ }%
Then
\begin{itemize}
\item $\bar x \geq \widehat x \geq \underline x$ for some $\widehat x\in \argmax_{x\in L} \widetilde G(x,\bar \theta)$, provided the argmax is nonempty, and
\item if $\bar x$ is the largest element of $\argmax_{x\in L} F(x,\bar \theta)$, then $\bar x \geq \widehat x$ for any $\widehat x\in \argmax_{x\in L} \widetilde G(x,\bar \theta)$.
\end{itemize}
\end{namedthm}

The proof is that of \Cref{theorem:lechatelier}, modified along the lines of the proof of \hyperref[theorem:basic-uncertain]{\Cref*{theorem:basic}$'$} (above). We omit the details.

% \begin{proof}[\normalfont\bfseries Proof]
% For the first part, \hyperref[theorem:basic-uncertain]{\Cref*{theorem:basic}}$'$ lets us choose $x' \in \argmax_{x\in L} \widetilde G(x,\bar \theta)$ such that $x'\geq \underline x$. For almost every realization $s$ of $S$, the proof of the first part of \Cref{theorem:lechatelier} yields $F(x',\bar \theta) - C_s(x'-\underline x) \leq F(\bar x \wedge x',\bar \theta) - C_s(\bar x \wedge x'-\underline x)$. Hence $\widetilde G(x',\bar \theta) \leq \widetilde G(\bar x \wedge x',\bar \theta)$ since $u$ is increasing, which since $x'$ maximizes $G(\cdot,\bar \theta)$ on $L$ implies that $\bar x \wedge x'$ does, too. Clearly $\bar x \geq \bar x \wedge x' \geq \underline x$.

% For the second part, fix any $\widehat x\in \argmax_{x\in L} \widetilde G(x,\bar \theta)$. The proof of the second part of \Cref{theorem:lechatelier} yields $\widetilde G(\widehat x,\bar \theta) \geq \widetilde G( \bar x\wedge \widehat x,\bar \theta)$ and, for almost every realization $s$ of $S$, $C_s(\widehat x-\underline x)\geq C_s(\bar x\wedge \widehat x-\underline x)$. It follows that $F(\widehat x,\bar \theta) \geq F(\bar x\wedge \widehat x,\bar \theta)$, whence $F(\bar x\vee \widehat x,\bar \theta) \geq F(\bar x,\bar \theta)$ by quasi-supermodularity. Thus $\bar x \geq \bar x\vee \widehat x$ since $\bar x$ is the largest maximizer of $F(\cdot,\bar \theta)$, which is to say that $\bar x \geq \widehat x$.
% \end{proof}

To extend our dynamic results, augment the dynamic model of \cref{sec:dynamic:setting} as follows. Let the period-$t$ adjustment cost function be $C_{S_t,t}$, where $S_t$ is a random variable. The agent's period-$t$ expected payoff is $\widetilde G_t(x_t,x_{t-1}) = \E\bigl[ u\bigl( F(x_t,\theta_t) - C_{S_t,t}(x_t-x_{t-1}) \bigr) \bigr]$, where $u : \R \to \R$ is increasing. Given her period-$0$ choice $x_0 \in L$, the long-lived agent's problem is to choose a sequence $(x_t)_{t=1}^\infty$ in $L$ to maximize $\widetilde{\mathcal{G}}((x_t)_{t=1}^\infty,x_0) = \sum_{t=1}^{\infty} \delta^{t-1} \widetilde G_t(x_t,x_{t-1})$. We assume that the states $(S_t)_{t=1}^\infty$ are independent across periods.%
\footnote{\hyperref[theorem:lechatelier-dynamic-uncertain]{\Cref*{theorem:lechatelier-dynamic}$'$} is true without independence, but our \emph{interpretation} of it hinges on independence. The result is about sequences $(x_t)_{t=1}^\infty$ that maximize the time-0 expected payoff $\widetilde{\mathcal{G}}(\cdot,x_0)$, not about optimal real-time choice of actions. The time-0 and real-time problems are equivalent if and only if the agent does not learn over time about the realizations of future states, and this is ensured by independence.}

\begin{namedthm}[\Cref*{theorem:lechatelier-dynamic}$\boldsymbol{'}$.]\label{theorem:lechatelier-dynamic-uncertain}
Suppose that the objective $F(x,\theta)$ is quasi-supermodular in $x$ and has single-crossing differences in $(x,\theta)$, and that in each period $t$, at almost every realization $s_t$ of the state $S_t$, the adjustment cost $C_{s_t,t}$ is monotone. Fix $\bar \theta \geq \underline \theta$, and let $\bar x\in \argmax_{x\in L} F(x,\bar \theta)$ satisfy $\bar x\geq \underline x$.%
\fnsuchan{ }%
If $\underline\theta\leq \theta_t\leq \bar\theta$ for every $t \in \N$, then provided the long-lived agent's problem admits a solution, there is a solution $(x_t)_{t=1}^\infty$ that satisfies $\underline x\leq x_t\leq \bar x$ for every period $t \in \N$.
\end{namedthm}

Again, a simple ``a.s.'' modification to the proof \Cref{theorem:lechatelier-dynamic} delivers this result. By contrast, the proof of \Cref{theorem:lechatelier-dynamic-strong} is not easily modified to accommodate uncertain cost, except in the risk-neutral case (when $u$ is affine).

%%%%%%%%%%%%%%%%%%%%%%%%%%%%%%%%%%%
\subsection{Proof of \texorpdfstring{\hyperref[theorem:basic-constraint]{\Cref*{theorem:basic}$^*$}}{Theorem~\ref{theorem:basic}*}}
\label{sec:appendix:pf_basic-constraint}
%%%%%%%%%%%%%%%%%%%%%%%%%%%%%%%%%%%

Let $x'\in \argmax_{x\in \bar L} G(x,\bar \theta)$. Define $\widehat x = \underline x\vee x'$; obviously $\widehat x \geq \underline x$, and $\widehat x$ belongs to $\bar L$ since $\bar L \geq_{\text{ss}} \underline L \ni \underline x$. We claim that $\widehat x$ maximizes $G(\cdot,\bar \theta)$ on $\bar L$. We have $F(\underline x,\underline \theta) \geq F(\underline x \wedge x', \underline \theta)$ by definition of $\underline x$ since $\underline x \wedge x' \in \underline L$ (by $\bar L \geq_{\text{ss}} \underline L \ni \underline x$). Thus $F(\underline x \vee x',\underline \theta) \geq F(x', \underline \theta)$ by quasi-supermodularity, whence $F(\underline x \vee x',\bar \theta) \geq F(x', \bar \theta)$ by single-crossing differences. Furthermore, by minimal monotonicity, $C( \underline x \vee x' - \underline x ) = C( ( x' - \underline x ) \vee 0 ) \leq C( x' - \underline x )$.
Thus
\begin{equation*}
G(\widehat x,\bar \theta)
= F(\underline x\vee x',\bar \theta)-C(\underline x\vee x'-\underline x)
\geq F(x',\bar \theta)-C(x'-\underline x)
= G(x',\bar \theta).
\end{equation*}
Since $x'$ maximizes $G(\cdot,\bar \theta)$ on $\bar L$, it follows that $\widehat x$ does, too. \qed

%%%%%%%%%%%%%%%%%%%%%%%%%%%%%%%%%%%
\subsection{Proof of \texorpdfstring{\Cref{proposition:basic-strict}}{Proposition~\ref{proposition:basic-strict}}}
\label{sec:appendix:pf_basic-strict}
%%%%%%%%%%%%%%%%%%%%%%%%%%%%%%%%%%%

Let $\widehat x \in \argmax_{x\in L} G(x,\bar \theta)$, and suppose toward a contradiction that $\widehat x \ngeq \underline x$. Then $\underline x \vee \widehat x > \widehat x$. The proof of \Cref{theorem:basic} yields $F(\underline x \vee \widehat x,\bar \theta) \geq F(\widehat x, \bar \theta)$ and $C(\underline x\vee \widehat x-\underline x)\leq C(\widehat x-\underline x)$, where the first inequality is strict if the single-crossing differences of $F$ is strict, and the second inequality is strict if the minimal monotonicity of $C$ is strict. In either case, we have
\begin{equation*}
G(\underline x\vee \widehat x,\bar \theta)
= F(\underline x\vee \widehat x,\bar \theta)
- C(\underline x\vee \widehat x-\underline x)
> F(\widehat x,\bar \theta)
- C(\widehat x-\underline x)
= G(\widehat x,\bar \theta) ,
\end{equation*}
which contradicts the fact that $\widehat x$ maximizes $G(\cdot,\bar \theta)$ on $L$. \qed

%%%%%%%%%%%%%%%%%%%%%%%%%%%%%%%%%%%
\subsection{Proof of \texorpdfstring{\Cref{claim:wishful_spm} in \cref{sec:mcs:wishful}}{Claim~\ref{claim:wishful_spm} in section~\ref{sec:mcs:wishful}}}
\label{sec:appendix:pf_wishful_spm}
%%%%%%%%%%%%%%%%%%%%%%%%%%%%%%%%%%%

Label the elements of $\mathcal{Y}$ as $\mathcal{Y} = \{y_1,\dots,y_N\}$, where $y_1 < \cdots < y_N$. Let $\phi(c,y) = u_1(c) + u_2\left( (1+r) (w-c) + y \right)$ for each $c \in [0,w]$ and $y \in \mathcal{Y}$. The function $\phi$ is supermodular since $u_2$ is concave. For all $c \in [0,w]$ and $G \in \mathcal{G}$, 
\begin{equation*}
    U(c,G)
    = \int_{\mathcal{Y}} \phi(c,y) G(\mathrm{d} y)
    = \phi(c,y_N) + \sum_{n=1}^{N-1} \bigl[ \phi(c,y_{n+1}) - \phi(c,y_n) \bigr] \bigl[ -G(y_n) \bigr]
\end{equation*}
from telescoping / integration by parts. Thus for any $c,c' \in [0,w]$ and $G,H \in \mathcal{G}$, letting $k = \phi(c,y_N) - \phi(c \wedge c',y_N) = \phi(c \vee c',y_N) - \phi(c',y_N)$, we have
\begin{multline*}
U(c,G) - U(c \wedge c', G \wedge_1 H)
\\
\begin{aligned}
&= k + \sum_{n=1}^{N-1}
\bigl[ \phi(c,y_{n+1}) - \phi(c \wedge c',y_n) \bigr]
\bigl[ (G \wedge_1 H)(y_n) - G(y_n) \bigr]
\\
&= k + \sum_{n=1}^{N-1}
\bigl[ \phi(c,y_{n+1}) - \phi(c \wedge c',y_n) \bigr]
\bigl[ H(y_n) - (G \vee_1 H)(y_n) \bigr]
\\
&\leq k + \sum_{n=1}^{N-1}
\bigl[ \phi(c \vee c',y_{n+1}) - \phi(c',y_n) \bigr]
\bigl[ H(y_n) - (G \vee_1 H)(y_n) \bigr]
\\
&= U(c \vee c', G \vee_1 H) - U(c',H) ,
\end{aligned}
\end{multline*}
where the inequality holds since $\phi$ is supermodular and $H \leq_1 G \vee_1 H$. \qed

%%%%%%%%%%%%%%%%%%%%%%%%%%%%%%%%%%%
\subsection{Proof of \texorpdfstring{\Cref{proposition:lechatelier-strict}}{Proposition~\ref{proposition:lechatelier-strict}}}
\label{sec:appendix:pf_lechatelier-strict}
%%%%%%%%%%%%%%%%%%%%%%%%%%%%%%%%%%%

Fix any $\widehat x \in \argmax_{x\in L} G(x,\bar \theta)$. We have $\widehat x \geq \underline x$ by \Cref{proposition:basic-strict}. To show that $\bar x \geq \widehat x$, suppose toward a contradiction that $\bar x \ngeq \widehat x$. The proof of \Cref{theorem:lechatelier} yields $F(\widehat x,\bar \theta) \leq F(\bar x\wedge \widehat x,\bar \theta)$ and $C(\widehat x-\underline x) < C(\bar x\wedge \widehat x-\underline x)$, where the latter inequality is strict since $\widehat x-\underline x \neq \bar x\wedge \widehat x-\underline x$ and $C$ is \emph{strictly} monotone. Thus
\begin{equation*}
G(\bar x\wedge \widehat x,\bar \theta)
= F(\bar x\wedge \widehat x,\bar \theta)
- C(\bar x\wedge \widehat x-\underline x)
> F(\widehat x,\bar \theta)
- C(\widehat x-\underline x)
= G(\widehat x,\bar \theta) ,
\end{equation*}
which contradicts the fact that $\widehat x$ maximizes $G(\cdot,\bar \theta)$ on $L$. \qed

%%%%%%%%%%%%%%%%%%%%%%%%%%%%%%%%%%%
\subsection{A monotonicity lemma}
\label{sec:appendix:mon_impl}
%%%%%%%%%%%%%%%%%%%%%%%%%%%%%%%%%%%

The following lemma will be used below in the proofs of \Cref{theorem:lechatelier-dynamic,theorem:myopic-vs-fwd}.

\begin{lemma}\label{lemma:mon_impl2}
If $C : \Delta L \to [0,\infty]$ is monotone, then $C( z \vee x - z \vee y ) \leq C( x - y ) \geq C( z \wedge x - z \wedge y )$ for any $x,y,z \in L$.
\end{lemma}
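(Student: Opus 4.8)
The plan is to establish the inequality coordinate by coordinate, reducing to a one-variable fact and then invoking the coordinatewise characterization of monotonicity from \cref{sec:setting:cost_assns}.

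First I would record the following elementary fact about real numbers: for any $a,b,c \in \R$, both $\max\{c,a\} - \max\{c,b\}$ and $\min\{c,a\} - \min\{c,b\}$ lie weakly between $0$ and $a-b$, meaning that if $a \geq b$ then each of these differences lies in $[0,a-b]$, while if $a \leq b$ then each lies in $[a-b,0]$. This is immediate from the fact that $t \mapsto \max\{c,t\}$ and $t \mapsto \min\{c,t\}$ are nondecreasing and $1$-Lipschitz (equivalently, from a one-line case split on the relative order of $a$, $b$, $c$); the $\min$ case moreover reduces to the $\max$ case via $\min\{c,t\} = -\max\{-c,-t\}$.

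Second, I would apply this coordinatewise. Fixing a dimension $i$ and putting $a = x_i$, $b = y_i$, $c = z_i$, the fact gives that $(z \vee x - z \vee y)_i = \max\{z_i,x_i\} - \max\{z_i,y_i\}$ lies weakly between $0$ and $x_i - y_i$, i.e. either $0 \leq (z \vee x - z \vee y)_i \leq (x-y)_i$ or $0 \geq (z \vee x - z \vee y)_i \geq (x-y)_i$. Since this holds in every dimension, $z \vee x - z \vee y$ is ``between $0$ and $x-y$'' in the sense of the equivalent formulation of monotonicity recalled in \cref{sec:setting:cost_assns}, so monotonicity of $C$ yields $C(z \vee x - z \vee y) \leq C(x-y)$. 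Running the identical argument with $\min$ in place of $\max$ gives $C(z \wedge x - z \wedge y) \leq C(x-y)$, which completes the proof.

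I do not expect a substantive obstacle here; the argument is routine once the one-variable fact is isolated. The only point requiring any care is keeping track of the sign of $a-b$, which is why it is cleanest to phrase the one-variable fact as ``lies weakly between $0$ and $a-b$'' rather than as a fixed pair of one-sided inequalities.
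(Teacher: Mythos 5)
Your proof is correct and follows essentially the same route as the paper's: both reduce the claim to the coordinatewise statement that $(z \vee x - z \vee y)_i$ lies weakly between $0$ and $(x-y)_i$ (the paper via an explicit four-way case split on the order of $x_i, z_i, y_i$, you via the nondecreasing $1$-Lipschitz property of $t \mapsto \max\{c,t\}$, which is the same observation packaged differently) and then invoke the ``between $0$ and $\varepsilon$'' formulation of monotonicity. No gap.
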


\begin{proof}[\normalfont\bfseries Proof]
We shall prove the first inequality; the second follows similarly. By monotonicity, it suffices to show that for each $i$, one of the following holds:
\begin{enumerate}[label=(\alph*)]
\item \label{bullet:mon_impl2_a} $0 \leq ( z \vee x - z \vee y )_i \leq ( x - y )_i$
\item \label{bullet:mon_impl2_b} $0 \geq ( z \vee x - z \vee y )_i \geq ( x - y )_i$.
\end{enumerate}
If $x_i \geq z_i \geq y_i$ then \ref{bullet:mon_impl2_a} holds by inspection, if $x_i \leq z_i \leq y_i$ then \ref{bullet:mon_impl2_b} holds by inspection, if $x_i \leq z_i \geq y_i$ then \ref{bullet:mon_impl2_a} or \ref{bullet:mon_impl2_b} holds since $( z \vee x - z \vee y )_i = 0$, and if $x_i \geq z_i \leq y_i$ then \ref{bullet:mon_impl2_a} or \ref{bullet:mon_impl2_b} holds since $( z \vee x - z \vee y )_i = ( x - y )_i$.
\end{proof}

The following corollary will be used below to prove \Cref{theorem:lechatelier-dynamic-myopic,theorem:myopic-vs-fwd}.

\begin{corollary}\label{corollary:mon_impl}
Let $C : \Delta L \to [0,\infty]$ be monotone, and consider $x,y,z \in L$.
\begin{itemize}
\item If $z \leq y$, then $C( z \vee x - y ) \leq C( x - y )$.
\item If $z \geq y$, then $C( z \wedge x - y ) \leq C( x - y )$.
\end{itemize}
\end{corollary}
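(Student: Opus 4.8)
The plan is to obtain both bullets as immediate specializations of \Cref{lemma:mon_impl2}, choosing the auxiliary vector $z$ so that one of the two lattice operations appearing in the lemma collapses to the identity. Concretely: to prove the first bullet I would invoke the first inequality of \Cref{lemma:mon_impl2}, and to prove the second bullet I would invoke the second inequality, in each case applied to the very same triple $x,y,z$.

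For the first bullet, suppose $z \leq y$. Then $z \vee y = y$, so $z \vee x - z \vee y = z \vee x - y$, and the first inequality of \Cref{lemma:mon_impl2} reads $C(z \vee x - y) \leq C(x - y)$, which is exactly the claim. For the second bullet, suppose instead $z \geq y$. Then $z \wedge y = y$, so $z \wedge x - z \wedge y = z \wedge x - y$, and the second inequality of \Cref{lemma:mon_impl2} gives $C(z \wedge x - y) \leq C(x - y)$, as required.

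That is the entire argument, so there is no real obstacle here: the only substantive point is the trivial bookkeeping observation that $z \leq y$ forces $z \vee y = y$ and $z \geq y$ forces $z \wedge y = y$. (One could equally bypass \Cref{lemma:mon_impl2} and argue straight from monotonicity of $C$: when $z \leq y$, in each coordinate $i$ one has either $x_i \geq z_i$, in which case $(z \vee x - y)_i = (x-y)_i$, or $x_i < z_i \leq y_i$, in which case $(x-y)_i \leq (z \vee x - y)_i \leq 0$; either way the vector $z \vee x - y$ lies ``between $0$ and $x-y$'' coordinatewise, so monotonicity delivers the inequality, and the case $z \geq y$ is symmetric. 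But since \Cref{lemma:mon_impl2} is already available, the two-line derivation above is the cleanest route.)
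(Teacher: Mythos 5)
Your proof is correct and is exactly the intended derivation: the paper states this as an unproved corollary of \Cref{lemma:mon_impl2}, and your observation that $z \leq y$ forces $z \vee y = y$ (resp.\ $z \geq y$ forces $z \wedge y = y$) is precisely the specialization that makes the lemma's two inequalities collapse to the two bullets. Nothing further is needed.
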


%%%%%%%%%%%%%%%%%%%%%%%%%%%%%%%%%%%
\subsection{Proof of \texorpdfstring{\Cref{theorem:lechatelier-dynamic}}{Theorem~\ref{theorem:lechatelier-dynamic}}}
\label{sec:appendix:pf_lechatelier-dynamic}
%%%%%%%%%%%%%%%%%%%%%%%%%%%%%%%%%%%

Let $(x_t)_{t=1}^\infty$ maximize $\mathcal{G}(\cdot,x_0)$. We shall show that $( \bar x \wedge (\underline x \vee x_t) )_{t=1}^\infty$ also maximizes $\mathcal{G}(\cdot,x_0)$; this suffices since $\underline x \leq \bar x \wedge (\underline x \vee x_t) \leq \bar x$ for each $t$.

We first show that $(\widehat x_t)_{t=1}^\infty = (\underline x \vee x_t)_{t=1}^\infty$ maximizes $\mathcal{G}(\cdot,x_0)$. For every $t \in \N$, we have $F(\underline x,\underline \theta)\geq F(\underline x\wedge x_t,\underline\theta)$ by definition of $\underline x$, which by quasi-supermodularity implies that $F(\underline x\vee x_t,\underline \theta)\geq F(x_t,\underline\theta)$, whence $F(\underline x\vee x_t,\theta_t)\geq F(x_t,\theta_t)$ by single-crossing differences and $\theta_t \geq \underline \theta$. Thus $\mathcal{F}((\underline x \vee x_t)_{t=1}^\infty)\geq \mathcal{F}((x_t)_{t=1}^\infty)$. Furthermore, for every $t \in \N$, we have $C_t( \underline x \vee x_t - \underline x \vee x_{t-1} ) \leq C_t( x_t - x_{t-1} )$ by \Cref{lemma:mon_impl2} (\cref{sec:appendix:mon_impl}) since $C_t$ is monotone, so $\mathcal{C}(x_0,(\underline x \vee x_t)_{t=1}^\infty)\leq \mathcal{C}(x_0,(x_t)_{t=1}^\infty)$. So $\mathcal{G}((\underline x \vee x_t)_{t=1}^\infty,x_0)\geq \mathcal{G}((x_t)_{t=1}^\infty,x_0)$, which since $(x_t)_{t=1}^\infty$ maximizes $\mathcal{G}(\cdot,x_0)$ implies that $(\widehat x_t)_{t=1}^\infty = (\underline x \vee x_t)_{t=1}^\infty$ does, too.

It remains to show that $(\bar x\wedge \widehat x_t)_{t=1}^\infty$ also maximizes $\mathcal{G}(\cdot,x_0)$. For every $t \in \N$, we have $F(\bar x\vee \widehat x_t,\bar\theta) \leq F(\bar x,\bar \theta)$ by definition of $\bar x$, which by quasi-supermodularity implies that $F(\widehat x_t,\bar\theta) \leq F(\bar x\wedge \widehat x_t,\bar \theta)$, whence $F(\widehat x_t,\theta_t) \leq F(\bar x\wedge \widehat x_t,\theta_t)$ by single-crossing differences and $\theta_t \leq \bar \theta$. Thus $\mathcal{F}((\widehat x_t)_{t=1}^\infty) \leq \mathcal{F}((\bar x \wedge \widehat x_t)_{t=1}^\infty)$. Furthermore, for every $t \in \N$, we have $C_t( \widehat x_t - \widehat x_{t-1} ) \geq C_t( \bar x \wedge \widehat x_t - \bar x \wedge \widehat x_{t-1} )$ by \Cref{lemma:mon_impl2} since $C_t$ is monotone, so $\mathcal{C}(x_0,(\widehat x_t)_{t=1}^\infty) \geq \mathcal{C}(x_0,(\bar x \wedge \widehat x_t)_{t=1}^\infty)$. So $\mathcal{G}((\widehat x_t)_{t=1}^\infty,x_0) \leq \mathcal{G}((\bar x \wedge \widehat x_t)_{t=1}^\infty,x_0)$, which since $(\widehat x_t)_{t=1}^\infty$ maximizes $\mathcal{G}(\cdot,x_0)$ implies that $(\bar x \wedge \widehat x_t)_{t=1}^\infty$ does, too. \qed

%%%%%%%%%%%%%%%%%%%%%%%%%%%%%%%%%%%
\subsection{Proof of \texorpdfstring{\Cref{theorem:lechatelier-dynamic-strong}}{Theorem~\ref{theorem:lechatelier-dynamic-strong}}}
\label{sec:appendix:pf_lechatelier-dynamic-strong}
%%%%%%%%%%%%%%%%%%%%%%%%%%%%%%%%%%%

For any sequence $\boldsymbol{x} = (x_t)_{t=1}^\infty$ in $L$ and any $T \in \N$, let $M_T \boldsymbol{x}$ denote the sequence in $L$ whose $t^\text{th}$ entry is $x_t$ for $t<T$ and $x_{t-1} \vee x_t$ for $t \geq T$.

Assume that the agent's problem admits a solution. Let $\boldsymbol{x}^1 = (x^1_t)_{t=1}^\infty$ be a solution satisfying $\underline x\leq x^1_t\leq \bar x$ in every period $t$; such a solution exists by \Cref{theorem:lechatelier-dynamic}. Define $X_t = x^1_1 \vee x^1_2 \vee \cdots \vee x^1_{t-1} \vee x^1_t$ for $t \in \N$, and $X_0 = \underline x$.

Write $\boldsymbol{x}^T = M_T M_{T-1} \cdots M_3 M_2 \boldsymbol{x}^1$ for $T \geq 2$. By inspection, the first $T$ entries of $\boldsymbol{x}^T$ are $X_1,X_2,\dots,X_{T-1},X_T$.
Clearly $\underline x \leq X_t \leq X_{t+1} \leq \bar x$ for any period $t \in \N$. To prove the theorem, we need only show that $\boldsymbol{x}^\infty = (X_1,X_2,X_3,\dots)$ is optimal.

It suffices to show for each $T \in \N$ that $\boldsymbol{x}^T$ is optimal. For then, letting $V$ be the optimal value and noting that both $\boldsymbol{x}^T = (x_t)_{t=1}^\infty$ and $\boldsymbol{x}^\infty$ have $X_1,\dots,X_T$ as their first $T$ entries, we have
\begin{align}
0
\geq \mathcal{G}(\boldsymbol{x}^\infty,x_0)
- V
&= \mathcal{G}(\boldsymbol{x}^\infty,x_0)
- \mathcal{G}(\boldsymbol{x}^T,x_0)
\nonumber\\
&= \delta^T \left[
\mathcal{G}((X_t)_{t=T+1}^\infty,X_T)
- \mathcal{G}((x_t)_{t=T+1}^\infty,X_T)
\right]
\nonumber\\
&= \delta^T \left[
\mathcal{F}((X_t)_{t=T+1}^\infty)
- \mathcal{F}((x_t)_{t=T+1}^\infty)
\right]
\nonumber\\
&\qquad- \delta^T \left[
\mathcal{C}(X_T,(X_t)_{t=T+1}^\infty)
- \mathcal{C}(X_T,(x_t)_{t=T+1}^\infty)
\right]
\nonumber\\
&\geq \delta^T \left[
\mathcal{F}((X_t)_{t=T+1}^\infty)
- \mathcal{F}((x_t)_{t=T+1}^\infty)
\right] ,
\label{eq:fwd_bound}
\end{align}
where the final inequality holds since
\begin{align*}
C( X_t - X_{t-1} ) &\leq C( x_t - x_{t-1} )
\quad \text{for every $t \geq T+2$}
\\
\text{and}\quad
C( X_{T+1} - X_T ) &\leq C( x_{T+1} - X_T )
\end{align*}
by the monotonicity of $C$. (For $t \geq T+2$, for each dimension $i$, if $X_{t,i} = X_{t-1,i}$ then $0 = ( X_t - X_{t-1} )_i$, while if $X_{t,i} > X_{t-1,i}$ then $x_{t,i} = X_{t,i} \geq X_{t-1,i} \geq x_{t-1,i}$, so $0 \leq ( X_t - X_{t-1} )_i \leq ( x_t - x_{t-1} )_i$. For the $t=T+1$ inequality, if $X_{T+1,i} = X_{T,i}$ then $0 = ( X_{T+1} - X_T )_i \geq ( x_{T+1} - X_T )_i$, while if $X_{T+1,i} > X_{T,i}$ then $X_{T+1,i} = x_{T+1,i}$, so $( X_{T+1} - X_T )_i = ( x_{T+1} - X_T )_i$.) Since $F(\cdot,\bar \theta)$ is BCS, the ``${[\cdot]}$'' expression in \eqref{eq:fwd_bound} is bounded below uniformly over $T \in \N$,%
\footnote{Since $X_t$ and $x_t$ belong to the compact set $[ \underline x, \bar x]$ for every $t \in \N$, there is a $K>0$ such that $F(X_t,\bar \theta) - F(x_t,\bar \theta) \geq -2K$ for all $t$, so ``$[\cdot]$'' is bounded below by $-2K / (1-\delta)$.}
so letting $T \to \infty$ yields $0 \geq \mathcal{G}(\boldsymbol{x}^\infty,x_0) - V \geq 0$, which is to say that $\boldsymbol{x}^\infty$ is optimal.

To show that $\boldsymbol{x}^T$ is optimal for each $T \in \N$, we employ induction on $T \in \N$. The base case $T=1$ is immediate.

For the induction step, fix any $T \in \N$, and suppose that $\boldsymbol{x}^T = (x_t)_{t=1}^\infty$ is optimal; we will show that $\boldsymbol{x}^{T+1} = M_{T+1} \boldsymbol{x}^T$ is also optimal. Let $(\widetilde{x}_t)_{t=1}^\infty$ be the sequence with $t^\text{th}$ entry $x_t$ for $t<T$ and $x_t \wedge x_{t+1}$ for $t \geq T$. Since $\boldsymbol{x}^T = (x_t)_{t=1}^\infty$ is optimal, and $(\widetilde{x}_t)_{t=1}^\infty$ shares its first $T-1$ entries $X_1,\dots,X_{T-1}$, we have $\mathcal{G}((x_t)_{t=T}^\infty,X_{T-1})
\geq \mathcal{G}(({\widetilde x}_t)_{t=T}^\infty,X_{T-1})$, which may be written in full as
\begin{multline}\label{start}
\textstyle\sum_{t=T}^\infty \delta^{t-T} \Bigl( \left[F(x_t,\bar \theta)-F(x_t\wedge x_{t+1},\bar \theta)\right]\\
-\left[C(x_t-x_{t-1})-C(x_t\wedge x_{t+1}-x_{t-1}\wedge x_t)\right] \Bigr)\geq 0.
\end{multline}
(Note that since $x_t = X_t$ for every $t \leq T$, we have $x_{T-1} \wedge x_T = X_{T-1} = x_{T-1}$.) Since $F(\cdot, \bar \theta)$ is supermodular, it holds for every $t \geq T$ that
\begin{equation}\label{start1}
F(x_t\vee x_{t+1},\bar \theta)-F(x_{t+1},\bar \theta)
\geq F(x_t,\bar \theta)-F(x_t\wedge x_{t+1},\bar \theta)
\end{equation}
We furthermore claim that for each $t \geq T$,
\begin{multline} \label{start2}
C(x_t\vee x_{t+1}-x_{t-1}\vee x_t)-C(x_{t+1}-x_t)
\\
\leq C(x_t-x_{t-1})-C(x_t\wedge x_{t+1}-x_{t-1}\wedge x_t) ;
\end{multline}
we shall prove this shortly. Combining \eqref{start}, \eqref{start1} and \eqref{start2}, and changing variables in the sum, we obtain
\begin{multline*}
\textstyle\sum_{t=T+1}^{\infty} \delta^{t-(T+1)} \Bigl( \left[F(x_{t-1}\vee x_t,\bar \theta)-F(x_t,\bar \theta)\right]\\
-\left[C(x_{t-1}\vee x_t-x_{t-2}\vee x_{t-1})-C(x_t-x_{t-1})\right] \Bigr)
\geq 0 .
\end{multline*}
By inspection, this says precisely that $(\widehat x_t)_{t=1}^\infty = \boldsymbol{x}^{T+1} = M_{T+1} \boldsymbol{x}^T$ satisfies
\begin{equation*}
\mathcal{G}((\widehat x_t)_{t=T+1}^\infty,X_T)
\geq \mathcal{G}((x_t)_{t=T+1}^\infty,X_T).
\end{equation*}
(Note that since $x_t = X_t$ for every $t \leq T$, we have $x_{(T+1)-2} \vee x_{(T+1)-1} = X_{T-1} \vee X_T = X_T = x_{(T+1)-1}$.) Since $\boldsymbol{x}^{T+1} = (\widehat x_t)_{t=1}^\infty$ and $\boldsymbol{x}^T = (x_t)_{t=1}^\infty$ agree in their first $T$ entries, and $\boldsymbol{x}^T$ is optimal, it follows that $\boldsymbol{x}^{T+1}$ is optimal, too.

It remains to show that \eqref{start2} holds. It suffices to prove for each $i$ that
\begin{equation}\label{delicate}
C_i(y\vee z-x\vee y)+C_i(y\wedge z-x\wedge y)\leq C_i(y-x)+C_i(z-y)
\quad \text{for any $x,y,z$.}
\end{equation}
(We've renamed $x_{t-1,i}=x$, $x_{t,i}=y$ and $x_{t+1,i}=z$.) When $y$ is not extreme (neither least nor greatest), \eqref{delicate} holds trivially because the left-hand side is equal to the right-hand side. When $y$ is extreme, \eqref{delicate} reads
\begin{equation*}
C_i(0) + C_i(z-x) \leq C_i(y-x) + C_i(z-y) ,
\end{equation*}
and we have either
\begin{align*}
&&\text{(i)}\quad &0 \leq z-x \leq y-x
\qquad \text{or} \quad
&\text{(ii)}\quad &0 \leq z-x \leq z-y
\\
\text{or} \quad
&&\text{(iii)}\quad &0 \geq z-x \geq y-x
\qquad \text{or} \quad
&\text{(iv)}\quad &0 \geq z-x \geq z-y .
\end{align*}
Since $C_i$ is single-dipped and minimized at zero, we have $C_i(z-x) \leq C_i(y-x) \leq C_i(y-x) + C_i(z-y) - C_i(0)$ in the first and third cases, and $C_i(z-x) \leq C_i(z-y) \leq C_i(y-x) + C_i(z-y) - C_i(0)$ in the second and fourth. \qed

\vspace{.5\baselineskip}

The above proof applies nearly unchanged if there is a finite horizon $K$, so that the agent chooses a length-$K$ sequence $(x_t)_{t=1}^K$ in $L$ to maximize $\sum_{t=1}^K \delta^{t-1} \left[ F(x_t,\bar \theta) - C(x_t-x_{t-1}) \right]$. The limit argument early on is superfluous, so BCS is not needed. Let $x_{K+1} = x_K$ and follow the same steps to obtain
\begin{multline*}
\textstyle\sum_{t=T}^K \delta^{t-T} \Bigl( \left[F(x_t,\bar \theta)-F(x_t\wedge x_{t+1},\bar \theta)\right]\\
- \left[C(x_t-x_{t-1})-C(x_t\wedge x_{t+1}-x_{t-1}\wedge x_t)\right] \Bigr) \geq 0.
\end{multline*}
This together with \eqref{start1} and \eqref{start2} and a change of variable delivers
\begin{multline*}
\textstyle\sum_{t=T+1}^{K+1} \delta^{t-(T+1)} \Bigl( \left[F(x_{t-1}\vee x_t,\bar \theta)-F(x_t,\bar \theta)\right]\\
- \left[C(x_{t-1}\vee x_t-x_{t-2}\vee x_{t-1})-C(x_t-x_{t-1})\right] \Bigr)
\geq 0 .
\end{multline*}
The sum's final term equals $\delta^{K-(T+1)} \bigl( 0 - \left[ C(x_K-x_{K-1}\vee x_K)-C(0) \right] \bigr)$, which is nonpositive since $C$ is minimized at $0$ by monotonicity. Hence the inequality is preserved when the final term is dropped:
\begin{multline*}
\textstyle\sum_{t=T+1}^K \delta^{t-(T+1)} \Bigl( \left[F(x_{t-1}\vee x_t,\bar \theta)-F(x_t,\bar \theta)\right]\\
- \left[C(x_{t-1}\vee x_t-x_{t-2}\vee x_{t-1})-C(x_t-x_{t-1})\right] \Bigr)
\geq 0 .
\end{multline*}
The remainder of the argument now applies unchanged.

%%%%%%%%%%%%%%%%%%%%%%%%%%%%%%%%%%%
\subsection{Proof of \texorpdfstring{\Cref{theorem:lechatelier-dynamic-myopic}}{Theorem~\ref{theorem:lechatelier-dynamic-myopic}}}
\label{sec:appendix:pf_lechatelier-dynamic-myopic}
%%%%%%%%%%%%%%%%%%%%%%%%%%%%%%%%%%%
\renewcommand\qedsymbol{\itshape QED}

For the first part, fix a sequence $(\theta_t)_{t=1}^\infty$ in $\Theta$ such that $\underline \theta \leq \theta_t \leq \bar \theta$ for every $t \in \N$. Call a finite sequence $(x_t)_{t=1}^T$ \emph{equilibrium caged} if and only if $x_t \in \argmax_{x \in L} G_t(x,x_{t-1})$ and $\underline x \leq x_t \leq \bar x$ for every $t \in \{1,\dots,T\}$. By \Cref{theorem:lechatelier}, there exists an equilibrium caged sequence of length $T=1$. Given this, it suffices to prove that for each $T \geq 2$, any length-$(T-1)$ equilibrium caged sequence $(x_t)_{t=1}^{T-1}$ may be extended to a length-$T$ equilibrium caged sequence $(x_t)_{t=1}^T$ (by appropriately choosing $x_T$). To that end, fix an arbitrary $T \geq 2$, and let $(x_t)_{t=1}^{T-1}$ be equilibrium caged. We shall prove two claims:

\begin{claim} \label{claim:lc-myopic_1}
There is an $x'\in \argmax_{x\in L} G_T(x,x_{T-1})$ such that $x' \geq \underline x$.
\end{claim}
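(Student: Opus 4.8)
The plan is to mimic the proof of \Cref{theorem:basic}, with one adjustment to account for the fact that the period-$T$ cost $C_T(\cdot)$ is anchored at $x_{T-1}$ rather than at $\underline x$. First note that, since $(x_t)_{t=1}^{T-1}$ is equilibrium caged and $x_0 = \underline x$, we have $\underline x \leq x_{T-1}$. By hypothesis $\argmax_{x\in L} G_T(x,x_{T-1})$ is nonempty; fix some $x'$ in it and set $\widehat x = \underline x \vee x'$. Since $L$ is a sublattice and $\underline x, x' \in L$, we have $\widehat x \in L$, and obviously $\widehat x \geq \underline x$. It therefore suffices to show that $\widehat x$ also maximizes $G_T(\cdot,x_{T-1})$ on $L$, for then $x'$ in the claim can be taken to be $\widehat x$.

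For the objective term, I would argue exactly as in \Cref{theorem:basic}: the optimality of $\underline x$ on $L$ gives $F(\underline x,\underline\theta) \geq F(\underline x \wedge x',\underline\theta)$; quasi-supermodularity of $F(\cdot,\underline\theta)$ then yields $F(\widehat x,\underline\theta) = F(\underline x \vee x',\underline\theta) \geq F(x',\underline\theta)$; and single-crossing differences of $F$ in $(x,\theta)$, together with $x' \leq \widehat x$ and $\underline\theta \leq \theta_T$, upgrades this to $F(\widehat x,\theta_T) \geq F(x',\theta_T)$. For the cost term, what is needed is $C_T(\widehat x - x_{T-1}) \leq C_T(x' - x_{T-1})$, i.e. $C_T(\underline x \vee x' - x_{T-1}) \leq C_T(x' - x_{T-1})$. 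This is precisely the first bullet of \Cref{corollary:mon_impl}, applied to the (monotone, by hypothesis) cost $C_T$ with $z = \underline x$, $x = x'$ and $y = x_{T-1}$, using $z = \underline x \leq x_{T-1} = y$. Subtracting, $G_T(\widehat x,x_{T-1}) = F(\widehat x,\theta_T) - C_T(\widehat x - x_{T-1}) \geq F(x',\theta_T) - C_T(x' - x_{T-1}) = G_T(x',x_{T-1})$, and since $x'$ is a maximizer, so is $\widehat x$.

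The proof is essentially routine, so there is no serious obstacle; the one point to get right is that minimal monotonicity of $C_T$ would \emph{not} suffice here, in contrast to \Cref{theorem:basic}. There, the cost is anchored at $\underline x$, so $\underline x \vee x' - \underline x = (x' - \underline x)\vee 0$ and minimal monotonicity applies directly; here the cost is anchored at $x_{T-1}\geq\underline x$, and it is full monotonicity of $C_T$ — invoked through \Cref{corollary:mon_impl} — that bridges the gap.
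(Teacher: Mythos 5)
Your proposal is correct and follows essentially the same route as the paper: take a maximizer $x'$ of $G_T(\cdot,x_{T-1})$, join it with $\underline x$, handle the objective via quasi-supermodularity and single-crossing differences exactly as in \Cref{theorem:basic}, and handle the cost via the first bullet of \Cref{corollary:mon_impl} using $\underline x \leq x_{T-1}$. Your closing remark correctly identifies the one place where this differs from \Cref{theorem:basic} — full monotonicity of $C_T$ is needed because the cost is anchored at $x_{T-1}$ rather than $\underline x$ — which is precisely why the paper invokes \Cref{corollary:mon_impl} here.
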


\begin{claim} \label{claim:lc-myopic_2}
$\bar x\wedge x'$ belongs to $\argmax_{x\in L} G_T(x,x_{T-1})$ whenever $x'$ does.
\end{claim}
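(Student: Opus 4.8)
The plan is to replay the proof of the first part of \Cref{theorem:lechatelier}, the only change being that the reference point for the adjustment cost is now the predecessor's choice $x_{T-1}$ rather than the original action $\underline x$. What makes this go through is that $(x_t)_{t=1}^{T-1}$ is equilibrium caged, so in particular $\underline x \leq x_{T-1} \leq \bar x$; these two inequalities are exactly what is needed to run the monotone-cost argument against the reference point $x_{T-1}$ (the left one for \Cref{claim:lc-myopic_1}, the right one for \Cref{claim:lc-myopic_2}).

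For \Cref{claim:lc-myopic_2}, fix $x' \in \argmax_{x\in L} G_T(x,x_{T-1})$. Since $L$ is a sublattice and $\bar x, x' \in L$, the vector $\bar x \wedge x'$ lies in $L$; I would show it also maximizes $G_T(\cdot,x_{T-1})$. First, $F(\bar x \vee x',\bar\theta) \leq F(\bar x,\bar\theta)$ by definition of $\bar x$, which by quasi-supermodularity of $F(\cdot,\bar\theta)$ forces $F(x',\bar\theta) \leq F(\bar x \wedge x',\bar\theta)$; since $\bar x\wedge x' \leq x'$ and $\theta_T \leq \bar\theta$, single-crossing differences then yields $F(x',\theta_T) \leq F(\bar x \wedge x',\theta_T)$. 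Second, because $x_{T-1} \leq \bar x$, \Cref{corollary:mon_impl} applied to the monotone cost $C_T$ (with $z = \bar x$ and $y = x_{T-1}$) gives $C_T(\bar x \wedge x' - x_{T-1}) \leq C_T(x' - x_{T-1})$. Adding these two inequalities shows $G_T(\bar x \wedge x',x_{T-1}) \geq G_T(x',x_{T-1})$, and since $x'$ is optimal, so is $\bar x \wedge x'$.

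I do not expect a genuine obstacle, since every step is a near-verbatim reprise of the static argument; the one point needing care is the cost inequality. In \Cref{theorem:lechatelier} one can invoke monotonicity directly via the chain $x' \geq \bar x \wedge x' \geq \underline x$, but here that is unavailable because $x' \geq x_{T-1}$ need not hold. The fix is to route through \Cref{corollary:mon_impl} (equivalently, a coordinatewise check: in coordinate $i$, either $\bar x_i \leq x'_i$, so $x_{T-1,i} \leq (\bar x\wedge x')_i \leq x'_i$, or $\bar x_i > x'_i$, so $(\bar x\wedge x'-x_{T-1})_i = (x'-x_{T-1})_i$), which needs only $x_{T-1} \leq \bar x$.

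Finally, \Cref{claim:lc-myopic_1} is obtained by the mirror-image template applied ``downward'': for any $x' \in \argmax_{x\in L} G_T(x,x_{T-1})$, optimality of $\underline x$, quasi-supermodularity, and single-crossing differences (now using $\underline\theta \leq \theta_T$) give $F(\underline x \vee x',\theta_T) \geq F(x',\theta_T)$, while $\underline x \leq x_{T-1}$ together with \Cref{corollary:mon_impl} gives $C_T(\underline x \vee x' - x_{T-1}) \leq C_T(x' - x_{T-1})$, so $\underline x \vee x'$ is an optimizer lying above $\underline x$. Taking $x'$ from \Cref{claim:lc-myopic_1} and then passing to $\bar x \wedge x'$ via \Cref{claim:lc-myopic_2} produces the period-$T$ action $x_T$ with $\underline x \leq x_T \leq \bar x$, extending the equilibrium caged sequence and completing the induction that establishes the first bullet of \Cref{theorem:lechatelier-dynamic-myopic}.
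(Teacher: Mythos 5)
Your proof of \Cref{claim:lc-myopic_2} is correct and matches the paper's own argument essentially verbatim: the same chain $F(\bar x\vee x',\bar\theta)\leq F(\bar x,\bar\theta)$ $\Rightarrow$ (quasi-supermodularity) $\Rightarrow$ (single-crossing differences with $\theta_T\leq\bar\theta$), combined with \Cref{corollary:mon_impl} applied using $\bar x\geq x_{T-1}$ to get the cost inequality. Your surrounding remarks on \Cref{claim:lc-myopic_1} and the induction also track the paper's proof of \Cref{theorem:lechatelier-dynamic-myopic} exactly.
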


These claims suffice because $x_T = \bar x\wedge x'$ satisfies $\underline x \leq x_T \leq \bar x$.

\begin{proof}[Proof of \Cref{claim:lc-myopic_1}]
% \footnote{This argument is close to the proof of \Cref{theorem:basic}.}}
Fix any $x''\in \argmax_{x\in L} G_T(x,x_{T-1})$. We will show that $x' = \underline x\vee x''$ also maximizes $G_T(\cdot,x_{T-1})$; obviously $x' \geq \underline x$. We have $F(\underline x,\underline \theta) \geq F(\underline x \wedge x'', \underline \theta)$ by definition of $\underline x$. Thus $F(\underline x \vee x'',\underline \theta) \geq F(x'', \underline \theta)$ by quasi-super\-modu\-larity, whence $F(\underline x \vee x'',\theta_T) \geq F(x'', \theta_T)$ by single-crossing differences and $\theta_T \geq \underline \theta$. Furthermore, since $C_T$ is monotone and $\underline x \leq x_{T-1}$, we have $C_T( \underline x \vee x' - x_{T-1} ) \leq C_T( x' - x_{T-1} )$ by \Cref{corollary:mon_impl} (\cref{sec:appendix:mon_impl}). Thus
\begin{align*}
G_T(x',x_{T-1})
&= F(\underline x\vee x'',\theta_T)-C_T(\underline x\vee x''-x_{T-1})
\\
&\geq F(x'',\theta_T)-C_T(x''-x_{T-1})
= G_T(x'',x_{T-1}) .
\end{align*}
Since $x''$ maximizes $G_T(\cdot,x_{T-1})$ on $L$, it follows that $x'$ does, too.
\end{proof}

\begin{proof}[Proof of \Cref{claim:lc-myopic_2}]
% \footnote{This argument is close to the proof of \Cref{theorem:lechatelier}.}}
Let $x'$ belong to $\argmax_{x\in L} G_T(x,x_{T-1})$; we claim that $\widehat x = \bar x\wedge x'$ also maximizes $G_T(\cdot,x_{T-1})$. We have $F(\bar x\vee x',\bar \theta) \leq F(\bar x,\bar \theta)$ by definition of $\bar x$, whence $F(x',\bar \theta) \leq F(\bar x\wedge x',\bar \theta)$ by quasi-supermodularity, so that $F(x',\theta_T) \leq F(\bar x\wedge x',\theta_T)$ by single-crossing differences and $\theta_T \leq \bar \theta$. Since $C_T$ is monotone and $\bar x \geq x_{T-1}$, we have $C_T(x'-x_{T-1})\geq C_T(\bar x\wedge x'-x_{T-1})$ by \Cref{corollary:mon_impl}. Thus
\begin{align*}
G_T(x',x_{T-1})
&= F(x',\theta_T)-C_T(x'-x_{T-1})
\\
&\leq F(\bar x\wedge x',\theta_T)-C_T(\bar x\wedge x'-x_{T-1})
= G_T(\widehat x,x_{T-1}) ,
\end{align*}
which since $x'$ maximizes $G_T(\cdot,x_{T-1})$ on $L$ implies that $\widehat x$ does, too.
\end{proof}

To prove the second part of \Cref{theorem:lechatelier-dynamic-myopic}, fix a sequence $(\theta_t)_{t=1}^\infty$ in $\Theta$ such that $\underline \theta \leq \theta_t \leq \theta_{t+1} \leq \bar \theta$ for every $t \in \N$. Recall that $x_0 = \underline x$. Call a finite sequence $(x_t)_{t=1}^T$ \emph{equilibrium monotone} if and only if $x_t \in \argmax_{x \in L} G_t(x,x_{t-1})$ and $x_{t-1} \leq x_t \leq \bar x$ for every $t \in \{1,\dots,T\}$. By \Cref{theorem:lechatelier}, there exists an equilibrium monotone sequence of length $T=1$. Given this, it suffices to prove that for every $T \geq 2$, any length-$(T-1)$ equilibrium monotone sequence $(x_t)_{t=1}^{T-1}$ may be extended to a length-$T$ equilibrium monotone sequence $(x_t)_{t=1}^T$ (by an appropriate choice of $x_T$).

To that end, fix an arbitrary $T \geq 2$, and let $(x_t)_{t=1}^{T-1}$ be equilibrium monotone; we shall show that for every $t \in \{1,\dots,T\}$, there is an $x' \in L$ which belongs to $\argmax_{x \in L} G_T(x,x_{T-1})$ and satisfies $x_{t-1} \leq x' \leq \bar x$. We proceed by induction on $t \in \{1,\dots,T\}$. The base case $t=1$ follows from \Cref{claim:lc-myopic_1,claim:lc-myopic_2}. For the induction step, suppose that there is an $x'' \in \argmax_{x\in L} G_T(x,x_{T-1})$ that satisfies $x_{t-2} \leq x'' \leq \bar x$; we claim that $x' = x_{t-1}\vee x''$ also maximizes $G_T(\cdot,x_{T-1})$. This suffices since $x_{t-1} \leq x' \leq \bar x$, where the latter inequality holds because $x_{t-1} \leq \bar x$ (as $(x_s)_{s=1}^{T-1}$ is equilibrium monotone) and $x'' \leq \bar x$.

We have $G_{t-1}(x_{t-1},x_{t-2}) \geq G_{t-1}(x_{t-1}\wedge x'',x_{t-2})$ by definition of $x_{t-1}$. Since $C_{t-1}$ is monotone and $x'' \geq x_{t-2}$ by the induction hypothesis, we have $C_{t-1}(x_{t-1}-x_{t-2})\geq C_{t-1}(x_{t-1}\wedge x''-x_{t-2})$ by \Cref{corollary:mon_impl} (\cref{sec:appendix:mon_impl}). It follows that $F(x_{t-1},\theta_{t-1}) \geq F(x_{t-1}\wedge x'',\theta_{t-1})$. Thus $F(x_{t-1}\vee x'',\theta_{t-1}) \geq F(x'',\theta_{t-1})$ by quasi-supermodularity, whence $F(x_{t-1}\vee x'',\theta_T) \geq F(x'',\theta_T)$ by single-crossing differences and $\theta_T \geq \theta_{t-1}$. We have $C_T( x_{t-1}\vee x'' - x_{T-1} ) \leq C_T(x'' - x_{T-1})$ by \Cref{corollary:mon_impl} since $C_T$ is monotone and $x_{t-1} \leq x_{T-1}$, where the latter holds since $(x_s)_{s=1}^{T-1}$ is equilibrium monotone. Thus
\begin{align*}
G_T(x',x_{T-1})
&= F(x_{t-1}\vee x'',\theta_T)-C_T(x_{t-1}\vee x''-x_{T-1})
\\
&\geq F(x'',\theta_T)-C_T(x''-x_{T-1})
= G_T(x'',x_{T-1}) ,
\end{align*}
which since $x''$ maximizes $G_T(\cdot,x_{T-1})$ on $L$ implies that $x'$ does, too. \renewcommand\qedsymbol{\bfseries QED} \qed

%%%%%%%%%%%%%%%%%%%%%%%%%%%%%%%%%%%
\subsection{Proof of \texorpdfstring{\Cref{proposition:lechatelier-medium}}{Proposition~\ref{proposition:lechatelier-medium}}}
\label{sec:appendix:pf_lechatelier-medium}
%%%%%%%%%%%%%%%%%%%%%%%%%%%%%%%%%%%

Define a sequence $(\theta_t)_{t=1}^\infty$ in $\Theta$ by $\theta_t = \bar \theta$ for every $t$. For each $t \geq 3$, define $C_t : \Delta L \to [0,\infty]$ by $C_t(\varepsilon) = \infty$ for every $\varepsilon \neq 0$ and $C_t(0)=0$. Now apply \Cref{theorem:lechatelier-dynamic-myopic}. \qed

%%%%%%%%%%%%%%%%%%%%%%%%%%%%%%%%%%%
\subsection{Proof of \texorpdfstring{\Cref{theorem:myopic-vs-fwd}}{Theorem~\ref{theorem:myopic-vs-fwd}}}
\label{sec:appendix:pf_lechatelier-myopic-vs-fwd}
%%%%%%%%%%%%%%%%%%%%%%%%%%%%%%%%%%%

For any sequence $\boldsymbol{x} = (x_t)_{t=1}^\infty$ in $L$ and any $T \in \{0,1,2,\dots\}$, let $R_T \boldsymbol{x}$ denote the sequence in $L$ whose $t^\text{th}$ entry is $x_t$ for $t<T$ and $\widetilde x_T \vee x_t$ for $t \geq T$.

The long-lived agent's problem is to maximize $\mathcal{G}(\cdot,x_0)$. Assume that it admits a solution. Let $\boldsymbol{x'} = (x'_t)_{t=1}^\infty$ be a solution satisfying $\underline x\leq x'_t\leq \bar x$ in every period $t$; such a solution exists by \Cref{theorem:lechatelier-dynamic}. Define $X_t = \widetilde x_t \vee x'_t$ for each $t \in \N$, and $X_0 = \underline x$.

Write $\boldsymbol{x}^T = R_T R_{T-1} \cdots R_2 R_1 R_0 \boldsymbol{x'}$ for $T \in \{0,1,2,\dots\}$. The sequence $\boldsymbol{x}^T$ has $t^\text{th}$ entry $X_t$ for $t \leq T$ and $\widetilde x_T \vee x_t$ for $t > T$, since $(\widetilde x_t)_{t=1}^\infty$ is increasing. Clearly $\widetilde x_t \leq X_t \leq \bar x$ for any period $t \in \N$. To prove the theorem, we need only show that $\boldsymbol{x}^\infty = (X_1,X_2,X_3,\dots)$ maximizes $\mathcal{G}(\cdot,x_0)$.

It suffices to show for each $T \in \{0,1,2,\dots\}$ that $\boldsymbol{x}^T$ maximizes $\mathcal{G}(\cdot,x_0)$. For then, letting $V$ be the long-lived agent's optimal value and noting that both $\boldsymbol{x}^T = (x_t)_{t=1}^\infty$ and $\boldsymbol{x}^\infty$ have $X_1,\dots,X_T$ as their first $T$ entries, we have
\begin{align*}
0
\geq \mathcal{G}(\boldsymbol{x}^\infty,x_0)
- V
&= \mathcal{G}(\boldsymbol{x}^\infty,x_0)
- \mathcal{G}(\boldsymbol{x}^T,x_0)
\\
&= \delta^T \left[
\mathcal{G}((X_t)_{t=T+1}^\infty,X_T)
- \mathcal{G}((x_t)_{t=T+1}^\infty,X_T)
\right] .
\end{align*}
By equi-BCS, the right-hand ``${[\cdot]}$'' is bounded below uniformly over $T \in \N$,%
\footnote{Since $X_t$ and $x_t$ belong to the compact set $[ \underline x, \bar x]$ for every $t \in \N$, there are constants $A,B>0$ such that $F(X_t,\theta_t) - F(x_t,\theta_t) \geq -2A$ and $- [ C(X_t-X_{t-1}) + C(x_t-x_{t-1}) ] \geq -2B$ for all $t \in \N$, so the right-hand ``$[\cdot]$'' is bounded below by $-2(A+B) / (1-\delta)$.}
so letting $T \to \infty$ yields $0 \geq \mathcal{G}(\boldsymbol{x}^\infty,x_0) - V \geq 0$, meaning that $\boldsymbol{x}^\infty$ is optimal.

To show that $\boldsymbol{x}^T$ is optimal for each $T \in \{0,1,2,\dots\}$, we employ induction on $T \in \{0,1,2,\dots\}$. The base case $T=0$ is immediate, since $\boldsymbol{x}^0 = R_0 \boldsymbol{x'} = \boldsymbol{x'}$ is optimal.

For the induction step, fix any $T \in \N$, and suppose that $\boldsymbol{x}^{T-1} = (x_t)_{t=1}^\infty$ is optimal; we will show that $\boldsymbol{x}^T = R_T \boldsymbol{x}^{T-1}$ is also optimal. Since $\boldsymbol{x}^{T-1}$ and $\boldsymbol{x}^T$ have the same first $T-1$ entries (namely, $X_1,X_2,\dots,X_{T-1}$), and since for $t \geq T$ the $t^\text{th}$ entry of $\boldsymbol{x}^T$ is $\widetilde x_T \vee x_t$, it suffices to show that
\begin{align}
G_T( X_T, X_{T-1} ) &\geq G_T( x_T, X_{T-1} )
\label{eq:myopic-vs-fwd_induction_T}
\\
G_t( \widetilde x_T \vee x_t, \widetilde x_T \vee x_{t-1} ) &\geq G_t( x_t, x_{t-1} )
\qquad\text{for all $t \geq T+1$.}
\label{eq:myopic-vs-fwd_induction_t}
\end{align}

For \eqref{eq:myopic-vs-fwd_induction_T}, since $C_T$ is convex and $X_T \geq x_T$ and $X_{T-1} \geq \widetilde x_{T-1}$, we have
\begin{equation*}
C_T( X_T - X_{T-1} )
- C_T( x_T - X_{T-1} )
\leq C_T( X_T - \widetilde x_{T-1} )
- C_T( x_T - \widetilde x_{T-1} ) .
\end{equation*}
It follows that
\begin{align*}
G_T( X_T, X_{T-1} )
- G_T( x_T, X_{T-1} )
&\geq G_T( X_T, \widetilde x_{T-1} )
- G_T( x_T, \widetilde x_{T-1} )
\\
&\geq
G_T( \widetilde x_T, \widetilde x_{T-1} )
- G_T( \widetilde x_T \wedge x_T, \widetilde x_{T-1} )
\geq 0 ,
\end{align*}
where the second inequality holds since
\begin{align*}
F( X_T, \theta_T ) - F( x_T, \theta_T )
&\geq F( \widetilde x_T, \theta_T ) - F( \widetilde x_T \wedge x_T, \theta_T ) \qquad \text{and}
\\
C_T( X_T - \widetilde x_{T-1} )
- C_T( x_T - \widetilde x_{T-1} )
&= C_T( \widetilde x_T - \widetilde x_{T-1} )
- C_T( \widetilde x_T \wedge x_T - \widetilde x_{T-1} )
\end{align*}
by the supermodularity of $F( \cdot, \theta_T )$ and the additive separability of $C_T$, and the final inequality holds since $\widetilde x_T$ maximizes $G_T( \cdot, \widetilde x_{T-1} )$ on $L$ by definition.

It remains to establish \eqref{eq:myopic-vs-fwd_induction_t}. Fix an arbitrary $t \geq T+1$. It suffices to show that $F( \widetilde x_T \vee x_t, \theta_t ) \geq F( x_t, \theta_t )$ and $C_t( \widetilde x_T \vee x_t - \widetilde x_T \vee x_{t-1} ) \leq C_t( x_t - x_{t-1} )$. The latter holds by \Cref{lemma:mon_impl2} (\cref{sec:appendix:mon_impl}) since $C_t$ is monotone. To show the former, begin by noting that $G_T(\widetilde x_T,\widetilde x_{T-1}) \geq G_T(\widetilde x_T \wedge x_t,\widetilde x_{T-1})$ since $\widetilde x_T$ maximizes $G_T(\cdot,\widetilde x_{T-1})$ on $L$ by definition. We have $C_T(\widetilde x_T - \widetilde x_{T-1}) \geq C_T(\widetilde x_T \wedge x_t - \widetilde x_{T-1})$ by \Cref{corollary:mon_impl} (\cref{sec:appendix:mon_impl}) since $\widetilde x_T \geq \widetilde x_{T-1}$. It follows that $F(\widetilde x_T,\theta_T) \geq F(\widetilde x_T \wedge x_t,\theta_T)$. Thus $F(\widetilde x_T \vee x_t,\theta_T) \geq F(x_t,\theta_T)$ by supermodularity, whence $F(\widetilde x_T \vee x_t,\theta_t) \geq F(x_t,\theta_t)$ by single-crossing differences and the fact that $\theta_t \geq \theta_T$ (since $t>T$ and $(\theta_s)_{s=1}^\infty$ is increasing). \qed

%%%%%%%%%%%%%%%%%%%%%%%%%%%%%%%%%%%
\subsection{Proof of \texorpdfstring{\hyperref[proposition:lechatelier_charac]{\Cref*{theorem:lechatelier}$\boldsymbol{^\dag}$}}{Theorem \ref{theorem:lechatelier}†}}
\label{sec:appendix:necessity:pf_lechatelier_charac}
%%%%%%%%%%%%%%%%%%%%%%%%%%%%%%%%%%%

The proof of \Cref{theorem:lechatelier} (\cref{sec:lechatelier}) shows that \ref{item:lechatelier_charac:weakmon} implies \ref{item:lechatelier_charac:mcs}. To show that \ref{item:lechatelier_charac:mcs} implies \ref{item:lechatelier_charac:weakmon}, we prove the contrapositive. If $C$ fails to be minimally monotone, then \ref{item:lechatelier_charac:mcs} fails by \hyperref[proposition:basic_charac]{\Cref*{theorem:basic}$^\dag$}. Suppose instead that there are adjustment vectors $\varepsilon,\varepsilon' \in \Delta L$ such that $C(\varepsilon') > C(\varepsilon)$ and either $0 \leq \varepsilon' \leq \varepsilon$ or $0 \geq \varepsilon' \geq \varepsilon$; we must show that \ref{item:lechatelier_charac:mcs} fails. Assume that $0 \leq \varepsilon' \leq \varepsilon$; the other case is analogous. Let $\alpha = C(\varepsilon') - C(\varepsilon) > 0$ and $\beta = \alpha + \max\{ 0, C(\varepsilon) - C(0) \}$. Note that $\varepsilon' < \varepsilon$ (else $\varepsilon' = \varepsilon$, which would imply $\alpha=0$).

Choose $\underline x, \bar x, \widehat x \in L$ such that $\widehat x - \underline x = \varepsilon$ and $\bar x - \underline x = \varepsilon'$, and note that $\underline x \leq \bar x < \widehat x$. Let $X = \{ \underline x, \bar x, \widehat x \}$; $X$ is a sublattice. Fix distinct $\underline \theta \leq \bar \theta$ in $\Theta$ (possible by hypothesis). Let $F(\underline x, \underline \theta) = 2$, $F(\bar x, \underline \theta) = 1$ and $F(\widehat x, \underline \theta) = 0$, and let $F(\underline x, \bar \theta) = 0$, $F(\bar x, \bar \theta) = \beta$ and $F(\widehat x, \bar \theta) = \beta - \alpha/2$.%
\footnote{For $\theta \in \Theta \setminus \{ \underline \theta, \bar \theta \}$, let $F(\cdot,\theta) = F(\cdot,\underline \theta)$ if $\theta \leq \underline \theta$ and $F(\cdot,\theta) = F(\cdot,\bar \theta)$ otherwise.}
Then $F(x,\theta)$ is supermodular in $x$ and has single-crossing differences in $(x,\theta)$, and we have
\begin{equation*}
\argmax_{x \in X} F(x,\underline \theta) = \{ \underline x \} , \quad
\argmax_{x \in X} G(x,\bar \theta) = \{ \widehat x \} ,\quad
\argmax_{x \in X} F(x,\bar \theta) = \{ \bar x \} .
\end{equation*}
Hence \ref{item:lechatelier_charac:mcs} fails, as $\bar \theta \geq \underline \theta$ and $\widehat x > \bar x$. \qed

%%%%%%%%%%%%%%%%%%%%%%%%%%%%%%%%%%%
\subsection{Proof of \texorpdfstring{\hyperref[proposition:lechatelier-dynamic_charac]{\Cref*{theorem:lechatelier-dynamic}$\boldsymbol{^\dag}$}}{Theorem \ref{theorem:lechatelier-dynamic}†}}
\label{sec:appendix:necessity:pf_lechatelier-dynamic_charac}
%%%%%%%%%%%%%%%%%%%%%%%%%%%%%%%%%%%

\ref{item:lechatelier-dynamic_charac:mon} implies \ref{item:lechatelier-dynamic_charac:mcs} by \Cref{theorem:lechatelier-dynamic}. To show that \ref{item:lechatelier-dynamic_charac:mcs} implies \ref{item:lechatelier-dynamic_charac:mon}, we prove the W: suppose that there is a $C \in \mathcal{C}$ that is not monotone, meaning that there are adjustment vectors $\varepsilon,\varepsilon' \in \Delta L$ such that $C(\varepsilon') > C(\varepsilon)$ and for each dimension $i$, either $0 \leq \varepsilon_i' \leq \varepsilon_i$ or $0 \geq \varepsilon_i' \geq \varepsilon_i$; we will show that \ref{item:lechatelier-dynamic_charac:mcs} fails.

Suppose first that $\varepsilon \geq 0$ or $\varepsilon \leq 0$. Then $C$ fails to be \emph{weakly} monotone, so by \hyperref[proposition:lechatelier_charac]{\Cref*{theorem:lechatelier}$^\dag$}, there exist a sublattice $X \subseteq L$, an objective $\widetilde F : X \times \Theta \to \R$ such that $\widetilde F(x,\theta)$ is quasi-supermodular in $x$ and has single-crossing differences in $(x,\theta)$, parameters $\underline \theta, \bar \theta \in \Theta$, and actions $\underline x \in \argmax_{x \in X} \widetilde F(x,\underline \theta)$ and $\bar x \in \argmax_{x \in X} \widetilde F(x,\bar \theta)$, such that $M = \argmax_{x \in X} \left[ \widetilde F(x,\bar \theta) - C(x-\underline x) \right]$ is non-empty and either (i)~$\underline \theta \leq \bar \theta$, $\underline x \leq \bar x$ and $M \cap \{ x \in X : \underline x \leq x \leq \bar x \} = \varnothing$ or (ii)~$\underline \theta \geq \bar \theta$, $\underline x \geq \bar x$ and $M \cap \{ x \in X : \underline x \geq x \geq \bar x \} = \varnothing$. Let $F = (1-\delta) \widetilde F$, $C_1 = C$, $C_t = \bar C$ for every $t \geq 2$, and $\theta_t = \bar \theta$ for every $t \in \N$. Then the long-lived agent's problem admits a solution, and every solution $(x_t)_{t=1}^\infty$ has $x_1 \in M$, so \ref{item:lechatelier-dynamic_charac:mcs} fails. Assume for the remainder that $0 \nleq \varepsilon \nleq 0$.

Note that $\varepsilon$ may satisfy $\varepsilon \wedge \varepsilon' = \varepsilon'$ or $\varepsilon \vee \varepsilon' = \varepsilon'$, but not both (else $\varepsilon' = \varepsilon$, which would imply $C(\varepsilon') = C(\varepsilon)$). Hence $E = \{ \varepsilon \wedge \varepsilon', \varepsilon, \varepsilon', \varepsilon \vee \varepsilon' \}$ has either two or four elements, and $( E \setminus \{\varepsilon'\} ) \cap \{ \widetilde \varepsilon \in \Delta L : \varepsilon' \wedge 0 \leq \widetilde \varepsilon \leq \varepsilon' \vee 0 \} = \varnothing$.%
    \footnote{If $\varepsilon \wedge \varepsilon' \neq \varepsilon'$, then $0 \geq \varepsilon_i' > \varepsilon_i$ for some dimension $i$, so $( \varepsilon \wedge \varepsilon' )_i = \varepsilon_i < \varepsilon_i' = ( \varepsilon' \wedge 0 )_i$, whence $\varepsilon \wedge \varepsilon' \ngeq \varepsilon' \wedge 0 \nleq \varepsilon$. Similarly, if $\varepsilon \vee \varepsilon' \neq \varepsilon'$ then $\varepsilon \vee \varepsilon' \nleq \varepsilon' \vee 0 \ngeq \varepsilon$.}

Choose $\widehat x \in L$ such that $\widehat x + \varepsilon \in L$. Let $X = \{ \widehat x \} + ( E \cup \{ 0, \varepsilon \wedge 0, \varepsilon \vee 0, \varepsilon' \wedge 0, \varepsilon' \vee 0 \} )$, where ``$+$'' denotes Minkowski addition;%
\footnote{That is, $A+B = \{ a+b : a \in A, b \in B \}$ for any sets $A,B \subseteq \R^n$.}
$X$ is a sublattice. Let $\underline x = \widehat x + \varepsilon' \wedge 0 \in X$ and $\bar x = \widehat x + \varepsilon' \vee 0 \in X$, and note that
\begin{equation}\label{emptyintersection}
\bigl( (\{\widehat x\} + E) \setminus \{ \widehat x + \varepsilon' \} \bigr) \cap \{ x \in X : \underline x \leq x \leq \bar x \} = \varnothing .
\end{equation}
Arrange the elements of $X$ in a matrix as
\begin{equation*}
\widehat x
+ \begin{pmatrix}
\varepsilon & \varepsilon \vee \varepsilon' & \varepsilon \vee 0 \\
\varepsilon \wedge \varepsilon' & \varepsilon' & \varepsilon' \vee 0 \\
\varepsilon \wedge 0 & \varepsilon' \wedge 0 & 0
\end{pmatrix}
=
\begin{pmatrix}
\widehat x + \varepsilon & \widehat x + \varepsilon \vee \varepsilon' & \widehat x + \varepsilon \vee 0 \\
\widehat x + \varepsilon \wedge \varepsilon' & \widehat x + \varepsilon' & \bar x \\
\widehat x + \varepsilon \wedge 0 & \underline x & \widehat x
\end{pmatrix} .
\end{equation*}
The top-left $2 \times 2$ submatrix is $\{\widehat x\} + E$. Here and below, ignore the leftmost column if $\varepsilon \wedge \varepsilon' = \varepsilon'$, since then its entries equal those of the middle column, and similarly ignore the top row if $\varepsilon \vee \varepsilon' = \varepsilon'$, the rightmost column if $\varepsilon' \wedge 0 = 0$, and the bottom row if $\varepsilon' \vee 0 = 0$.

Fix distinct $\underline \theta, \theta_1,\theta_2,\bar \theta \in \Theta$ such that $\underline \theta \leq \theta_1 \leq \bar \theta \geq \theta_2 \geq \underline \theta$ and $\theta_1 \nleq \theta_2 \nleq \theta_1$ (possible by hypothesis). Using the matrix notation from above, let
\begin{multline*}
F(\cdot,\underline \theta)
= \begin{pmatrix}
0 & 1 & 0 \\
1 & 2 & 1 \\
2 & 3 & 2
\end{pmatrix} , \quad
F(\cdot,\theta_1)
= \gamma \begin{pmatrix}
0 & 1 & 2 \\
1 & 2 & 3 \\
2 & 3 & 4
\end{pmatrix} , \quad
F(\cdot,\bar \theta)
= \begin{pmatrix}
0 & 1 & 2 \\
1 & 2 & 3 \\
0 & 1 & 2
\end{pmatrix} ,
\\
\text{and} \quad
F(\cdot,\theta_2)
= (1-\delta)
\begin{pmatrix}
0 & \alpha/3 & -\beta \\
\alpha/3 & 2\alpha/3 & \alpha/3-\beta \\
-\beta & \alpha/3-\beta & -2\beta
\end{pmatrix} ,
\end{multline*}
where $\alpha = C(\varepsilon') - C(\varepsilon) > 0$, $\beta = 2 \left[ \max_{\widetilde \varepsilon \in \Delta X} C(\widetilde \varepsilon) - \min_{\widetilde \varepsilon \in \Delta X} C(\widetilde \varepsilon) \right] \geq 2 \alpha$ and $\gamma = 3\beta$.%
\footnote{For $\theta \in \Theta \setminus \{ \underline \theta, \theta_1, \theta_2, \bar \theta \}$, let $F(\cdot,\theta) = F(\cdot,\underline \theta)$ if $\theta_1 \geq \theta \leq \theta_2$, $F(\cdot,\theta) = F(\cdot,\theta_1)$ if $\theta \leq \theta_1$ and $\theta_2 \nleq \theta \nleq \theta_2$, $F(\cdot,\theta) = F(\cdot,\theta_2)$ if $\theta \leq \theta_2$ and $\theta_1 \nleq \theta \nleq \theta_1$, and $F(\cdot,\theta) = F(\cdot,\bar \theta)$ if $\theta > \theta_1$ or $\theta > \theta_2$.}
Then $F(x,\theta)$ is supermodular in $x$ and has single-crossing differences in $(x,\theta)$, and we have
\begin{multline*}
\argmax_{x \in X} F(x,\underline \theta) = \{ \underline x \} , \quad
\argmax_{x \in X} F(x,\theta_1) = \{ \widehat x \} , \quad
\argmax_{x \in X} F(x,\bar \theta) = \{ \bar x \} ,
\\
\text{and} \quad \argmax_{x \in X} \left[ \left( \sum_{t=2}^\infty \delta^{t-1} \right) F(x,\theta_2) - \delta C(x-\widehat x) \right] \subseteq (\{\widehat x\} + E) \setminus \{ \widehat x + \varepsilon' \} .
\end{multline*}
The final inclusion holds because $\beta$ is large enough that $\widehat x+\varepsilon$ is strictly better than every $x \in X \setminus ( \{\widehat x\} + E )$ and because $\widehat x+\varepsilon$ is strictly better than $\widehat x + \varepsilon'$.

Let $C_1 = \underline C$, $C_2 = C$, and $C_t = \bar C$ and $\theta_t = \theta_2$ for every $t \geq 3$. Then the long-lived agent's problem admits a solution, and every solution $(x_t)_{t=1}^\infty$ has $x_1 = \widehat x$ and $x_2 \subseteq (\{\widehat x\} + E) \setminus \{ \widehat x + \varepsilon' \}$. Here $x_1 = \widehat x$ holds since $\gamma$ is large enough that it is optimal to choose myopically in period $t=1$. Hence \ref{item:lechatelier-dynamic_charac:mcs} fails, as $\underline \theta \leq \theta_t \leq \bar \theta$ for every $t \in \N$ and $x_2 \notin \{ x \in X : \underline x \leq x \leq \bar x \}$ by \eqref{emptyintersection}. \qed

%%% bibliography
\singlespacing
\printbibliography[heading=bibintoc]

\end{document}